\newcolumntype{C}[1]{>{\centering\let\newline\\\arraybackslash\hspace{0pt}}m{#1}} 
\begin{document}
	\title{The Explicit Coding Rate Region of \\ Symmetric Multilevel Diversity Coding}   
	\author{Tao Guo and Raymond W. Yeung\thanks{This paper was presented in part at 2018 IEEE International Symposium on Information Theory (ISIT) \cite{guo-yeung-isit2018}.
			
	Tao Guo was with the Institute of Network Coding and the Department of Information Engineering, The Chinese University of Hong Kong, Hong Kong SAR, China. He is now with the Department of Electrical and Computer Engineering, Texas A\&M University, College Station, TX, USA. (\mbox{e-mail:} \href{mailto:guotao@tamu.edu}{guotao@tamu.edu})
			
	Raymond~Yeung is with the Institute of Network Coding and the Department of Information Engineering, The Chinese University of Hong Kong, Hong Kong SAR, China. He also holds an adjunct position at the School of Science and Engineering, The Chinese University of Hong Kong, Shenzhen, China. (\mbox{e-mail:} \href{mailto:whyeung@ie.cuhk.edu.hk}{whyeung@ie.cuhk.edu.hk})
	
	Their work was funded in part by the University Grant Committee of the Hong Kong SAR, China (Project No.\ AoE/E-02/08), the National Natural Science Foundation of China (Grant No. 61471215), and the Science, Technology and Innovation Commission of Shenzhen Municipality (Project No. JSGG20160301170514984).}
	}
	\maketitle
	\newcommand{\reffig}[1]{Fig. \ref{#1}}
	\newcommand{\cA}{\mathcal{A}}
	\newcommand{\cB}{\mathcal{B}}
	\newcommand{\cD}{\mathcal{D}}
	\newcommand{\cF}{\mathcal{F}}
	\newcommand{\cG}{\mathcal{G}}
	\newcommand{\cI}{\mathcal{I}}
	\newcommand{\cJ}{\mathcal{J}}
	\newcommand{\cK}{\mathcal{K}}
	\newcommand{\cL}{\mathcal{L}}
	\newcommand{\cN}{\mathcal{N}}
	\newcommand{\cO}{\mathcal{O}}
	\newcommand{\cR}{\mathcal{R}}
	\newcommand{\cT}{\mathcal{T}}
	\newcommand{\cU}{\mathcal{U}}
	\newcommand{\cX}{\mathcal{X}}
	
	\theoremstyle{plain}
	\newtheorem{theorem}{Theorem}
	\newtheorem{lemma}{Lemma}
	\newtheorem{corollary}{Corollary}[lemma]
	\newtheorem{proposition}{Proposition}
	\newtheorem{conjecture}{Conjecture}
	\newtheorem{claim}{Claim}
	
	\theoremstyle{remark}
	\newtheorem{remark}{Remark}
	
	\theoremstyle{definition}
	\newtheorem{definition}{Definition}
	\newtheorem{example}{Example}
	
	
	\begin{abstract}
		It is well known that {\em superposition coding}, namely separately encoding the independent sources,
		is optimal for symmetric multilevel diversity coding (SMDC) (Yeung-Zhang 1999) for any $L\geq 2$, where $L$ is the number of levels of the coding system.	
		However, the characterization of the coding rate region therein involves uncountably many linear inequalities 
		and the constant term (i.e., the lower bound) in each inequality is given in terms of the solution of a linear optimization problem. 
		Thus this implicit characterization of the coding rate region does not enable the determination of the achievability of a given rate tuple. 
		In principle, the achievability of a given rate tuple can be determined by direct computation, but the complexity is prohibitive even for $L=5$.
		In this paper, for any fixed $L$, we obtain in closed form a finite set of linear inequalities for characterizing the coding rate region.
		We further show by the symmetry of the problem that only a much smaller subset of this finite set of inequalities needs to be verified in determining the achievability of a given rate tuple. Yet, the cardinality of this smaller set grows at least exponentially fast with $L$. We also present a subset entropy inequality, which together with our explicit characterization of the coding rate region, is sufficient for proving the optimality of superposition coding. 
	\end{abstract}
	\begin{IEEEkeywords}
		Symmetric multilevel diversity coding, superposition coding, network coding, closed-form, distributed data storage, robust network communication.
	\end{IEEEkeywords}
	\section{Introduction}  \label{section-introduction}
	{\it Multilevel diversity coding} was introduced by Yeung \cite{yeung95}. 
	In a multilevel diversity coding system, an information source is encoded by a number of encoders. 
	There are a set of decoders, which are partitioned into multiple {\it levels}. 
	The reconstructions of the source by decoders within the same level are identical.
	
	Here, we confine our discussion to multilevel diversity coding systems with 
	symmetrical connectivity between the encoders and decoders, referred to as {\it symmetrical multilevel diversity coding} (SMDC) \cite{roche-yeung-hau-1995,yeung97,yeung99}. 
	The SMDC system finds applications in distributed data storage \cite{roche92,roche88isit}, 
	secret sharing \cite{Shamir79,blakley79-tss,blakley85-ramp-secret-sharing}, 
	and robust network communication \cite{robust-1,robust-2}. 
	It is a special case of multi-source network coding \cite{Yeung-Zhang-satellite-99,yeung00-NC,songlihua-03}. 
	This problem can also be regarded as a lossless counterpart of the multiple descriptions problem \cite{gamal82,zhang-berger87,VKG03,gamal11book}. 
	The SMDC coding strategy in turn is used for approximating the multiple descriptions rate region in \cite{tianchao09-app-MDregion-SMDC,guo-yeung2016}. 
	
	In the SMDC problem, there are $L$ ($L\geq 2$) independent discrete memoryless sources $\{ X_l(t): t = 1, 2, \cdots \}$, $l = 1, 2, \cdots, L$, where for each $l$, $X_l(t)$ are independent and identically distributed copies of a generic random variable $X_l$. The importance of the sources is in the order $X_1(t)>X_2(t)>\cdots>X_L(t)$. The sources are encoded by $L$ encoders. There are totally $2^L-1$ decoders, each of which has access to a distinct subset of the encoders. A decoder which can access any $\alpha$ encoders, called a Level $\alpha$ decoder, is required to reconstruct the first $\alpha$ sources. Such a system is called a symmetric $L$-level diversity coding system.
	The system is symmetric in the sense that the problem is unchanged by permuting the $L$ encoders, which is evident from the reconstruction requirements of the decoders.
	
	The SMDC problem was treated for $L=3$ in \cite {yeung97} and in full generality by Yeung and Zhang \cite{yeung99},
	where a coding method called {\em superposition coding} (to be formally defined in Section~II.B) was proved to be optimal.  In this method, 
	the independent sources $\{ X_l(t) \}$, $l = 1, 2, \cdots, L$ are encoded separately.
	Albanese {\it et al.} studied the {\it priority encoding transmission} (PET) problem in \cite{PET-96}, which is almost the same as SMDC. 
	In \cite{PET-96}, they proposed a coding scheme using the same idea as superposition coding 
	and further obtained a sum-rate lower bound which is also given in \cite{yeung97}. 
	The problem has subsequently been generalized in different directions. 
	The {\it secure} communication setting was considered by Balasubramanian {\it et al.} \cite{liutie13-sSMDC} and Jiang {\it et al.} \cite{liutie14-s-all}.
	In \cite{liutie14-s-all}, they also extended the original SMDC setting by introducing an {\it all-access} encoder which is accessible by all the decoders. In both of the above settings, superposition coding is shown to be optimal.
	Xiao {\it et~al.}~\cite{xiaozhiqing-D-MDC15} studied the problem of {\it distributed} multilevel diversity coding where each source is decomposed into $L$ components, each of which is accessed by one distinct encoder. 
	Tian and Liu  \cite{tianchao-liutie-SMDC-Re16} considered the problem with {\it regeneration}, where the storage versus repair-bandwidth tradeoff was investigated. 
	Mohajer {\it et al.} \cite{A-MDC-tianchao} considered the {\it asymmetric} multilevel diversity coding problem and proved that superposition coding is in general suboptimal. 
	Li {\it et~al.}~\cite{Congduan-16} studied the multilevel diversity coding problem with at most 3 sources and 4 encoders in a systematic way 
	and obtained the exact rate region of each of the over 7,000 instances with the aid of computation. 
	In the current paper, we focus on some fundamental issues pertaining to the original SMDC problem discussed in \cite{yeung97,yeung99}.
	
	It was proved in \cite{yeung97} that superposition coding is optimal for $L=3$, and the corresponding coding rate region, 
	referred to as the superposition coding rate region, can be explicitly characterized by 10 linear
	inequalities in the coding rates of the 3 encoders. Thus, the achievability of any given rate triple can be determined by 
	verifying these 10 inequalities. 
	
	However, the optimality proof in \cite{yeung97} is not readily generalizable to a general $L$.  
	Here is an outline of the proof in~\cite{yeung97}. The superposition coding rate region
	is first characterized by the aforementioned 10 inequalities.  This involves the determination of the 
	extreme points of the region.  Then the necessity of 
	these 10 inequalities are established by means of conventional techniques for proving converse
	coding theorems.  
	The difficulty for generalizing the proof to a general $L$ is two-fold:
	\begin{enumerate}
		\item
		It is observed through computation that both the number of linear inequalities 
		needed for characterizing the superposition coding rate region 
		and the number of extreme points of this region grow with $L$.
		As such, it is impossible to 
		determine all of them for a general $L$.
		\item
		For a fixed $L$, once the superposition coding rate region is characterized by a finite set of linear inequalities,
		their necessity needs to be proved.
		With conventional techniques, this needs to be done for each inequality in a way that depends on the 
		coefficients of coding rates.
		It is observed through computation that the number of these inequalities
		grows with $L$.  Therefore,  for a general $L$,
		it is not possible to prove the necessity of all of these inequalities.
	\end{enumerate}
	For a fixed $L$, the extreme points of the superposition coding rate region and 
	the set of linear inequalities characterizing the region can in principle be found by computation.
	However, the complexity grows very quickly with $L$ and becomes prohibitive even for $L=5$.
	On a notebook computer, by using the Fourier-Motzkin elimination algorithm \cite{FME-book}, we were able to 
	compute all the linear inequalities needed for characterizing the superposition coding rate region for $L=4$ in less than 2 minutes.
	However, the computation involved for $L=5$ is already unmanageable.
	
	
	In \cite{yeung99}, the optimality of superposition coding was established for a general $L$ by means of 
	a highly sophisticated method that does not involve any explicit characterization of the coding rate region.
	Instead of a fixed $L$, the problem is tackled for a general $L$.  As $L$ is not fixed, the number of 
	linear inequalities needed for the characterization of the superposition coding rate region may be unbounded.  
	To get around the problem, the coding rate region is characterized
	by an uncountable collection of linear inequalities, where for each inequality, 
	the coefficients associated with the rates are arbitrary nonnegative real numbers with at least one of them
	being nonzero.  The constant terms (i.e., the lower bounds) in these inequalities are given implicitly 
	in terms of the solution of a {\em common} linear optimization problem with the coefficients associated with the rates
	as parameters.  In other words, although the coding rate region 
	is characterized by uncountably many linear inequalities, they have a common form and the necessity of 
	these inequalities can be established in a unified manner.
	
	Although the optimality of superposition coding for a general $L$ has been established in \cite{yeung99},
	this result does not yield an explicit characterization of the coding rate region for any fixed $L$.
	In particular, it does not enable the determination of the achievability of a given rate tuple, even for a fixed $L$,
	for the following two reasons.
	First, the characterization of the coding rate region in \cite{yeung99} involves uncountably many inequalities.
	Second, each inequality in the characterization is implicit, and can be made explicit only by solving a linear
	optimization problem.
	
	In the present paper, we develop fundamental results pertaining to SMDC.
	Our main contributions are summarized as follows:
	\begin{enumerate}
		\item
		We obtain an explicit characterization of the coding rate region for a general $L$. This is done by first solving in closed form 
		the linear optimization problem in \cite{yeung99} that gives an implicit characterization of the coding rate region.
		Then among all the uncountably many inequalities involved in 
		characterizing the coding rate region, we identify a finite subset that is sufficient for characterizing the coding rate region.
		It is further proved that there is no redundancy in this finite set of inequalities.
		Thus for a fixed $L$, the achievability of any given rate tuple can be determined.
		\item
		By taking advantage of the symmetry of the problem, we show that in determining the achievability of 
		a given rate tuple, it suffices to verify a much smaller subset of the set of inequalities identified in 1). Yet, the cardinality of this smaller set of inequalities grows at least exponentially fast
		with $L$. This reveals the inherent complexity of the problem.
		\item
		A subset entropy inequality, which plays a key role in the converse proof in \cite{yeung99}, 
		requires a painstaking and extremely technical proof.  
		We present a weaker version of this subset entropy inequality whose proof is considerably simpler. With our explicit characterization of the coding rate region, this weaker version of the subset entropy inequality is sufficient for proving the optimality of superposition coding.
	\end{enumerate}
	
	The rest of the paper is organized as follows. We first formulate the problem and state some existing results in \Cref{section-formulation}. In \Cref{section-coefficients}, we present a closed-form solution of the linear optimization problem in \cite{yeung99} and establish some basic properties of the solution. In \Cref{section-hyperplane}, we identify a finite set of inequalities that characterizes the superposition coding rate region and show that this set contains no redundancy. In \Cref{section-check-achievability}, we further identify a subset of inequalities we need to verify in determining the achievability of a given rate tuple. We also provide a lower bound and an upper bound on the cardinality of this set. In \Cref{section-subset-inequality}, we present a weaker version of the subset entropy inequality in \cite{yeung99}. We conclude the paper in \Cref{section-conclusion}. Some essential proofs can be found in the appendices.
	
	\section{Problem Formulation and Existing Results}  \label{section-formulation}
	\subsection{Problem Formulation}
	An $L$-level SMDC system, $L\geq2$,  is depicted in \reffig{fig-SMDC-L}.
	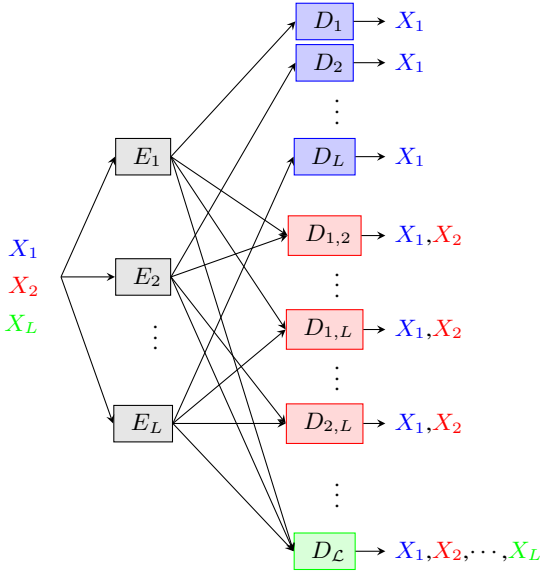
\begin{figure}[!ht]
		\centering
		\begin{tikzpicture}
		\tikzstyle{ann} = [draw=none,fill=none,right] 
		\tikzstyle{every node}=[font=\small]
		\node[left](source1) at (-3.3,0.5) {\color{blue}$X_1$};
		\node[left](source2) at (-3.3,0) {\color{red}$X_2$};
		\node[left](source3) at (-3.3,-0.5) {\color{green}$X_L$};
		
		\matrix(a)[row sep={0.05cm,between borders},column sep=1.5cm] {
			&[-0.8cm,between borders]& \node(decoder1)[draw=blue,rectangle,fill=blue!20] {~$D_1$~};&[-1.2cm,between borders] \node(recover1)[ann]{\color{blue}$X_1$};\\
			&& \node(decoder2)[draw=blue,rectangle,fill=blue!20] {~$D_2$~};& \node(recover2)[ann]{\color{blue}$X_1$};\\[0.5cm,between origins]
			&& \node[ann] {$\vdots$};\\[0.65cm,between origins]
			&\node(encoder1)[draw=black,rectangle,fill=gray!20] {~$E_1$~};& \node(decoderL)[draw=blue,rectangle,fill=blue!20] {~$D_L$~};& \node(recoverL)[ann]{\color{blue}$X_1$};\\[1.0cm,between origins]
			&& \node(decoder12)[draw=red,rectangle,fill=red!15] {~$D_{1,2}$~};& \node(recover12)[ann]{{\color{blue}$X_1$},{\color{red}$X_2$}};\\[0.5cm,between origins]
			\node(source)[draw=none,fill=none] {}; & \node(encoder2)[draw=black,rectangle,fill=gray!20] {~$E_2$~}; & \node[ann] {$\vdots$};\\[0.65cm,between origins]
			& \node[ann] {$\vdots$};& \node(decoder1L)[draw=red,rectangle,fill=red!15] {~$D_{1,L}$~};& \node(recover1L)[ann]{{\color{blue}$X_1$},{\color{red}$X_2$}};\\[0.5cm,between origins]
			&& \node[ann] {$\vdots$};\\[0.65cm,between origins]
			&\node(encoderL)[draw=black,rectangle,fill=gray!20] {~$E_L$~};& \node(decoder2L)[draw=red,rectangle,fill=red!15] {~$D_{2,L}$~};& \node(recover2L)[ann]{{\color{blue}$X_1$},{\color{red}$X_2$}};\\[0.8cm,between origins]
			&& \node[ann] {$\vdots$};\\[0.8cm,between origins]
			&& \node(decodercL)[draw=green,rectangle,fill=green!15] {~$D_{\cL}$~};& \node(recovercL)[ann]{{\color{blue}$X_1$},{\color{red}$X_2$},$\cdots$,{\color{green}$X_L$}};\\
		};
		\path [-stealth] (source.east) edge node [near end,above] {} (encoder1.west)
		(source.east) edge node [near end,sloped,above] {} (encoder2)
		(source.east) edge node [near end,above] {} (encoderL.west)
		(encoder1.east) edge node {}(decoder1.west)
		(encoder1.east) edge node {}(decoder12.west)
		(encoder1.east) edge node {}(decoder1L.west)
		(encoder1.east) edge node {}(decodercL.west)
		(encoder2.east) edge node {}(decoder2.west)
		(encoder2.east) edge node {}(decoder12.west)
		(encoder2.east) edge node {}(decoder2L.west)
		(encoder2.east) edge node {}(decodercL.west)
		(encoderL.east) edge node {}(decoderL.west)
		(encoderL.east) edge node {}(decoder1L.west)
		(encoderL.east) edge node {}(decoder2L.west)
		(encoderL.east) edge node {}(decodercL.west)
		(decoder1.east) edge node {}(recover1.west)
		(decoder2.east) edge node {}(recover2.west)
		(decoderL.east) edge node {}(recoverL.west)
		(decoder12.east) edge node {}(recover12.west)
		(decoder1L.east) edge node {}(recover1L.west)
		(decoder2L.east) edge node {}(recover2L.west)
		(decodercL.east) edge node {}(recovercL.west);
		\end{tikzpicture}\centering
		\caption{The symmetrical multilevel diversity coding system.}\label{fig-SMDC-L}
	\end{figure}
	The problem is defined as follows. Let $\cL=\{1,2,\cdots,L\}$. 
	Let $t$ be the time index and $\left\{\big(X_1(t), X_2(t),\cdots, X_L(t)\big): t=1,2,\cdots\right\}$ be 
	a collection of $L$ independent discrete memoryless information sources with an $L$-tuple of generic random variables $(X_1$, $X_2$, $\cdots,X_L)$ 
	taking values in $\cX_1\times\cX_2\times\cdots\times\cX_L$, where $\cX_i,i\in\cL$ are finite alphabets. 
	There are $L$ encoders, indexed by $\cL$, each of which can access all the $L$ information sources. 
	There are also $2^L-1$ decoders. For each $\cU\subseteq\cL$ such that $\cU\neq \emptyset$, 
	Decoder-$\cU$ can access the subset of encoders indexed by $\cU$. 
	Without loss of generality, assume the elements in $\cU$ are in an ascending order. 
	For $1\leq \alpha\leq L$ and $\cU$ such that $|\cU|=\alpha$, Decoder-$\cU$ can reconstruct 
	the first $\alpha$ sources $\{X_1(t), X_2(t), \cdots, X_{\alpha}(t)\}$ perfectly asymptotically, which will be defined later. 
	
	An $(n,M_1,M_2,\cdots,M_L)$ code is defined by the encoding functions
	\begin{equation*}
	E_l:\prod_{i=1}^L\cX_i^n\rightarrow\{1,2,\cdots,M_l\}, \text{ for }l\in\cL 
	\end{equation*}
	and decoding functions
	\begin{equation*}
	D_{\cU}:\prod_{l\in\cU}\{1,2,\cdots,M_l\}\rightarrow\prod_{i=1}^{|\cU|}\cX_i^n,\text{ for }\cU\subseteq\cL\text{ and }\cU\neq\emptyset. 
	\end{equation*}
	For $1\leq \alpha\leq L$, let $\bm{X}_{\alpha}=(X_{\alpha}(1),X_{\alpha}(2),\cdots,X_{\alpha}(n))$. Let $W_l=E_l(\bm{X}_1,\bm{X}_2,\cdots,\bm{X}_L)$ be the output of Encoder-$l$
	 and $W_{\cU}=(W_i:i\in\cU)$ for $\cU\subseteq\cL$.\footnote{Here $E_l(\bm{X}_1,\bm{X}_2,\cdots,\bm{X}_L)$ is a function of random vectors and hence $W_l$ is a random variable.  The reader should not confuse $E_l$ with the expectation of a random variable.} A nonnegative rate tuple $(R_1,R_2,\cdots,R_L)$ is \textit{achievable} if for any $\epsilon>0$, there exists for sufficiently large $n$ an $(n,M_1,M_2,\cdots,M_L)$ code such that
	\begin{equation*}
	\frac{1}{n}\log M_l\leq R_l+\epsilon,\forall~ l\in\cL, 
	\end{equation*}
	and
	\begin{equation*}
	\Pr\{D_{\cU}(W_{\cU})\neq (\bm{X}_1,\bm{X}_2,\cdots,\bm{X}_{\alpha})\}\leq \epsilon,   \label{recover constraint}
	\end{equation*}
	for all $\alpha=1,2,\cdots,L$ and $\cU\subseteq\cL$ such that $|\cU|=\alpha$. The achievable rate region $\cR$ is defined as the collection of all achievable rate tuples.
	
	\subsection{Existing Results}
	We adopt the terminologies and notations in \cite{yeung99}. Let $\cR_{\text{sup}}$ be the rate region induced by superposition coding. Then $\cR_{\text{sup}}$ is the set of nonnegative rate tuples $\bm{R}=(R_1,R_2,\cdots,R_L)$ such that 
	\begin{eqnarray}
	R_l=\sum_{\alpha=1}^{L}r_l^{\alpha}, \text{ for }l\in \cL  \label{def-R-sup1}
	\end{eqnarray}
	for some $r_l^{\alpha}\geq 0,~1\leq \alpha\leq L$, satisfying
	\begin{equation}
	\sum_{l\in\cU}r_l^{\alpha}\geq H(X_{\alpha}),\text{ for all }\cU\subseteq\cL \text{ and }|\cU|=\alpha.  \label{def-R-sup2}
	\end{equation}
	For an elaborative discussion on superposition coding for the 3-level SMDC system, we refer the reader to \cite{yeung97}.
	
	For a fixed $L$, based on \eqref{def-R-sup1} and \eqref{def-R-sup2}, one can apply the Fourier-Motzkin algorithm to eliminate $r_l^{\alpha}$ for $l,\alpha\in\cL$. The output is a set of linear inequalities involving $R_l,l\in\cL$ that gives an explicit characterization of $\cR_{\sup}$. However, as mentioned in \Cref{section-introduction}, the computation involved for $L\geq 5$ is unmanageable.
	
	Let $\bm{\lambda}=(\lambda_1,\lambda_2,\cdots,\lambda_L)$ and
	\begin{equation}
	\mathbb{R}_+^L=\{\bm{\lambda}:~\bm{\lambda}\neq\bm{0} \text{ and } \lambda_i\in\mathbb{R},\lambda_i\geq 0\text{ for }i\in\cL\}.   \label{definition-R-set}
	\end{equation}
	Let $\Omega_L^{\alpha}=\left\{\bm{v}\in\{0,1\}^L:|\bm{v}|=\alpha\right\}$, where $|\bm{v}|$ is the Hamming weight of a vector $\bm{v}=(v_1,v_2,\cdots,v_L)$. 
	Note that there is a one-to-one correspondence between a vector $\bm{v}\in\{0,1\}^L$ and Decoder-$\cU$, where $\cU=\{i:v_i=1\}$. 
	For any $\bm{v}\in\Omega_L^{\alpha}$, let $c_\alpha(\bm{v})$ be any nonnegative real number. 
	For any $\bm{\lambda}\in\mathbb{R}_+^L$ and $\alpha\in\cL$, let $f_{\alpha}(\bm{\lambda})$ be the optimal solution to the following optimization problem:
	\begin{eqnarray}
	f_{\alpha}(\bm{\lambda})\triangleq&\max&\sum_{\bm{v}\in\Omega_L^{\alpha}}c_{\alpha}(\bm{v})  \label{optimization-1} \\
	&\text{s.t.}&\sum_{\bm{v}\in\Omega_L^{\alpha}}c_{\alpha}(\bm{v}) \bm{v}\leq \bm{\lambda}   \label{optimization-2}\\
	&&c_{\alpha}(\bm{v})\geq 0,\forall \bm{v}\in\Omega_L^{\alpha}.  \label{optimization-3}
	\end{eqnarray}
	Note that the functions $f_{\alpha}(\cdot)$ and $c_{\alpha}(\cdot)$ above depend on $L$, but for simplicity we omit this dependency in the notations. Thus, if the length of $\bm{\lambda}$ is given, then $f_{\alpha}(\bm{\lambda})$ can be defined accordingly. A set $\left\{c_{\alpha}(\bm{v}): \bm{v}\in\Omega_L^{\alpha}\right\}$ is called an $\alpha$-\textit{resolution} for $\bm{\lambda}$ if \eqref{optimization-2} and \eqref{optimization-3} are satisfied and it will be abbreviated as $\{c_{\alpha}(\bm{v})\}$ if there is no ambiguity. Furthermore, an $\alpha$-resolution is called \textit{optimal} if it achieves the optimal value $f_{\alpha}(\bm{\lambda})$. 
	
	\begin{remark}
		Here is an intuitive explanation of $f_\alpha(\bm{\lambda})$: Consider a set of items from $L$ different types indexed by $\cL$, 
		where the number of items of type $i$ ($i\in\cL$) is $\lambda_i$. 
		An $\alpha$-type group is defined as a group of $\alpha$ items of different types. 
		The goal is to cluster the items into $\alpha$-type groups so that the total number of such groups is maximized. 
		This maximum is defined as $f_\alpha(\bm{\lambda})$. 
	\end{remark}
	
	Let $\cR_{\text{h}}$ be the collection of nonnegative rate tuples $\bm{R}$ such that
	\begin{equation}
	\sum_{l=1}^L\lambda_lR_l\geq \sum_{\alpha=1}^{L}f_{\alpha}(\bm{\lambda})H(X_{\alpha}), \text{ for all } \bm{\lambda}\in\mathbb{R}_+^L.  \label{rate-contraints}
	\end{equation}
	It was proved in \cite{yeung99} that the superposition region $\cR_{\text{sup}}$ can be alternatively characterized by $\cR_{\text{h}}$. This means that in addition to being the optimal value of the optimization problem in \eqref{optimization-1}, for every fixed $\bm{\lambda}\in\mathbb{R}_+^L$, $f_{\alpha}(\bm{\lambda})$ also gives a tightest possible linear outer bound on $\cR_{\sup}$ via \eqref{rate-contraints}. It was further proved in \cite{yeung99} that $\cR_{\text{h}}$ is an outer bound on $\cR$. Then
	\begin{equation*}
	\cR_{\sup}\subseteq\cR\subseteq\cR_{\text{h}}
	\end{equation*}
	which implies
	\begin{equation}
	\cR=\cR_{\text{h}}=\cR_{\sup},  \label{Rsup=Rh}
	\end{equation}
	i.e., superposition coding is optimal.
	
	The following lemma is a direct consequence of Lemma 4 and 7 in \cite{yeung99}. It will be used in the proof of our main result in the next section.
	\begin{lemma}\label{lemma-perfect-resolution}
		Assume $\lambda_1\geq \lambda_2\geq \cdots\geq \lambda_L$. For $\alpha\geq 2$, if $\lambda_1\leq \frac{\lambda_2+\lambda_3+\cdots+\lambda_L}{\alpha-1}$, then $f_{\alpha}(\bm{\lambda})=\frac{1}{\alpha}\sum_{i=1}^L\lambda_i.$
	\end{lemma}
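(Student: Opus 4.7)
The plan is to prove the lemma by directly constructing an optimal $\alpha$-resolution, bypassing Lemmas~4 and 7 of \cite{yeung99}. The upper bound is immediate: summing constraint \eqref{optimization-2} across all $i \in \cL$ yields $\alpha\sum_{\bm{v}} c_\alpha(\bm{v}) = \sum_{i \in \cL} \sum_{\bm{v}: v_i = 1} c_\alpha(\bm{v}) \leq \sum_i \lambda_i$, so every $\alpha$-resolution satisfies $\sum_{\bm{v}} c_\alpha(\bm{v}) \leq \frac{1}{\alpha}\sum_i \lambda_i$. The work lies entirely in exhibiting an $\alpha$-resolution that attains this bound.

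First I would recast the hypothesis in a more symmetric form. Adding $\lambda_1$ to both sides of $(\alpha-1)\lambda_1 \leq \lambda_2 + \cdots + \lambda_L$ gives $\alpha\lambda_1 \leq S$, where $S \triangleq \sum_{i \in \cL}\lambda_i$. Since $\lambda_1 = \max_i \lambda_i$, this is equivalent to $\lambda_i \leq S/\alpha$ for every $i \in \cL$, i.e., ``no type is too large.'' This will be the only place the hypothesis enters.

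Next I would build the $\alpha$-resolution by a rotation on a circle of circumference $S$. Arrange the types consecutively around the circle, so that type $i$ occupies an arc $A_i$ of length $\lambda_i$. For each offset $\theta \in [0, S/\alpha)$, form the set of $\alpha$ equispaced points $P(\theta) = \{(\theta + jS/\alpha) \bmod S : j = 0, 1, \ldots, \alpha - 1\}$. Because each arc has length at most $S/\alpha$ while consecutive points of $P(\theta)$ are spaced exactly $S/\alpha$ apart, each $A_i$ contains at most one point of $P(\theta)$; hence $P(\theta)$ hits $\alpha$ distinct types, represented by an $\alpha$-subset $\bm{v}(\theta) \in \Omega_L^\alpha$. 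I would then set $c_\alpha(\bm{v}) \triangleq |\{\theta \in [0, S/\alpha) : \bm{v}(\theta) = \bm{v}\}|$, the Lebesgue measure of the offsets producing $\bm{v}$.

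It remains to verify feasibility. By construction the total weight is $\sum_{\bm{v}} c_\alpha(\bm{v}) = S/\alpha$. For each $i \in \cL$, the sum $\sum_{\bm{v}: v_i = 1} c_\alpha(\bm{v})$ equals the Lebesgue measure of those $\theta$'s for which some point of $P(\theta)$ lies in $A_i$. The map $(\theta, j) \mapsto (\theta + jS/\alpha) \bmod S$ is a bijection from $[0, S/\alpha) \times \{0,\ldots,\alpha-1\}$ to the whole circle, and by the ``at most one'' property its preimage of $A_i$ has distinct first coordinates, so the projected measure equals $|A_i| = \lambda_i$. Hence \eqref{optimization-2} holds with equality and $\sum_{\bm{v}} c_\alpha(\bm{v}) = S/\alpha$, matching the upper bound. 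The main obstacle I anticipate is this final piece of measure-theoretic bookkeeping, in particular handling the case when $A_i$ straddles a boundary $jS/\alpha$ and its preimage is split between two values of $j$; a short case split confirms that the two pieces sum to $\lambda_i$.
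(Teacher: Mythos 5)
Your proof is correct, and it takes a genuinely different route from the paper. The paper does not in fact prove this lemma at all; it dispatches it in one line as ``a direct consequence of Lemma 4 and 7 in [yeung99],'' so its argument lives entirely in the Yeung--Zhang reference, which builds a perfect $\alpha$-resolution by an inductive peel-off argument. Your ``carousel'' construction --- arrange the types as arcs of lengths $\lambda_i$ on a circle of circumference $S$, take $\alpha$ equispaced needles, sweep the offset $\theta$ over a fundamental domain of length $S/\alpha$, and let $c_\alpha(\bm{v})$ be the dwell time on each configuration --- is self-contained, avoids the prior lemmas entirely, and makes the role of the hypothesis transparent: $\lambda_1\leq S/\alpha$ is exactly what guarantees each arc is swept by at most one needle per offset, so that the map $(\theta,j)\mapsto\theta+jS/\alpha$ is a measure-preserving bijection whose fiberwise injectivity over each $A_i$ converts $|A_i|=\lambda_i$ into the tight constraint $\sum_{\bm{v}:v_i=1}c_\alpha(\bm{v})=\lambda_i$. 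The ``straddling'' concern you flag at the end is handled automatically by the disjointness of the fibers $E_j=\{\theta:(\theta,j)\in\Phi^{-1}(A_i)\}$, which is just the at-most-one-needle property restated, so no separate case split is needed. What you give up relative to citing [yeung99] is access to whatever auxiliary structure their Lemmas 4 and 7 establish (the paper later reuses Lemma 9 and Proposition 1 of [yeung99] in Section VI); what you gain is a short, geometric, and fully self-contained proof of the one fact actually used.
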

	
	\section{Optimal $\alpha$-resolution}  \label{section-coefficients}
	For any $\bm{\lambda}\in\mathbb{R}^L_+$ and any permutation $\omega$ on $\{1,2,\cdots,L\}$, with an abuse of notation, we denote $\left(\lambda_{\omega(1)}\right.,$ $\lambda_{\omega(2)},\cdots,\left.\lambda_{\omega(L)}\right)$ by $\omega(\bm{\lambda})$. For each $\alpha\in\cL$, due to the symmetry of the system, it is intuitive that the values of $f_{\alpha}(\omega(\bm{\lambda}))$ are the same for all $\omega$. This important property of $f_{\alpha}(\bm{\lambda})$ is formally proved in the following lemma.
	\begin{lemma}\label{lemma-indep-order}
		$f_{\alpha}\big(\omega(\bm{\lambda})\big)=f_{\alpha}(\bm{\lambda})$ for any $\alpha\in\cL$.
	\end{lemma}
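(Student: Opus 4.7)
The plan is to prove this by exhibiting an explicit bijection between feasible $\alpha$-resolutions for $\bm{\lambda}$ and feasible $\alpha$-resolutions for $\omega(\bm{\lambda})$ that preserves the objective value. Since the optimization in \eqref{optimization-1}--\eqref{optimization-3} is a maximum, matching feasible points with equal objectives on both sides forces the two optima to coincide.

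Concretely, given any $\alpha$-resolution $\{c_\alpha(\bm{v}):\bm{v}\in\Omega_L^\alpha\}$ for $\bm{\lambda}$, I would define a map $\sigma:\Omega_L^\alpha\to\Omega_L^\alpha$ by $(\sigma(\bm{v}))_i = v_{\omega(i)}$; this is a bijection because permuting the coordinates preserves the Hamming weight, so $\sigma$ carries $\Omega_L^\alpha$ into itself. Now set
\begin{equation*}
c'_\alpha(\bm{u}) \triangleq c_\alpha(\sigma^{-1}(\bm{u})), \qquad \bm{u}\in\Omega_L^\alpha.
\end{equation*}
The key computation is to verify that $\{c'_\alpha(\bm{u})\}$ is a feasible $\alpha$-resolution for $\omega(\bm{\lambda})$. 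The non-negativity constraint \eqref{optimization-3} is immediate. For the coordinate constraint \eqref{optimization-2}, I would change variables $\bm{v}=\sigma^{-1}(\bm{u})$ and note that $u_i=1$ if and only if $v_{\omega(i)}=1$, so
\begin{equation*}
\sum_{\bm{u}:u_i=1} c'_\alpha(\bm{u}) \;=\; \sum_{\bm{v}:v_{\omega(i)}=1} c_\alpha(\bm{v}) \;\leq\; \lambda_{\omega(i)},
\end{equation*}
which is exactly the $i$-th coordinate of $\omega(\bm{\lambda})$. Then, because $\sigma^{-1}$ is a bijection on $\Omega_L^\alpha$, the objective is preserved:
\begin{equation*}
\sum_{\bm{u}\in\Omega_L^\alpha} c'_\alpha(\bm{u}) \;=\; \sum_{\bm{v}\in\Omega_L^\alpha} c_\alpha(\bm{v}).
\end{equation*}
Taking $\{c_\alpha(\bm{v})\}$ to be an optimal resolution for $\bm{\lambda}$ yields $f_\alpha(\omega(\bm{\lambda})) \geq f_\alpha(\bm{\lambda})$.

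The reverse inequality follows by applying exactly the same construction to an optimal resolution for $\omega(\bm{\lambda})$ with $\omega$ replaced by $\omega^{-1}$, using the identity $\omega^{-1}(\omega(\bm{\lambda}))=\bm{\lambda}$. Combining the two inequalities gives $f_\alpha(\omega(\bm{\lambda})) = f_\alpha(\bm{\lambda})$. The only real obstacle is bookkeeping: one must keep straight which way the permutation acts on coordinates versus on vectors in $\Omega_L^\alpha$ so that the index $\omega(i)$ appearing on the constraint side matches the definition $(\omega(\bm{\lambda}))_i = \lambda_{\omega(i)}$ used in the paper. Once the permutation conventions are aligned, the rest is a one-line verification.
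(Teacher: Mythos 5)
Your proposal is correct and is essentially the same as the paper's argument: you both transport an optimal $\alpha$-resolution for $\bm{\lambda}$ to a feasible $\alpha$-resolution for $\omega(\bm{\lambda})$ via the coordinate-permutation bijection on $\Omega_L^\alpha$ (the paper writes $c'_\alpha(\omega(\bm{v})) = c_\alpha(\bm{v})$, which is your $c'_\alpha(\bm{u}) = c_\alpha(\sigma^{-1}(\bm{u}))$ after substituting $\bm{u}=\sigma(\bm{v})$), deduce $f_\alpha(\omega(\bm{\lambda}))\ge f_\alpha(\bm{\lambda})$, and close the argument by applying the same map with $\omega^{-1}$. The only difference is cosmetic: you name the bijection $\sigma$ and verify the coordinate constraint componentwise, whereas the paper verifies it in vector form.
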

	\begin{proof}
		For any $\alpha\in\cL$, let $\{c_{\alpha}(\bm{v}):\bm{v}\in\Omega_L^{\alpha}\}$ be an optimal $\alpha$-resolution for $\bm{\lambda}$. Then we have 
		\begin{eqnarray}
		\sum_{\bm{v}\in\Omega_L^{\alpha}}c_{\alpha}(\bm{v}) \bm{v}&\leq& \bm{\lambda},   \label{pf-lemma-indep-order-optimization-1}\\
		c_{\alpha}(\bm{v})&\geq& 0,\forall \bm{v}\in\Omega_L^{\alpha}, \nonumber 
		\end{eqnarray}
		and
		\begin{equation*}
		f_{\alpha}(\bm{\lambda})=\sum_{\bm{v}\in\Omega_L^{\alpha}}c_{\alpha}(\bm{v}). 
		\end{equation*}
		Let $\sum_{\bm{v}\in\Omega_L^{\alpha}}c_{\alpha}(\bm{v}) \bm{v}=\tilde{\bm{\lambda}}$. Then by \eqref{pf-lemma-indep-order-optimization-1}, we have $\tilde{\bm{\lambda}}\leq \bm{\lambda}$. For any permutation $\omega$ on $\{1,2,\cdots,L\}$, we can check that 
		\begin{equation}
		\sum_{\bm{v}\in\Omega_L^{\alpha}}c_{\alpha}(\bm{v})~\omega(\bm{v})=\omega\bigg(\sum_{\bm{v}\in\Omega_L^{\alpha}}c_{\alpha}(\bm{v})\bm{v}\bigg)=\omega(\tilde{\bm{\lambda}})\leq \omega(\bm{\lambda}).   \label{pf-lemma-indep-order-sum-w(v)}
		\end{equation}
		For any $\bm{v}\in\Omega_L^{\alpha}$, let 
		\begin{equation*}
		c_{\alpha}'\big(\omega(\bm{v})\big)=c_{\alpha}(\bm{v}).
		\end{equation*}
		It is immediate that for all $\bm{v}\in\Omega_L^{\alpha}$,
		\begin{equation}
		c_{\alpha}'\big(\omega(\bm{v})\big)\geq 0.   \label{pf-lemma-indep-order-check-c'-1}
		\end{equation}
		Since $\omega$ is a one-to-one mapping from $\Omega_L^{\alpha}$ to $\Omega_L^{\alpha}$, we have $\bm{v}\in\Omega_L^{\alpha}$ if and only if $\omega(\bm{v})\in\Omega_L^{\alpha}$ for any $\omega$. Thus,
		\begin{eqnarray}
		\sum_{\omega(\bm{v})\in\Omega_L^{\alpha}}c_{\alpha}'\big(\omega(\bm{v})\big)~\omega(\bm{v})&=&\sum_{\bm{v}\in\Omega_L^{\alpha}}c_{\alpha}'\big(\omega(\bm{v})\big)~\omega(\bm{v})  \nonumber \\
		&=&\sum_{\bm{v}\in\Omega_L^{\alpha}}c_{\alpha}(\bm{v})~\omega(\bm{v})  \nonumber \\
		&\leq& \omega\big(\bm{\lambda}\big),   \label{pf-lemma-indep-order-check-c'-2}
		\end{eqnarray}
		where the inequality follows from \eqref{pf-lemma-indep-order-sum-w(v)}. By \eqref{pf-lemma-indep-order-check-c'-1} and \eqref{pf-lemma-indep-order-check-c'-2}, we see that $\{c_{\alpha}'\big(\omega(\bm{v})\big):\bm{v}\in\Omega_L^{\alpha}\}$ is an $\alpha$- resolution for $\omega(\bm{\lambda})$. In light of the definition of $f_{\alpha}(\bm{\lambda})$ in \eqref{optimization-1}-\eqref{optimization-3}, we have 
		\begin{eqnarray}
		f_{\alpha}\big(\omega(\bm{\lambda})\big)&\geq& \sum_{\omega(\bm{v})\in\Omega_L^{\alpha}}c_{\alpha}'\big(\omega(\bm{v})\big)  \nonumber \\
		&=&\sum_{\bm{v}\in\Omega_L^{\alpha}}c_{\alpha}'\big(\omega(\bm{v})\big)  \nonumber \\ 
		&=&\sum_{\bm{v}\in\Omega_L^{\alpha}}c_{\alpha}(\bm{v})  \nonumber \\
		&=&f_{\alpha}(\bm{\lambda}), \nonumber
		\end{eqnarray}
		and so 
		\begin{equation}
		f_{\alpha}\big(\omega(\bm{\lambda})\big)\geq f_{\alpha}(\bm{\lambda}).   \label{pf-indep-order-1}
		\end{equation}
		
		Let $\omega^{-1}$ be the inverse permutation of $\omega$. By the same argument, we can obtain 
		\begin{equation}
		f_{\alpha}\big(\omega^{-1}\big(\omega(\bm{\lambda})\big)\big)\geq f_{\alpha}\big(\omega(\bm{\lambda})\big).  \label{pf-indep-order-2}
		\end{equation}
		Since $\omega^{-1}\left(\omega(\bm{\lambda})\right)=\bm{\lambda}$, we see from \eqref{pf-indep-order-1} and \eqref{pf-indep-order-2} that
		\begin{equation*}
		f_{\alpha}(\bm{\lambda})=f_{\alpha}\big(\omega(\bm{\lambda})\big)~~\text{ for all }\alpha\in\cL.
		\end{equation*}
		The lemma is proved.
	\end{proof}
	
	\medskip
	If a vector $\bm{\lambda}$ satisfies
	\begin{equation}
	\lambda_1\geq \lambda_2\geq \cdots\geq \lambda_L,  \label{assumption-lambda-1}
	\end{equation}
	we call $\bm{\lambda}$ an \textit{ordered vector}. Throughout this section, except for \Cref{lemma-concave-f}, in light of \Cref{lemma-indep-order}, we assume without loss of generality that $\bm{\lambda}$ is an ordered vector. For any $\alpha\in\cL$, it is easy to see that 
	\begin{equation}
	f_{\alpha}(\mu\bm{\lambda})=\mu f_{\alpha}(\bm{\lambda})  \label{scale-f-lambda}
	\end{equation}
	for all $\mu\in\mathbb{R}$ such that $\mu>0$. 
	In view of \eqref{rate-contraints} and \eqref{scale-f-lambda}, we will consider only $\bm{\lambda}$'s whose minimum nonzero element is equal to 1. Then there exists a $\zeta\in\cL$ such that 
	\begin{equation*}
	\lambda_1\geq \lambda_2\geq \cdots\geq \lambda_{\zeta}=1
	\end{equation*}
	and $\lambda_i=0$ for all $i=\zeta+1,\zeta+2,\cdots,L$.
	
	Fix $\bm{\lambda}$, it is easy to see that 
	\begin{equation}
	f_1(\bm{\lambda})=\sum_{i=1}^L\lambda_i,  \label{f(1)}
	\end{equation}
	and
	\begin{equation*}
	f_{\zeta}(\bm{\lambda})=1,
	\end{equation*}
	and for $\alpha\geq \zeta+1$,
	\begin{equation*}
	f_{\alpha}(\bm{\lambda})=0.
	\end{equation*}
	For other cases, determining the value of $f_{\alpha}(\bm{\lambda})$ is highly nontrivial.
	
	For $\alpha\in\cL$ and $\beta=0,1,\cdots,\alpha-1$, let
	\begin{equation}
	g_{\alpha,\bm{\lambda}}(\beta)=\frac{1}{\alpha-\beta}\sum_{i=\beta+1}^L\lambda_i.  \label{definition-g}
	\end{equation}
	Let $\beta^*_{\alpha}$ be a value of $\beta$ (not necessarily unique) that achieves the minimum $\min_{\beta\in\{0,1,\cdots,\alpha-1\}}g_{\alpha,\bm{\lambda}}(\beta)$, i.e., 
	\begin{equation}
	g_{\alpha,\bm{\lambda}}(\beta^*_{\alpha})=\min_{\beta\in\{0,1,\cdots,\alpha-1\}}g_{\alpha,\bm{\lambda}}(\beta).  \label{definition-beta*}
	\end{equation}
	The following theorem, a main result of the current paper, gives a closed-form solution for $f_{\alpha}(\bm{\lambda})$.
	
	\begin{theorem}\label{thm-opt-f}
		For any $\alpha\in\cL$, $f_{\alpha}(\bm{\lambda})=g_{\alpha,\bm{\lambda}}(\beta^*_{\alpha})$.
	\end{theorem}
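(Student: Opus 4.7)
The plan is to prove the two directions separately: first the upper bound $f_\alpha(\bm{\lambda}) \leq g_{\alpha,\bm{\lambda}}(\beta^*_\alpha)$, and then the matching lower bound via explicit construction of an $\alpha$-resolution that attains the value $g_{\alpha,\bm{\lambda}}(\beta^*_\alpha)$.

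For the upper bound, I would fix any $\beta \in \{0,1,\ldots,\alpha-1\}$ and observe that every $\bm{v}\in\Omega_L^\alpha$ satisfies $\sum_{i=\beta+1}^L v_i \geq \alpha-\beta$, since at most $\beta$ of the $\alpha$ ones of $\bm{v}$ can lie in the first $\beta$ positions. Summing the constraint \eqref{optimization-2} over the coordinates $i=\beta+1,\ldots,L$ yields $(\alpha-\beta)\sum_{\bm{v}\in\Omega_L^\alpha}c_\alpha(\bm{v}) \leq \sum_{i=\beta+1}^L \lambda_i$, so $f_\alpha(\bm{\lambda}) \leq g_{\alpha,\bm{\lambda}}(\beta)$. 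Minimizing over $\beta$ gives $f_\alpha(\bm{\lambda}) \leq g_{\alpha,\bm{\lambda}}(\beta^*_\alpha)$.

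For the lower bound, I first establish two key consequences of the minimality \eqref{definition-beta*} under the ordering \eqref{assumption-lambda-1}. Writing $g^* \triangleq g_{\alpha,\bm{\lambda}}(\beta^*_\alpha)$, the inequality $g_{\alpha,\bm{\lambda}}(\beta^*_\alpha - 1) \geq g^*$ rearranges to $\lambda_{\beta^*_\alpha} \geq g^*$, and combined with the ordering this gives $\lambda_i \geq g^*$ for all $i \leq \beta^*_\alpha$. Similarly, whenever $\beta^*_\alpha + 1 \leq \alpha-1$, the inequality $g_{\alpha,\bm{\lambda}}(\beta^*_\alpha + 1) \geq g^*$ rearranges to
\[
\lambda_{\beta^*_\alpha + 1} \leq \frac{1}{\alpha-\beta^*_\alpha-1}\sum_{i=\beta^*_\alpha+2}^L \lambda_i.
\]
This is precisely the hypothesis of \Cref{lemma-perfect-resolution} applied to the truncated ordered vector $\bm{\mu} = (\lambda_{\beta^*_\alpha+1}, \lambda_{\beta^*_\alpha+2},\ldots,\lambda_L)$ at level $\alpha - \beta^*_\alpha$. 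Hence there exists an $(\alpha-\beta^*_\alpha)$-resolution $\{\tilde c(\bm u) : \bm u \in \Omega_{L-\beta^*_\alpha}^{\alpha-\beta^*_\alpha}\}$ for $\bm\mu$ with $\sum_{\bm u}\tilde c(\bm u) = \frac{1}{\alpha-\beta^*_\alpha}\sum_{i=\beta^*_\alpha+1}^L \lambda_i = g^*$. (The boundary cases $\beta^*_\alpha = 0$ and $\beta^*_\alpha = \alpha-1$ are handled directly, the former by \Cref{lemma-perfect-resolution} itself and the latter by distributing mass $\lambda_i$ on the subset $\{1,\ldots,\alpha-1,i\}$ for each $i \geq \alpha$.)

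I then lift this to an $\alpha$-resolution for $\bm\lambda$ by prepending $\beta^*_\alpha$ ones: for each $\bm u \in \Omega_{L-\beta^*_\alpha}^{\alpha-\beta^*_\alpha}$, define $\bm v = (\underbrace{1,\ldots,1}_{\beta^*_\alpha},\bm u) \in \Omega_L^\alpha$ and set $c_\alpha(\bm v) = \tilde c(\bm u)$, with $c_\alpha(\bm v) = 0$ for all other $\bm v$. Feasibility is checked coordinate-wise: for $j > \beta^*_\alpha$ the bound $\sum_{\bm v} c_\alpha(\bm v)v_j \leq \lambda_j$ inherits from the resolution $\tilde c$, while for $j \leq \beta^*_\alpha$ we have $\sum_{\bm v} c_\alpha(\bm v) v_j = \sum_{\bm u}\tilde c(\bm u) = g^* \leq \lambda_j$ by the first consequence above. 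The total value is $g^*$, so $f_\alpha(\bm\lambda) \geq g^*$, closing the argument. The main obstacle is verifying that the minimality of $\beta^*_\alpha$ is \emph{exactly} the right hypothesis to invoke \Cref{lemma-perfect-resolution} on the truncated tail — once this algebraic identity is in place, both the upper and lower bounds fall out cleanly.
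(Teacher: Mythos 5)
Your proposal is correct and follows essentially the same route as the paper's own proof: the upper bound by summing the constraint over coordinates $\beta+1,\ldots,L$ (the paper packages this as a dot product with $\bm{e}_\beta$), and the lower bound by extracting the two algebraic consequences of minimality of $\beta^*_\alpha$, invoking \Cref{lemma-perfect-resolution} on the truncated tail $(\lambda_{\beta^*_\alpha+1},\ldots,\lambda_L)$ at level $\alpha-\beta^*_\alpha$, and lifting the resulting resolution by prepending $\beta^*_\alpha$ ones. The only cosmetic difference is that you spell out the $\beta^*_\alpha=\alpha-1$ construction explicitly, whereas the paper notes it falls out of the same lift with $f_1$.
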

	
	\begin{proof}
		Fix an $\alpha\in\cL$, and denote $\beta^*_{\alpha}$ by $\beta^*$ for simplicity. We prove the theorem by proving i) $f_{\alpha}(\bm{\lambda})\leq~g_{\alpha,\bm{\lambda}}(\beta^*)$; ii) there exists a solution for the optimization problem \eqref{optimization-1} that achieves $g_{\alpha,\bm{\lambda}}(\beta^*)$, so that $f_{\alpha}(\bm{\lambda})\geq~g_{\alpha,\bm{\lambda}}(\beta^*)$.
		
		\begin{enumerate}[i)]
			\item $f_{\alpha}(\bm{\lambda})\leq g_{\alpha,\bm{\lambda}}(\beta^*)$.
			
			For $0\leq \beta \leq\alpha-1$, let $\bm{e}_{\beta}$ be an $L$-vector with the first $\beta$ components being 0 and the last $L-\beta$ components being 1. For any $\bm{v}\in\Omega_L^{\alpha}$, since $\sum_{i=1}^{\beta}v_i\leq \beta$, we have 
			\begin{equation*}
			\bm{v}\cdot\bm{e}_{\beta}=\sum_{i=\beta+1}^Lv_i\geq \alpha-\beta.
			\end{equation*}
			Then for any solutions $\{c_{\alpha}(\bm{v})\}$ to the optimization problem in \eqref{optimization-1}, we have 
			\begin{eqnarray}
			\sum_{i=\beta+1}^L\lambda_i&=&\bm{\lambda}\cdot\bm{e}_{\beta}  \nonumber \\
			&\geq &\left(\sum_{\bm{v}\in\Omega_L^{\alpha}}c_{\alpha}(\bm{v})\bm{v}\right)\cdot\bm{e}_{\beta}  \nonumber \\
			&=&\sum_{\bm{v}\in\Omega_L^{\alpha}}c_{\alpha}(\bm{v})(\bm{v}\cdot\bm{e}_{\beta})  \nonumber \\
			&\geq &\sum_{\bm{v}\in\Omega_L^{\alpha}}c_{\alpha}(\bm{v})(\alpha-\beta)  \nonumber \\
			&=&(\alpha-\beta)\sum_{\bm{v}\in\Omega_L^{\alpha}}c_{\alpha}(\bm{v}). \nonumber 
			\end{eqnarray}
			This implies that for all $0\leq \beta\leq \alpha-1$, 
			\begin{equation*}
			f_{\alpha}(\bm{\lambda})\leq \frac{1}{\alpha-\beta}\sum_{i=\beta+1}^L\lambda_i=g_{\alpha,\bm{\lambda}}(\beta).
			\end{equation*}
			Thus, we have
			\begin{equation*}
			f_{\alpha}(\bm{\lambda})\leq g_{\alpha,\bm{\lambda}}(\beta^*).  
			\end{equation*}
			
			\item $f_{\alpha}(\bm{\lambda})\geq g_{\alpha,\bm{\lambda}}(\beta^*)$.
			
			We now show that there exists a solution that achieves $g_{\alpha,\bm{\lambda}}(\beta^*)$. For any $\alpha\in\cL$ and $\beta^*\in\{0,1,\cdots,\alpha-2\}$, by \eqref{definition-beta*}, we have
			\begin{equation*}
			\frac{1}{\alpha-\beta^*}\sum_{i=\beta^*+1}^L\lambda_i\leq \frac{1}{\alpha-(\beta^*+1)}\sum_{i=\beta^*+2}^L\lambda_i,
			\end{equation*}
			which is equivalent to 
			\begin{equation}
			\lambda_{\beta^*+1}\leq \frac{1}{(\alpha-\beta^*)-1}\sum_{i=\beta^*+2}^L\lambda_i.  \label{pf-thm-opt-f-lambda}
			\end{equation}
			Denote the $(L-\beta^*)$-vector $(\lambda_{\beta^*+1}, \lambda_{\beta^*+2}, \cdots, \lambda_{L})$ by $\bm{\lambda}'$. In view of \eqref{pf-thm-opt-f-lambda}, by \Cref{lemma-perfect-resolution}, \eqref{definition-g}, and \eqref{definition-beta*}, we have
			\begin{equation}
			f_{\alpha-\beta^*}(\bm{\lambda}')=\frac{1}{\alpha-\beta^*}\sum_{i=\beta^*+1}^L\lambda_i=g_{\alpha,\bm{\lambda}}(\beta^*).  \label{pf-thm-opt-f(alpha-beta)}
			\end{equation}
			In view of \eqref{f(1)} and \eqref{definition-g} with $\beta=\beta^*$, it is easy to check that \eqref{pf-thm-opt-f(alpha-beta)} is also satisfied for $\beta^*=\alpha-1$.
			Without loss of generality, let $\left\{c_{\alpha-\beta^*}(\bm{u}):\bm{u}\in\Omega_{L-\beta^*}^{\alpha-\beta^*}\right\}$ be an optimal $(\alpha-\beta^*)$-resolution for $\bm{\lambda}'$. Then it follows from \eqref{pf-thm-opt-f(alpha-beta)} that
			\begin{equation*}
			\sum_{\bm{u}\in\Omega_{L-\beta^*}^{\alpha-\beta^*}}c_{\alpha-\beta^*}(\bm{u})=f_{\alpha-\beta^*}(\bm{\lambda}')=g_{\alpha,\bm{\lambda}}(\beta^*).
			\end{equation*}
			For any $\bm{v}\in\Omega_L^{\alpha}$, let
			\begin{equation}
			c_{\alpha}(\bm{v})=
			\begin{cases}
			c_{\alpha-\beta^*}(\bm{u}),&\text{ if }\bm{v}=(11\cdots1\bm{u}) \\
			&\quad \text{ for some }\bm{u}\in\Omega_{L-\beta^*}^{\alpha-\beta^*}  \\
			0,&\text{ otherwise.}
			\end{cases}  \label{resolution-u}
			\end{equation}
			Then we have 
			\begin{equation}
			\sum_{\bm{v}\in\Omega_{L}^{\alpha}}c_{\alpha}(\bm{v})=\sum_{\bm{u}\in\Omega_{L-\beta^*}^{\alpha-\beta^*}}c_{\alpha-\beta^*}(\bm{u})=g_{\alpha,\bm{\lambda}}(\beta^*).  \label{pf-thm-opt-f-u-optimal}
			\end{equation}
			Again, by \eqref{definition-beta*}, we have
			\begin{equation*}
			\frac{1}{\alpha-(\beta^*-1)}\sum_{i=\beta^*}^L\lambda_i \geq \frac{1}{\alpha-\beta^*}\sum_{i=\beta^*+1}^L\lambda_i.
			\end{equation*}
			Then
			\begin{equation*}
			\lambda_{\beta^*} \geq \frac{1}{\alpha-\beta^*}\sum_{i=\beta^*+1}^L\lambda_i=g_{\alpha,\bm{\lambda}}(\beta^*),
			\end{equation*}
			where the equality above follows from \eqref{definition-g}. Thus,
			\begin{equation*}
			\lambda_1\geq \lambda_2\geq \cdots\geq \lambda_{\beta^*} \geq g_{\alpha,\bm{\lambda}}(\beta^*).
			\end{equation*}
			For $i=1,2,\cdots,\beta^*$, since $c_{\alpha}(\bm{v})=0$ if $v_i=0$, we have
			\begin{equation}
			\sum_{\bm{v}\in\Omega_{L}^{\alpha}:v_i=1}c_{\alpha}(\bm{v})=\sum_{\bm{v}\in\Omega_{L}^{\alpha}}c_{\alpha}(\bm{v})=g_{\alpha,\bm{\lambda}}(\beta^*)\leq \lambda_i, \label{pf-thm-opt-f-u-resolution-1}
			\end{equation}
			where the second equality follows from \eqref{pf-thm-opt-f-u-optimal}. For $i=\beta^*+1,\beta^*+2,\cdots,L$,
			\begin{align}
			\sum_{\bm{v}\in\Omega_{L}^{\alpha}:v_i=1}c_{\alpha}(\bm{v})&=\hspace{-0.2cm}\mathop{\sum_{\bm{v}\in\Omega_{L}^{\alpha}:~ v_i=1,}}_{(v_1,\cdots,v_{\beta^*})=\bm{1}}c_{\alpha}(\bm{v})+ \hspace{-0.2cm}\mathop{\sum_{\bm{v}\in\Omega_{L}^{\alpha}:~ v_i=1,}}_{(v_1,\cdots,v_{\beta^*})\neq\bm{1}}c_{\alpha}(\bm{v})  \nonumber \\
			&=\sum_{\bm{u}\in\Omega_{L-\beta^*}^{\alpha-\beta^*}:~u_{i-\beta^*}=1}c_{\alpha-\beta^*}(\bm{u})+0  \nonumber \\
			&\leq \lambda_i,  \label{u-resolution-2}
			\end{align}
			since $\left\{c_{\alpha-\beta^*}(\bm{u}):\bm{u}\in\Omega_{L-\beta^*}^{\alpha-\beta^*}\right\}$ is an optimal $(\alpha-\beta^*)$-resolution for $\bm{\lambda}'$. From \eqref{pf-thm-opt-f-u-optimal}, \eqref{pf-thm-opt-f-u-resolution-1}, and \eqref{u-resolution-2}, we can see that $\{c_{\alpha}(\bm{v}):\bm{v}\in\Omega_L^{\alpha}\}$ defined by \eqref{resolution-u} is an $\alpha$-resolution for $\bm{\lambda}$ that achieves $g_{\alpha,\bm{\lambda}}(\beta^*)$. Thus, we have 
			\begin{equation*}
			f_{\alpha}(\bm{\lambda})\geq g_{\alpha,\bm{\lambda}}(\beta^*).  
			\end{equation*}
		\end{enumerate}
	\end{proof}
	
	The following lemma provides an important insight into the minimum in \eqref{definition-beta*}.
	\begin{lemma}\label{lemma-relation-beta}
		For any $\alpha\in\{2,3,\cdots,L\}$ and $0\leq \beta\leq \alpha-2$,
		\begin{enumerate}[(i)]
			\item if $g_{\alpha,\bm{\lambda}}(\beta)\geq g_{\alpha,\bm{\lambda}}(\beta+1)$, then 
			\begin{equation*}
			g_{\alpha,\bm{\lambda}}(0)\geq g_{\alpha,\bm{\lambda}}(1)\geq \cdots\geq g_{\alpha,\bm{\lambda}}(\beta+1);
			\end{equation*}
			\item if $g_{\alpha,\bm{\lambda}}(\beta)\leq g_{\alpha,\bm{\lambda}}(\beta+1)$, then 
			\begin{equation*}
			g_{\alpha,\bm{\lambda}}(\beta)\leq g_{\alpha,\bm{\lambda}}(\beta+1)\leq \cdots\leq g_{\alpha,\bm{\lambda}}(\alpha-1).
			\end{equation*}
		\end{enumerate}
	\end{lemma}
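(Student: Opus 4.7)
The plan is to reduce both parts of the lemma to a single local criterion and then propagate it by induction on $\beta$. The starting point is the algebraic identity
\[
g_{\alpha,\bm{\lambda}}(\beta) = \frac{1}{\alpha-\beta}\lambda_{\beta+1} + \frac{\alpha-\beta-1}{\alpha-\beta}\, g_{\alpha,\bm{\lambda}}(\beta+1),
\]
which follows by splitting off the $\lambda_{\beta+1}$ term in the definition \eqref{definition-g}. Since $0\leq \beta\leq \alpha-2$ makes the two coefficients legitimate convex-combination weights, subtracting $g_{\alpha,\bm{\lambda}}(\beta+1)$ from both sides gives
\[
g_{\alpha,\bm{\lambda}}(\beta) - g_{\alpha,\bm{\lambda}}(\beta+1) = \frac{1}{\alpha-\beta}\bigl(\lambda_{\beta+1} - g_{\alpha,\bm{\lambda}}(\beta+1)\bigr),
\]
so the sign of the left-hand side matches that of $\lambda_{\beta+1}-g_{\alpha,\bm{\lambda}}(\beta+1)$. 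This equivalence is the workhorse of both parts.

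For part (i), I would run a backward induction. The hypothesis $g_{\alpha,\bm{\lambda}}(\beta)\geq g_{\alpha,\bm{\lambda}}(\beta+1)$ becomes $\lambda_{\beta+1}\geq g_{\alpha,\bm{\lambda}}(\beta+1)$ by the equivalence, and the inductive step shows that $\lambda_{k+1}\geq g_{\alpha,\bm{\lambda}}(k+1)$ implies $\lambda_{k}\geq g_{\alpha,\bm{\lambda}}(k)$ for $1\leq k\leq \beta$. This is immediate: the ordering assumption gives $\lambda_k\geq \lambda_{k+1}\geq g_{\alpha,\bm{\lambda}}(k+1)$, so both terms in the convex combination representing $g_{\alpha,\bm{\lambda}}(k)$ are at most $\lambda_k$; hence $g_{\alpha,\bm{\lambda}}(k)\leq \lambda_k$. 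Re-applying the equivalence one step down yields $g_{\alpha,\bm{\lambda}}(k-1)\geq g_{\alpha,\bm{\lambda}}(k)$, and iterating down to $k=1$ produces the full monotone prefix.

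For part (ii), the symmetric forward induction works: from $\lambda_{k+1}\leq g_{\alpha,\bm{\lambda}}(k+1)$ and $\lambda_{k+2}\leq \lambda_{k+1}$, I substitute $\lambda_{k+2}\leq g_{\alpha,\bm{\lambda}}(k+1)$ into the convex-combination identity written at index $k+1$ to obtain $(\alpha-k-2)g_{\alpha,\bm{\lambda}}(k+2)\geq (\alpha-k-2)g_{\alpha,\bm{\lambda}}(k+1)$. Whenever $k+2\leq \alpha-1$ the factor $\alpha-k-2$ is strictly positive, and the resulting inequality $g_{\alpha,\bm{\lambda}}(k+2)\geq g_{\alpha,\bm{\lambda}}(k+1)$ feeds back through the equivalence as $\lambda_{k+2}\leq g_{\alpha,\bm{\lambda}}(k+2)$, allowing the induction to continue until the chain terminates at $k+1=\alpha-1$.

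The only real thing to watch is endpoint bookkeeping: in part (ii) the degenerate case $\beta=\alpha-2$ corresponds to a one-term chain with nothing to prove, and the forward iteration stops precisely when the factor $\alpha-k-2$ vanishes, which is exactly where the conclusion terminates. Since every step is elementary algebra using only the convex-combination identity and the hypothesis $\lambda_1\geq \lambda_2\geq \cdots\geq \lambda_L$, I anticipate no substantial obstacle; the entire argument is essentially a compact expression of the fact that $g_{\alpha,\bm{\lambda}}(\cdot)$ is quasi-convex on $\{0,1,\ldots,\alpha-1\}$.
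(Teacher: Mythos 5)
Your proof is correct and takes essentially the same route as the paper's: both reduce the local comparison $g_{\alpha,\bm{\lambda}}(\beta)\lessgtr g_{\alpha,\bm{\lambda}}(\beta+1)$ to the comparison $\lambda_{\beta+1}\lessgtr g_{\alpha,\bm{\lambda}}(\beta+1)$, then propagate along the chain using the ordering $\lambda_1\geq\cdots\geq\lambda_L$. The paper writes this as a string of equivalent rearranged inequalities (and only spells out part (ii)), whereas you package the same algebra as a clean convex-combination identity and sign equivalence applied by induction in both directions; this is a cosmetic rather than substantive difference.
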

	\begin{remark}
		In \Cref{lemma-relation-beta}, if all the non-strict inequalities are replaced by strict inequalities, the lemma remains valid. This alternative version of \Cref{lemma-relation-beta} can be proved by modifying the proof below accordingly.
	\end{remark}
	\begin{remark}
		\Cref{lemma-relation-beta} reveals the {\it pseudo-convexity} \cite{pseudo-convex} of the function $g_{\alpha,\bm{\lambda}}(\beta)$.
	\end{remark}
	\begin{proof}[Proof of \Cref{lemma-relation-beta}]
		In the following, we only prove (ii). The proof for (i) can be obtained similarly.
		
		For $\alpha=2$, the lemma is immediate. For $3\leq \alpha\leq L$ and $\beta=\alpha-2$, (ii) is immediate. For $0\leq \beta\leq \alpha-3$, from the definition of $g_{\alpha,\bm{\lambda}}(\cdot)$ in~\eqref{definition-g}, the condition $g_{\alpha,\bm{\lambda}}(\beta)\leq g_{\alpha,\bm{\lambda}}(\beta+1)$ is equivalent to 
		\begin{equation*}
		\frac{1}{\alpha-\beta}\sum_{i=\beta+1}^L\lambda_i\leq \frac{1}{\alpha-(\beta+1)}\sum_{i=\beta+2}^L\lambda_i,  
		\end{equation*}
		or
		\begin{equation*}
		\lambda_{\beta+1}\leq \frac{1}{\alpha-(\beta+1)}\sum_{i=\beta+2}^L\lambda_i.
		\end{equation*}
		Then by the assumption in \eqref{assumption-lambda-1}, we have
		\begin{equation*}
		\lambda_{\beta+2}\leq \frac{1}{\alpha-(\beta+1)}\sum_{i=\beta+2}^L\lambda_i,  
		\end{equation*}
		or
		\begin{equation*}
		\lambda_{\beta+2}\leq \frac{1}{\alpha-(\beta+2)}\sum_{i=\beta+3}^L\lambda_i,
		\end{equation*}
		which is also equivalent to
		\begin{equation*}
		\frac{1}{\alpha-(\beta+1)}\sum_{i=\beta+2}^L\lambda_i\leq \frac{1}{\alpha-(\beta+2)}\sum_{i=\beta+3}^L\lambda_i.  
		\end{equation*}
		From \eqref{definition-g}, we have 
		\begin{equation*}
		g_{\alpha,\bm{\lambda}}(\beta+1)\leq g_{\alpha,\bm{\lambda}}(\beta+2).  
		\end{equation*}
		Then we see inductively that for all $\beta+1\leq \beta'\leq \alpha-2$,
		\begin{equation*}
		g_{\alpha,\bm{\lambda}}(\beta')\leq g_{\alpha,\bm{\lambda}}(\beta'+1).
		\end{equation*}
		The lemma is proved.
	\end{proof}
	
	For any $\alpha\in\{2,3,\cdots,L\}$ and any $\beta\in\{0,1,\cdots,\alpha-1\}$, we can readily see from \Cref{lemma-relation-beta} that $\beta^*_{\alpha}=\beta$ if and only if
	\begin{equation*}
	g_{\alpha, \bm{\lambda}}(0)\geq g_{\alpha,\bm{\lambda}}(1)\geq \cdots\geq g_{\alpha,\bm{\lambda}}(\beta)
	\end{equation*}
	and
	\begin{equation*}
	g_{\alpha, \bm{\lambda}}(\beta)\leq g_{\alpha,\bm{\lambda}}(\beta+1)\leq \cdots\leq g_{\alpha,\bm{\lambda}}(\alpha-1).
	\end{equation*}
	This provides a method to find the optimal value $\beta^*_{\alpha}$ conveniently. We only need to compare $g_{\alpha,\bm{\lambda}}(\beta)$ and $g_{\alpha,\bm{\lambda}}(\beta+1)$ for $\beta=0,1,\cdots,\alpha-2$ successively and stop at the first $\beta$ such that $g_{\alpha,\bm{\lambda}}(\beta)\leq g_{\alpha,\bm{\lambda}}(\beta+1)$. Then this $\beta$ gives a value of $\beta_{\alpha}^*$ that achieves the minimum in \eqref{definition-beta*}.
	
	\begin{lemma}\label{lemma-relation-alpha}
		$0=\beta_1^*\leq \beta_2^*\leq \cdots\leq \beta_L^*$.
	\end{lemma}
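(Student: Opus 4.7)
The plan is to argue the inequalities one pair at a time, showing $\beta_\alpha^* \leq \beta_{\alpha+1}^*$ for each $\alpha \in \{1,2,\ldots,L-1\}$. The base fact $\beta_1^* = 0$ is immediate, since the feasible set in \eqref{definition-beta*} reduces to $\{0\}$ when $\alpha = 1$.

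The key technical observation is that, after clearing denominators in the definition \eqref{definition-g}, the inequality $g_{\alpha,\bm{\lambda}}(\beta) \leq g_{\alpha,\bm{\lambda}}(\beta+1)$ is equivalent to
\[
(\alpha - \beta - 1)\,\lambda_{\beta+1} \;\leq\; \sum_{i=\beta+2}^{L} \lambda_i,
\]
a manipulation already carried out inside the proof of \Cref{lemma-relation-beta}. The right-hand side is independent of $\alpha$, while the coefficient on the left-hand side increases strictly with $\alpha$. Hence, since $\lambda_{\beta+1} \geq 0$, whenever this inequality holds at level $\alpha+1$ (i.e., with $\alpha$ replaced by $\alpha+1$), it automatically holds at level $\alpha$.

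With this in hand, I would split on $\beta_{\alpha+1}^*$. If $\beta_{\alpha+1}^* \geq \alpha - 1$, then $\beta_\alpha^* \leq \alpha - 1 \leq \beta_{\alpha+1}^*$ trivially, because $\beta_\alpha^*$ lives in $\{0,1,\ldots,\alpha-1\}$. Otherwise $\beta_{\alpha+1}^* \leq \alpha - 2$, and the minimality of $\beta_{\alpha+1}^*$ gives
\[
g_{\alpha+1,\bm{\lambda}}(\beta_{\alpha+1}^*) \;\leq\; g_{\alpha+1,\bm{\lambda}}(\beta_{\alpha+1}^* + 1).
\]
The key observation above transfers this to
\[
g_{\alpha,\bm{\lambda}}(\beta_{\alpha+1}^*) \;\leq\; g_{\alpha,\bm{\lambda}}(\beta_{\alpha+1}^* + 1),
\]
an inequality now meaningful because $\beta_{\alpha+1}^* + 1 \leq \alpha - 1$ lies in the domain of $g_{\alpha,\bm{\lambda}}$. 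Applying \Cref{lemma-relation-beta}(ii) to $g_{\alpha,\bm{\lambda}}$ then forces it to be non-decreasing from $\beta_{\alpha+1}^*$ onward, so $\beta_{\alpha+1}^*$ is itself a minimizer of $g_{\alpha,\bm{\lambda}}$. Choosing $\beta_\alpha^*$ according to the selection rule described just after \Cref{lemma-relation-beta} (the smallest $\beta$ at which $g_{\alpha,\bm{\lambda}}(\beta) \leq g_{\alpha,\bm{\lambda}}(\beta+1)$) yields $\beta_\alpha^* \leq \beta_{\alpha+1}^*$.

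The main obstacle I anticipate is simply the non-uniqueness of $\beta_\alpha^*$: the argument needs a consistent tie-breaking rule, which I would fix once and for all by taking the smallest minimizer for every $\alpha$. Once that convention is in place, iterating the above comparison from $\alpha = 1$ up to $\alpha = L - 1$ assembles the monotone chain $0 = \beta_1^* \leq \beta_2^* \leq \cdots \leq \beta_L^*$.
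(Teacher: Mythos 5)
Your proposal is correct and follows essentially the same route as the paper's proof: the paper likewise handles $\beta_\alpha^*\in\{\alpha-2,\alpha-1\}$ by a trivial range bound, and otherwise clears denominators in $g_{\alpha,\bm{\lambda}}(\beta_\alpha^*)\le g_{\alpha,\bm{\lambda}}(\beta_\alpha^*+1)$ to reach $g_{\alpha-1,\bm{\lambda}}(\beta_\alpha^*)\le g_{\alpha-1,\bm{\lambda}}(\beta_\alpha^*+1)$, then concludes via the smallest-$\beta$ selection rule described after \Cref{lemma-relation-beta}. One small inaccuracy worth flagging: your intermediate remark that ``$\beta_{\alpha+1}^*$ is itself a minimizer of $g_{\alpha,\bm{\lambda}}$'' does not follow from \Cref{lemma-relation-beta}(ii) alone (take $\bm{\lambda}=(3/2,1,1)$, where $\beta_2^*=0$ uniquely yet $\beta_3^*=1$), but this claim is not load-bearing — the concluding step only needs $\beta_{\alpha+1}^*$ to be a stopping point of $g_{\alpha,\bm{\lambda}}$, which you have already established.
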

	\begin{proof}
		It is easy to see from \eqref{f(1)} that $\beta^*_1=0$. This implies that $\beta^*_1\leq \beta^*_2$. Now, we prove the lemma by showing that $\beta^*_{\alpha-1}\leq \beta^*_{\alpha}$ for any $3\leq \alpha\leq L$. If $\beta^*_{\alpha}\in\{\alpha-2,~\alpha-1\}$, since for a fixed $\alpha\in\cL$ we have $0\leq \beta\leq \alpha-1$, it is obvious that
		\begin{equation*}
		\beta^*_{\alpha-1}\leq \alpha-2\leq \beta^*_{\alpha}.
		\end{equation*}
		Otherwise, $\beta^*_{\alpha}\in\{0,1,\cdots,\alpha-3\}$. Since $\beta_{\alpha}^*$ achieves the minimum in \eqref{definition-beta*}, we have
		\begin{equation*}
		\frac{1}{\alpha-\beta^*_{\alpha}}\sum_{i=\beta^*_{\alpha}+1}^L\lambda_i\leq \frac{1}{\alpha-(\beta^*_{\alpha}+1)}\sum_{i=\beta^*_{\alpha}+2}^L\lambda_i,
		\end{equation*}
		which is equivalent to 
		\begin{equation*}
		\lambda_{\beta^*_{\alpha}+1}\leq \frac{1}{\alpha-(\beta^*_{\alpha}+1)}\sum_{i=\beta^*_{\alpha}+2}^L\lambda_i.
		\end{equation*}
		This implies that 
		\begin{equation*}
		\lambda_{\beta^*_{\alpha}+1}\leq \frac{1}{\alpha-(\beta^*_{\alpha}+2)}\sum_{i=\beta^*_{\alpha}+2}^L\lambda_i,
		\end{equation*}
		which is equivalent to
		\begin{equation*}
		\frac{1}{\alpha-(\beta^*_{\alpha}+1)}\sum_{i=\beta^*_{\alpha}+1}^L\lambda_i\leq \frac{1}{\alpha-(\beta^*_{\alpha}+2)}\sum_{i=\beta^*_{\alpha}+2}^L\lambda_i.
		\end{equation*}
		Thus, we have
		\begin{equation*}
		\frac{1}{(\alpha-1)-\beta^*_{\alpha}}\sum_{i=\beta^*_{\alpha}+1}^L\lambda_i\leq \frac{1}{(\alpha-1)-(\beta^*_{\alpha}+1)}\sum_{i=\beta^*_{\alpha}+2}^L\lambda_i,
		\end{equation*}
		which by \eqref{definition-g} implies that
		\begin{equation*}
		g_{\alpha-1,\bm{\lambda}}(\beta^*_{\alpha})\leq g_{\alpha-1,\bm{\lambda}}(\beta^*_{\alpha}+1).
		\end{equation*}
		By the discussion following \Cref{lemma-relation-beta}, we conclude that
		\begin{equation*}
		\beta^*_{\alpha-1}\leq \beta^*_{\alpha}.
		\end{equation*}
	\end{proof}
	
	The following lemma will be used for proving Lemma~\ref{lemma-considerable-lambda1}.

	\begin{lemma}\label{lemma-f-unchange}
		Let $\bm{\lambda}_1=(\lambda_{1,1},\lambda_{1,2},\cdots,\lambda_{1,L})$ and $\bm{\lambda}_2=(\lambda_{2,1},\lambda_{2,2},\cdots,\lambda_{2,L})$ be two vectors such that $\lambda_{1,1}>\lambda_{2,1}$ and $\lambda_{1,i}=\lambda_{2,i}$ for all $2\leq i\leq L$. For any $\alpha_0\in\cL$, if $f_{\alpha_0}(\bm{\lambda}_1)=f_{\alpha_0}(\bm{\lambda}_2)$, then $f_{\alpha}(\bm{\lambda}_1)=f_{\alpha}(\bm{\lambda}_2)$ for all $\alpha\geq \alpha_0$.
	\end{lemma}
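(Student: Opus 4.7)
The plan is to deduce the lemma from the closed form $f_{\alpha}(\bm{\lambda}) = \min_{0 \leq \beta \leq \alpha-1} g_{\alpha,\bm{\lambda}}(\beta)$ provided by \Cref{thm-opt-f}, together with the pseudo-convexity of $g_{\alpha,\bm{\lambda}}(\cdot)$ established in \Cref{lemma-relation-beta}. The starting observation is that, since the two vectors agree in every coordinate except the first, the definition \eqref{definition-g} yields $g_{\alpha,\bm{\lambda}_1}(\beta) = g_{\alpha,\bm{\lambda}_2}(\beta)$ for every $\beta \geq 1$, while $g_{\alpha,\bm{\lambda}_1}(0) > g_{\alpha,\bm{\lambda}_2}(0)$ strictly because $\lambda_{1,1} > \lambda_{2,1}$. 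Taking the minimum over $\beta$ then gives $f_{\alpha}(\bm{\lambda}_1) \geq f_{\alpha}(\bm{\lambda}_2)$ unconditionally, with equality if and only if the minimum for $\bm{\lambda}_2$ is attained at some $\beta^{*} \geq 1$.

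I would then characterize this equality case explicitly. By the pseudo-convexity of $g_{\alpha,\bm{\lambda}_2}(\cdot)$ (\Cref{lemma-relation-beta}), the minimum is attained at some $\beta \geq 1$ precisely when $g_{\alpha,\bm{\lambda}_2}(0) \geq g_{\alpha,\bm{\lambda}_2}(1)$. A short manipulation using \eqref{definition-g} converts this to
\[
(\alpha-1)\,\lambda_{2,1} \;\geq\; \sum_{i=2}^{L}\lambda_{2,i}.
\]
Thus the hypothesis $f_{\alpha_0}(\bm{\lambda}_1) = f_{\alpha_0}(\bm{\lambda}_2)$ is equivalent to this inequality holding at $\alpha = \alpha_0$.

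The key step is a monotonicity observation in $\alpha$: the right-hand side of the displayed inequality does not depend on $\alpha$, whereas the left-hand side is nondecreasing in $\alpha$. Hence, once the inequality holds at $\alpha_0$, it continues to hold for every $\alpha \geq \alpha_0$. For each such $\alpha$ there is consequently some $\beta^{*} \geq 1$ attaining $f_{\alpha}(\bm{\lambda}_2)$, and since $\bm{\lambda}_1$ and $\bm{\lambda}_2$ share all coordinates from index $2$ onward, $f_{\alpha}(\bm{\lambda}_2) = g_{\alpha,\bm{\lambda}_2}(\beta^{*}) = g_{\alpha,\bm{\lambda}_1}(\beta^{*}) \geq f_{\alpha}(\bm{\lambda}_1)$. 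Combined with the general direction from the first step, this yields the desired equality.

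I do not anticipate a serious obstacle; the one delicate point is the middle step, where I need to argue that equality of the minima at $\alpha_0$ forces the minimizer for $\bm{\lambda}_2$ to lie strictly to the right of $\beta = 0$. This step is precisely where the strict inequality $g_{\alpha,\bm{\lambda}_1}(0) > g_{\alpha,\bm{\lambda}_2}(0)$ and the pseudo-convex structure supplied by \Cref{lemma-relation-beta} are essential; after that, the propagation to all $\alpha \geq \alpha_0$ is a one-line monotonicity argument.
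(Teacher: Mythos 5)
Your proof is correct and rests on the same two ingredients as the paper's own argument: the closed form $f_{\alpha}(\bm{\lambda})=\min_{\beta}g_{\alpha,\bm{\lambda}}(\beta)$ from \Cref{thm-opt-f} and the pseudo-convexity of $g_{\alpha,\bm{\lambda}}(\cdot)$ from \Cref{lemma-relation-beta}. The difference is in how the two steps are executed. The paper first shows, via a four-case analysis on the possible minimizers $(\beta_{\alpha_0}^1,\beta_{\alpha_0}^2)$, that one can choose $\beta_{\alpha_0}^1,\beta_{\alpha_0}^2\geq 1$, and then invokes \Cref{lemma-relation-alpha} (the monotonicity $0=\beta_1^*\leq\beta_2^*\leq\cdots\leq\beta_L^*$) to propagate the property ``minimizer $\geq 1$'' to all $\alpha\geq\alpha_0$. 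You instead translate ``there is a minimizer $\geq 1$ for $\bm{\lambda}_2$'' into the explicit algebraic criterion $(\alpha-1)\lambda_{2,1}\geq\sum_{i=2}^{L}\lambda_{2,i}$, whose left side is visibly nondecreasing in $\alpha$; this collapses both the case analysis and the citation of \Cref{lemma-relation-alpha} into a one-line monotonicity observation, and also sidesteps the mild ambiguity that the minimizer $\beta^*_\alpha$ is not unique. One small imprecision: the asserted equivalence ``$f_{\alpha_0}(\bm{\lambda}_1)=f_{\alpha_0}(\bm{\lambda}_2)$ iff $(\alpha_0-1)\lambda_{2,1}\geq\sum_{i\geq 2}\lambda_{2,i}$'' fails at $\alpha_0=1$, where the left side can never hold (since $f_1$ is a strictly monotone sum) while the right side can; this is harmless because the lemma is vacuous for $\alpha_0=1$, but it would be cleaner to either note that case separately or restrict the equivalence to $\alpha_0\geq 2$.
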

	\begin{proof}
		For $\alpha\in\cL$, let $\beta_{\alpha}^1$ and $\beta_{\alpha}^2$ be the values (not necessarily unique) that achieve $f_{\alpha}(\bm{\lambda}_1)$ and $f_{\alpha}(\bm{\lambda}_2)$, respectively. 
		We first prove the claim that among all the possible values of $\beta_{\alpha_0}^1$ and $\beta_{\alpha_0}^2$, there exists a pair of $\left(\beta_{\alpha_0}^1,\beta_{\alpha_0}^2\right)$ such that $\beta_{\alpha_0}^1\geq 1$ and $\beta_{\alpha_0}^2\geq 1$. Consider the following four cases for all the possible values of $\left(\beta_{\alpha_0}^1,\beta_{\alpha_0}^2\right)$:
		\begin{enumerate}[i)]
			\item $\beta_{\alpha_0}^1=\beta_{\alpha_0}^2=0$;
			\item $\beta_{\alpha_0}^1\geq 1$, $\beta_{\alpha_0}^2=0$;
			\item $\beta_{\alpha_0}^1=0$, $\beta_{\alpha_0}^2\geq 1$;
			\item $\beta_{\alpha_0}^1\geq 1$, $\beta_{\alpha_0}^2\geq 1$.
		\end{enumerate}
		Since $f_{\alpha_0}(\bm{\lambda}_1)=f_{\alpha_0}(\bm{\lambda}_2)$, it is easy to see that i) and ii) are impossible.
		If iii) is true, we have
		\begin{equation*}
		\frac{1}{\alpha_0}\sum_{i=1}^L\lambda_{1,i}=\frac{1}{\alpha_0-\beta_{\alpha_0}^2}\sum_{i=\beta_{\alpha_0}^2+1}^L\lambda_{2,i}=\frac{1}{\alpha_0-\beta_{\alpha_0}^2}\sum_{i=\beta_{\alpha_0}^2+1}^L\lambda_{1,i},
		\end{equation*}
		where the second equality follows from $\beta_{\alpha_0}^2\geq 1$.	This implies that 
		\begin{equation*}
		f_{\alpha_0}(\bm{\lambda}_1)=\frac{1}{\alpha_0-\beta_{\alpha_0}^2}\sum_{i=\beta_{\alpha_0}^2+1}^L\lambda_{1,i},
		\end{equation*}
		i.e. $\left(\beta_{\alpha_0}^2,\beta_{\alpha_0}^2\right)$ is a possible pair. This proves the claim.
		For all $\alpha\geq \alpha_0$, by \Cref{lemma-relation-alpha}, we have $\beta_{\alpha}^1\geq 1$ and $\beta_{\alpha}^2\geq 1$. Then by \Cref{thm-opt-f}, we have
		\begin{equation*}
		f_{\alpha}(\bm{\lambda}_1)=f_{\alpha}(\bm{\lambda}_2)=\min_{\beta\in\{1,2,\cdots,\alpha-1\}}g_{\alpha,\bm{\lambda}_1}(\beta).
		\end{equation*}	
		The lemma is proved.
	\end{proof}
	
	Let $\bm{\lambda}^{[1]}$ be the $L$-vector with the first component being 1 and the rest being 0, i.e., 
	\begin{equation}
	\bm{\lambda}^{[1]}=(1,0,0,\cdots,). \label{def-lambda[1]}
	\end{equation}
	\begin{lemma} \label{lemma-considerable-lambda1}
		If $\lambda_1>\sum_{i=2}^L\lambda_i$, let $\bm{\lambda}'=\big(\sum_{i=2}^L\lambda_i,\lambda_2,\lambda_3,\cdots,\lambda_L\big)$. Then for all $\alpha\in\cL$,
		\begin{equation}
		f_{\alpha}(\bm{\lambda})=\bigg(\lambda_1-\sum_{i=2}^L\lambda_i\bigg)f_{\alpha}\big(\bm{\lambda}^{[1]}\big)+f_{\alpha}(\bm{\lambda}').   \label{pf-considerable-lambda1-statement}
		\end{equation}
	\end{lemma}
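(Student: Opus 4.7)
The plan is to reduce the identity to two regimes, $\alpha=1$ and $\alpha\geq 2$, and dispatch each using earlier results. I first observe that $\bm{\lambda}^{[1]} = (1,0,\ldots,0)$ has $\zeta=1$, so by \eqref{f(1)} we have $f_{1}\big(\bm{\lambda}^{[1]}\big)=1$, while $f_{\alpha}\big(\bm{\lambda}^{[1]}\big)=0$ for all $\alpha\geq 2$. Consequently, for $\alpha\geq 2$ the identity \eqref{pf-considerable-lambda1-statement} collapses to the single claim $f_{\alpha}(\bm{\lambda})=f_{\alpha}(\bm{\lambda}')$.

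For the $\alpha=1$ case, I would just apply \eqref{f(1)} directly: $f_1(\bm{\lambda})=\sum_{i=1}^L\lambda_i$ and $f_1(\bm{\lambda}')=2\sum_{i=2}^L\lambda_i$, so the right-hand side of \eqref{pf-considerable-lambda1-statement} equals $(\lambda_1-\sum_{i=2}^L\lambda_i)\cdot 1+2\sum_{i=2}^L\lambda_i=\lambda_1+\sum_{i=2}^L\lambda_i=f_1(\bm{\lambda})$, as needed.

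For $\alpha\geq 2$, the strategy is to invoke Lemma~\ref{lemma-f-unchange} with $\alpha_0=2$. Since $\bm{\lambda}$ and $\bm{\lambda}'$ agree on coordinates $2,\ldots,L$ and differ only in the first coordinate with $\lambda_1>\sum_{i=2}^L\lambda_i$, the hypotheses of that lemma are met once I verify $f_2(\bm{\lambda})=f_2(\bm{\lambda}')$. To compute these, I use Theorem~\ref{thm-opt-f}: $f_2(\bm{\lambda})=\min\{g_{2,\bm{\lambda}}(0),\,g_{2,\bm{\lambda}}(1)\}$ with $g_{2,\bm{\lambda}}(0)=\tfrac12\sum_{i=1}^L\lambda_i$ and $g_{2,\bm{\lambda}}(1)=\sum_{i=2}^L\lambda_i$. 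The assumption $\lambda_1>\sum_{i=2}^L\lambda_i$ yields $g_{2,\bm{\lambda}}(0)>g_{2,\bm{\lambda}}(1)$, so $f_2(\bm{\lambda})=\sum_{i=2}^L\lambda_i$. An analogous computation for $\bm{\lambda}'$ gives $g_{2,\bm{\lambda}'}(0)=\sum_{i=2}^L\lambda_i=g_{2,\bm{\lambda}'}(1)$, hence $f_2(\bm{\lambda}')=\sum_{i=2}^L\lambda_i$, matching $f_2(\bm{\lambda})$.

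Applying Lemma~\ref{lemma-f-unchange} then delivers $f_{\alpha}(\bm{\lambda})=f_{\alpha}(\bm{\lambda}')$ for every $\alpha\geq 2$, completing the proof. No serious obstacle is anticipated: the entire argument is a short chain of direct substitutions into Theorem~\ref{thm-opt-f} and \eqref{f(1)}, followed by one invocation of Lemma~\ref{lemma-f-unchange}. The only delicate point is making sure the hypothesis $\lambda_1>\sum_{i=2}^L\lambda_i$ is used to pin down $\beta_2^*=1$ for $\bm{\lambda}$, which is where the strict inequality is essential.
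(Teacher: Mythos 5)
Your proof is correct and follows essentially the same route as the paper's: both proofs handle $\alpha=1$ by direct substitution, and handle $\alpha\geq 2$ by first using Theorem~\ref{thm-opt-f} to establish $f_2(\bm{\lambda})=f_2(\bm{\lambda}')=\sum_{i=2}^L\lambda_i$ (where the strict inequality $\lambda_1>\sum_{i=2}^L\lambda_i$ forces $g_{2,\bm{\lambda}}(0)>g_{2,\bm{\lambda}}(1)$) and then invoke Lemma~\ref{lemma-f-unchange} with $\alpha_0=2$ to propagate the equality $f_\alpha(\bm{\lambda})=f_\alpha(\bm{\lambda}')$ to all $\alpha\geq 2$, noting that $f_\alpha(\bm{\lambda}^{[1]})=0$ for those $\alpha$. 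The only cosmetic difference is that you spell out the computation of $g_{2,\bm{\lambda}'}(0)$ and $g_{2,\bm{\lambda}'}(1)$ for $\bm{\lambda}'$, which the paper leaves implicit.
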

	
	\begin{proof}
		By \Cref{thm-opt-f}, we have 
		\begin{equation*}
		f_{2}(\bm{\lambda}')=\sum_{i=2}^L\lambda_i.   
		\end{equation*}
		The condition $\lambda_{1}>\sum_{i=2}^L\lambda_i$ implies that
		\begin{equation*}
		g_{2,\bm{\lambda}}(0)>g_{2,\bm{\lambda}}(1).
		\end{equation*}
		Thus by \Cref{thm-opt-f}, we have
		\begin{equation*}
		f_{2}(\bm{\lambda})=g_{2,\bm{\lambda}}(1)=\sum_{i=2}^L\lambda_i.  
		\end{equation*}
		Then
		\begin{equation*}
		f_{2}(\bm{\lambda})=f_{2}(\bm{\lambda}'),
		\end{equation*}
		and by \Cref{lemma-f-unchange}, we have 
		\begin{equation}
		f_{\alpha}(\bm{\lambda})=f_{\alpha}(\bm{\lambda}'),\text{ for all }2\leq \alpha\leq L.  \label{lambda-SMDC-f=f'}
		\end{equation}
		For $2\leq \alpha\leq L$, since $f_{\alpha}(\bm{\lambda}^{[1]})=0$, the equation \eqref{pf-considerable-lambda1-statement} is satisfied by virtue of \eqref{lambda-SMDC-f=f'}. For $\alpha=1$, we can check that
		\begin{eqnarray}
		f_1(\bm{\lambda})&=&\sum_{i=1}^L\lambda_i  \nonumber \\
		&=&\left(\lambda_1-\sum_{i=2}^L\lambda_i\right)\cdot 1+2\sum_{i=2}^L\lambda_i   \nonumber \\
		&=&\left(\lambda_1-\sum_{i=2}^L\lambda_i\right)f_1(\bm{\lambda}^{[1]})+f_1(\bm{\lambda}'), \nonumber
		\end{eqnarray}
		so that \eqref{pf-considerable-lambda1-statement} is also satisfied. This proves the lemma.
	\end{proof}
	
	\begin{lemma} \label{lemma-f-lambda1}
		For any $\eta\in\{1,2,\cdots,L-1\}$, 
		\begin{enumerate}[(i)]
			\item if $\lambda_1\leq \frac{1}{\eta}\sum_{i=2}^L\lambda_i$, then $f_{\alpha}(\bm{\lambda})=g_{\alpha,\bm{\lambda}}(0)$, for $\alpha=1,2,\cdots,\eta+1$;
			\item if $\lambda_1\geq\frac{1}{\eta}\sum_{i=2}^L\lambda_i$, then $f_{\alpha}(\bm{\lambda})=f_{\alpha-1}(\lambda_2,\lambda_3,\cdots,\lambda_L)$, for $\alpha=\eta+1,\eta+2,\cdots,L$.
		\end{enumerate}
	\end{lemma}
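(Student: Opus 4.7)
The plan is to deduce both parts as direct consequences of Theorem~\ref{thm-opt-f} and the characterization of $\beta_\alpha^*$ discussed right after Lemma~\ref{lemma-relation-beta}. The single algebraic observation that drives everything is that clearing denominators in $g_{\alpha,\bm\lambda}(0)\leq g_{\alpha,\bm\lambda}(1)$ yields the equivalent condition
\[
\lambda_1\;\leq\;\frac{1}{\alpha-1}\sum_{i=2}^L\lambda_i,
\]
so the threshold on $\lambda_1$ separating ``the minimum in \eqref{definition-beta*} is attained at $\beta=0$'' from ``it is attained at some $\beta\geq 1$'' depends monotonically on $\alpha$. Part~(i) will correspond to the regime where the threshold is never violated, and part~(ii) to the regime where the minimizer can always be chosen to be $\geq 1$.

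For part~(i), the case $\alpha=1$ is trivial since $f_1(\bm\lambda)=\sum_{i=1}^L\lambda_i=g_{1,\bm\lambda}(0)$. For $2\leq \alpha\leq \eta+1$, I would note that $\alpha-1\leq \eta$ and therefore $\frac{1}{\alpha-1}\sum_{i=2}^L\lambda_i\geq \frac{1}{\eta}\sum_{i=2}^L\lambda_i\geq \lambda_1$ by hypothesis. By the equivalence above this gives $g_{\alpha,\bm\lambda}(0)\leq g_{\alpha,\bm\lambda}(1)$, and the remark following Lemma~\ref{lemma-relation-beta} then forces $\beta_\alpha^*=0$. Theorem~\ref{thm-opt-f} yields $f_\alpha(\bm\lambda)=g_{\alpha,\bm\lambda}(0)$.

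For part~(ii), I would use the symmetric computation: for $\alpha\geq \eta+1$ we have $\alpha-1\geq \eta$, hence $\frac{1}{\alpha-1}\sum_{i=2}^L\lambda_i\leq \frac{1}{\eta}\sum_{i=2}^L\lambda_i\leq \lambda_1$, so $g_{\alpha,\bm\lambda}(0)\geq g_{\alpha,\bm\lambda}(1)$. Again by the discussion after Lemma~\ref{lemma-relation-beta}, the minimum in \eqref{definition-beta*} can be taken over $\beta\in\{1,\ldots,\alpha-1\}$ without loss. Writing $\bm\lambda''=(\lambda_2,\ldots,\lambda_L)$ and performing the index shift $\beta=\beta'+1$ then gives
\[
f_\alpha(\bm\lambda)\;=\;\min_{\beta\in\{1,\ldots,\alpha-1\}}\frac{1}{\alpha-\beta}\sum_{i=\beta+1}^L\lambda_i\;=\;\min_{\beta'\in\{0,\ldots,\alpha-2\}}\frac{1}{(\alpha-1)-\beta'}\sum_{i=\beta'+2}^L\lambda_i,
\]
which, after a final appeal to Theorem~\ref{thm-opt-f} applied to the $(L-1)$-vector $\bm\lambda''$, equals $f_{\alpha-1}(\lambda_2,\lambda_3,\ldots,\lambda_L)$.

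I do not expect any serious obstacle: the only subtlety is the overlap at the boundary case $\alpha=\eta+1$ together with $\lambda_1=\frac{1}{\eta}\sum_{i=2}^L\lambda_i$, which is included in both hypotheses; there $g_{\alpha,\bm\lambda}(0)=g_{\alpha,\bm\lambda}(1)$, so both $\beta=0$ and $\beta=1$ are valid minimizers and the two expressions $g_{\alpha,\bm\lambda}(0)$ and $f_{\alpha-1}(\lambda_2,\ldots,\lambda_L)$ coincide, making the two formulas consistent. The entire argument is thus a one-inequality reduction riding on the monotonicity machinery already established in Lemma~\ref{lemma-relation-beta}.
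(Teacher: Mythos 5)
Your proof is correct and follows essentially the same route as the paper's: both arguments reduce to comparing $g_{\alpha,\bm\lambda}(0)$ with $g_{\alpha,\bm\lambda}(1)$, invoking the pseudo-convexity discussion after Lemma~\ref{lemma-relation-beta} to locate $\beta^*_\alpha$, and (for part~(ii)) dropping $\beta=0$ and performing the index shift to land on $f_{\alpha-1}(\lambda_2,\ldots,\lambda_L)$. The only cosmetic difference is that you package the comparison as the clean equivalence $g_{\alpha,\bm\lambda}(0)\le g_{\alpha,\bm\lambda}(1)\Leftrightarrow\lambda_1\le\frac{1}{\alpha-1}\sum_{i\ge 2}\lambda_i$ and then use monotonicity in $\alpha$, whereas the paper verifies the same inequality via the auxiliary estimate $\frac{1}{\alpha}(1+\frac1\eta)\lessgtr\frac{1}{\alpha-1}$.
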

	\begin{remark}\label{remark-lemma-f-lambda1}
		If $\lambda_1=\frac{1}{\eta}\sum_{i=2}^L\lambda_i$, we have from the lemma that
		\begin{equation*}
		f_{\eta+1}(\bm{\lambda})=f_{\eta}(\lambda_2,\lambda_3, \cdots,\lambda_L)= \frac{1}{\eta+1}\sum_{i=1}^L\lambda_i.  
		\end{equation*}
		In this case,
		\begin{equation*}
		f_{\alpha}(\bm{\lambda})=
		\begin{cases}
		\frac{1}{\alpha}\sum_{i=1}^L\lambda_i,& \text{for }\alpha\leq \eta+1 \\
		f_{\alpha-1}(\lambda_2,\lambda_3,\cdots,\lambda_L),&\text{for }\alpha\geq \eta+1.
		\end{cases}  
		\end{equation*}
	\end{remark}
	\begin{proof}
		We first prove (i). For $\alpha\leq \eta+1$, it is easy to check that
		\begin{equation}
		\frac{1}{\alpha}\left(1+\frac{1}{\eta}\right)\leq \frac{1}{\alpha-1}.  \label{pf-lemma-f-lambda1-alpha<}
		\end{equation}
		Thus,
		\begin{eqnarray}
		\frac{1}{\alpha}\sum_{i=1}^{L}\lambda_i&=&\frac{1}{\alpha}\lambda_1+\frac{1}{\alpha}\sum_{i=2}^L\lambda_i  \nonumber \\
		&\leq &\frac{1}{\alpha}\bigg(\frac{1}{\eta}\sum_{i=2}^L\lambda_i\bigg)+\frac{1}{\alpha}\sum_{i=2}^L\lambda_i  \label{pf-lemma-f-lambda1-f1-1} \\
		&=&\frac{1}{\alpha}\left(1+\frac{1}{\eta}\right)\sum_{i=2}^L\lambda_i  \nonumber \\
		&\leq &\frac{1}{\alpha-1}\sum_{i=2}^L\lambda_i,  \label{pf-lemma-f-lambda1-f1-2}
		\end{eqnarray}
		where \eqref{pf-lemma-f-lambda1-f1-1} follows from the assumption that $\lambda_1\leq \frac{1}{\eta}\sum_{i=2}^L\lambda_i$ and \eqref{pf-lemma-f-lambda1-f1-2} follows from \eqref{pf-lemma-f-lambda1-alpha<}. Then by the discussion following \Cref{lemma-relation-beta}, we have 
		\begin{equation*}
		f_{\alpha}(\bm{\lambda})=\frac{1}{\alpha}\sum_{i=1}^L\lambda_i=g_{\alpha,\bm{\lambda}}(0).
		\end{equation*}
		Next, we prove (ii). For $\alpha\geq \eta+1$, it is easy to check that
		\begin{equation}
		\frac{1}{\alpha}\left(1+\frac{1}{\eta}\right)\geq \frac{1}{\alpha-1}.  \label{pf-lemma-f-lambda1-alpha>}
		\end{equation}
		Similar to the derivation of \eqref{pf-lemma-f-lambda1-f1-2}, with the assumption that $\lambda_1\geq \frac{1}{\eta}\sum_{i=2}^L\lambda_i$, \eqref{pf-lemma-f-lambda1-alpha>} implies that
		\begin{equation}
		\frac{1}{\alpha}\sum_{i=1}^{L}\lambda_i\geq \frac{1}{\alpha-1}\sum_{i=2}^L\lambda_i.  \label{pf-lemma-f-lambda1-f2-1}
		\end{equation}
		Thus, we have 
		\begin{eqnarray}
		f_{\alpha}(\bm{\lambda})&=&\min_{\beta\in\{0,1,\cdots,\alpha-1\}}\left\{\frac{1}{\alpha-\beta}\sum_{i=\beta+1}^L\lambda_i\right\}  \nonumber \\
		&=&\min_{\beta\in\{1,2,\cdots,\alpha-1\}}\left\{\frac{1}{\alpha-\beta}\sum_{i=\beta+1}^L\lambda_i\right\}  \label{pf-lemma-f-lambda1-f2-2} \\
		&=&\min_{\beta\in\{0,1,\cdots,\alpha-2\}}\left\{\frac{1}{(\alpha-1)-\beta}\sum_{i=\beta+2}^L\lambda_i\right\}  \nonumber \\
		&=&f_{\alpha-1}(\lambda_2,\lambda_3,\cdots,\lambda_L),  \nonumber 
		\end{eqnarray}
		where \eqref{pf-lemma-f-lambda1-f2-2} follows from \eqref{pf-lemma-f-lambda1-f2-1}. This proves the lemma.
	\end{proof}
	
	The following lemma implies that $f_{\alpha}(\bm{\lambda})$ is a concave function of $\bm{\lambda}\in\mathbb{R}_+^L$ for all $\alpha\in\cL$. Note that the vectors in this lemma are not necessarily ordered.
	\begin{lemma}\label{lemma-concave-f}
		For any $\alpha\in\cL$,
		\begin{equation*}
		f_{\alpha}(\mu_1\bm{\lambda}_1+\mu_2\bm{\lambda}_2)\geq \mu_1f_{\alpha}(\bm{\lambda}_1)+\mu_2f_{\alpha}(\bm{\lambda}_2)
		\end{equation*}
		for any $\bm{\lambda}_1,\bm{\lambda}_2\in\mathbb{R}_+^L$ and $\mu_1,\mu_2\geq 0$.
	\end{lemma}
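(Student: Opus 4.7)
The plan is to go back to the LP definition of $f_\alpha$ in \eqref{optimization-1}--\eqref{optimization-3} and observe that concavity is the standard consequence of the fact that the objective is linear and the constraint set is linear in the parameter $\bm{\lambda}$. Crucially, this argument does not rely on $\bm{\lambda}$ being an ordered vector (which is why the lemma is stated for general $\bm{\lambda}_1, \bm{\lambda}_2 \in \mathbb{R}_+^L$) and does not require the closed-form expression from \Cref{thm-opt-f}; we can work directly with arbitrary $\alpha$-resolutions.

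First, I would let $\{c_\alpha^{(1)}(\bm{v}) : \bm{v} \in \Omega_L^\alpha\}$ be an optimal $\alpha$-resolution for $\bm{\lambda}_1$ and $\{c_\alpha^{(2)}(\bm{v}) : \bm{v} \in \Omega_L^\alpha\}$ be an optimal $\alpha$-resolution for $\bm{\lambda}_2$, so that
\begin{equation*}
\sum_{\bm{v}\in\Omega_L^\alpha} c_\alpha^{(j)}(\bm{v})\,\bm{v} \leq \bm{\lambda}_j, \quad c_\alpha^{(j)}(\bm{v}) \geq 0, \quad \sum_{\bm{v}\in\Omega_L^\alpha} c_\alpha^{(j)}(\bm{v}) = f_\alpha(\bm{\lambda}_j)
\end{equation*}
for $j=1,2$. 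Then I would form the candidate resolution
\begin{equation*}
\tilde{c}_\alpha(\bm{v}) \triangleq \mu_1 c_\alpha^{(1)}(\bm{v}) + \mu_2 c_\alpha^{(2)}(\bm{v}), \quad \bm{v}\in\Omega_L^\alpha.
\end{equation*}

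Next, I would verify that $\{\tilde{c}_\alpha(\bm{v})\}$ is an $\alpha$-resolution for $\mu_1\bm{\lambda}_1 + \mu_2\bm{\lambda}_2$: the nonnegativity constraint \eqref{optimization-3} is immediate since $\mu_1,\mu_2 \geq 0$, and the feasibility constraint \eqref{optimization-2} follows by scaling the two inequalities above by $\mu_1, \mu_2$ and adding them component-wise,
\begin{equation*}
\sum_{\bm{v}\in\Omega_L^\alpha} \tilde{c}_\alpha(\bm{v})\,\bm{v} = \mu_1 \sum_{\bm{v}} c_\alpha^{(1)}(\bm{v})\,\bm{v} + \mu_2 \sum_{\bm{v}} c_\alpha^{(2)}(\bm{v})\,\bm{v} \leq \mu_1 \bm{\lambda}_1 + \mu_2 \bm{\lambda}_2.
\end{equation*}
Then invoking the definition of $f_\alpha$ as the maximum over all feasible $\alpha$-resolutions gives
\begin{equation*}
f_\alpha(\mu_1 \bm{\lambda}_1 + \mu_2 \bm{\lambda}_2) \geq \sum_{\bm{v}\in\Omega_L^\alpha} \tilde{c}_\alpha(\bm{v}) = \mu_1 f_\alpha(\bm{\lambda}_1) + \mu_2 f_\alpha(\bm{\lambda}_2),
\end{equation*}
which is the desired inequality.

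There is essentially no obstacle here; the only thing to watch out for is that the argument must not presuppose that $\bm{\lambda}_1,\bm{\lambda}_2$ or their convex combination are ordered vectors, since \Cref{thm-opt-f} was proved under the ordered assumption. By working purely with the LP formulation \eqref{optimization-1}--\eqref{optimization-3}, which is symmetric across coordinates, this issue is avoided entirely and no appeal to \Cref{lemma-indep-order} is needed.
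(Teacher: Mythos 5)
Your proof is correct, and it takes a genuinely different and arguably cleaner route than the paper. The paper establishes concavity by first reducing to the case of ordered vectors (using sorting permutations and a rearrangement-style argument to compare tail sums) and then invoking the closed-form expression from \Cref{thm-opt-f}, writing $f_\alpha$ as a minimum over $\beta$ of the linear functions $g_{\alpha,\bm{\lambda}}(\beta)$ and observing that a pointwise minimum of linear functions is concave. Your argument instead goes back to the original LP definition in \eqref{optimization-1}--\eqref{optimization-3}: you take optimal $\alpha$-resolutions for $\bm{\lambda}_1$ and $\bm{\lambda}_2$, form their conic combination, and verify directly that it is feasible for $\mu_1\bm{\lambda}_1 + \mu_2\bm{\lambda}_2$. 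This is the standard "concavity of the LP value in the right-hand side" argument and it is more elementary in two ways: it does not require \Cref{thm-opt-f} (nor \Cref{lemma-perfect-resolution} on which that theorem rests), and, as you rightly point out, it sidesteps the entire ordered/unordered issue because the LP feasibility region is manifestly component-wise, with no sorting involved. What the paper's approach buys is nothing in this instance -- it appears to be an artifact of developing the closed form first and then reusing it; your proof could legitimately have appeared earlier in the paper, before \Cref{thm-opt-f}, and would have been independent of all the machinery in \Cref{section-coefficients}.
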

	\begin{proof}
		Let $\bm{\lambda}_1=(\lambda_{1,1},\lambda_{1,2},\cdots,\lambda_{1,L})$ and $\bm{\lambda}_2=(\lambda_{2,1},\lambda_{2,2},\cdots,\lambda_{2,L})$. Let $\pi_1(\cdot),~\pi_2(\cdot)$ be two permutations of $\{1,2,\cdots,L\}$ such that 
		\begin{equation*}
		\lambda_{1,\pi_1(1)}\geq \lambda_{1,\pi_1(2)}\geq \cdots\geq \lambda_{1,\pi_1(L)}
		\end{equation*}
		and 
		\begin{equation*}
		\lambda_{2,\pi_2(1)}\geq \lambda_{2,\pi_2(2)}\geq \cdots\geq \lambda_{2,\pi_2(L)}.
		\end{equation*}
		Denote the ordered vectors by $\pi_1(\bm{\lambda}_1)$ and $\pi_2(\bm{\lambda}_2)$, respectively. For any $\beta=0,1,\cdots,\alpha-1$, it is easy to see that
		\begin{equation*}
		\frac{1}{\alpha-\beta}\sum_{i=\beta+1}^L\lambda_{1,i}\geq \frac{1}{\alpha-\beta}\sum_{i=\beta+1}^L\lambda_{1,\pi_1(i)}
		\end{equation*}
		and 
		\begin{equation*}
		\frac{1}{\alpha-\beta}\sum_{i=\beta+1}^L\lambda_{2,i}\geq \frac{1}{\alpha-\beta}\sum_{i=\beta+1}^L\lambda_{2,\pi_2(i)}.
		\end{equation*}
		Thus, we have
		\begin{align*}
		&\hspace{0.5cm}\frac{1}{\alpha-\beta}\sum_{i=\beta+1}^L(\mu_1\lambda_{1,i}+\mu_2\lambda_{2,i})  \\
		&\geq\frac{1}{\alpha-\beta}\sum_{i=\beta+1}^L(\mu_1\lambda_{1,\pi_1(i)}+\mu_2\lambda_{2,\pi_2(i)}).
		\end{align*}
		This implies that 
		\begin{equation*}
		f_{\alpha}(\mu_1\bm{\lambda}_1+\mu_2\bm{\lambda}_2)\geq f_{\alpha}(\mu_1\pi_1(\bm{\lambda}_1)+\mu_2\pi_2(\bm{\lambda}_2)).  
		\end{equation*}
		For any $\alpha\in\cL$, it is easy to check that
		\begin{equation*}
		f_{\alpha}(\pi_1(\bm{\lambda}_1))=f_{\alpha}(\bm{\lambda}_1)  
		\end{equation*}
		and
		\begin{equation*}
		f_{\alpha}(\pi_2(\bm{\lambda}_2))=f_{\alpha}(\bm{\lambda}_2).  
		\end{equation*}
		Therefore, if the lemma holds for any ordered vectors $\bm{\lambda}_1$ and $\bm{\lambda}_2$, then the lemma holds for any vectors $\bm{\lambda}_1$ and $\bm{\lambda}_2$ (not necessarily ordered), because
		\begin{eqnarray}
		f_{\alpha}(\mu_1\bm{\lambda}_1+\mu_2\bm{\lambda}_2)&\geq& f_{\alpha}(\mu_1\pi_1(\bm{\lambda}_1)+\mu_2\pi_2(\bm{\lambda}_2))  \nonumber  \\
		&\geq &\mu_1f_{\alpha}(\pi_1(\bm{\lambda}_1))+\mu_2f_{\alpha}(\pi_2(\bm{\lambda}_2))  \nonumber \\
		&=&\mu_1f_{\alpha}(\bm{\lambda}_1)+\mu_2f_{\alpha}(\bm{\lambda}_2).  \nonumber 
		\end{eqnarray}
		
		Thus without loss of generality, we assume that $\bm{\lambda}_1$ and $\bm{\lambda}_2$ are ordered. Then for any $\beta=0,1,\cdots,\alpha-1$, we have from \Cref{thm-opt-f} that
		\begin{equation*}
		\frac{1}{\alpha-\beta}\sum_{i=\beta+1}^L\lambda_{1,i}\geq f_{\alpha}(\bm{\lambda}_1)
		\end{equation*}
		and
		\begin{equation*}
		\frac{1}{\alpha-\beta}\sum_{i=\beta+1}^L\lambda_{2,i}\geq f_{\alpha}(\bm{\lambda}_2),
		\end{equation*}
		which implies
		\begin{equation*}
		\frac{1}{\alpha-\beta}\sum_{i=\beta+1}^L(\mu_1\lambda_{1,i}+\mu_2\lambda_{2,i})\geq \mu_1f_{\alpha}(\bm{\lambda}_1) +\mu_2 f_{\alpha}(\bm{\lambda}_2).
		\end{equation*}
		By taking the minimum over all $\beta=0,1,\cdots,\alpha-1$, we obtain
		\begin{align*}
		&\hspace{0.5cm}\min_{\beta\in\{0,1,\cdots,\alpha-1\}}\left\{\frac{1}{\alpha-\beta}\sum_{i=\beta+1}^L(\mu_1\lambda_{1,i}+\mu_2\lambda_{2,i})\right\}  \\
		&\geq \mu_1f_{\alpha}(\bm{\lambda}_1) +\mu_2 f_{\alpha}(\bm{\lambda}_2),
		\end{align*}
		which by \Cref{thm-opt-f} is equivalent to
		\begin{equation*}
		f_{\alpha}(\mu_1\bm{\lambda}_1+\mu_2\bm{\lambda}_2)\geq \mu_1 f_{\alpha}(\bm{\lambda}_1)+\mu_2 f_{\alpha}(\bm{\lambda}_2).  \label{pf-concave-min-is-concave}
		\end{equation*}
		This proves the lemma.
	\end{proof}
	
	\section{The Minimum Sufficient Set of Inequalities}  \label{section-hyperplane}
	Even though the superposition region $\cR_{\sup}$ (cf. \eqref{rate-contraints} and \eqref{Rsup=Rh}) can be explicitly characterized by solving $f_\alpha(\bm{\lambda})$ in \Cref{thm-opt-f}, an uncountable number of inequalities are involved. For a fixed $L$,
	among all these inequalities, 
	only a finite number of them are needed because $\cR_{\sup}$ is a polytope. In this section, we provide a method to determine this minimum sufficient set of inequalities.
	
	For any $\bm{\lambda}\in\mathbb{R}_+^L$, let $\pi(\cdot)$ be a permutation of $\{1,2,\cdots,L\}$ such that 
	\begin{equation}
	\lambda_{\pi(1)}\geq \lambda_{\pi(2)}\geq \cdots\geq \lambda_{\pi(L)}.  \label{lambda-permute-def}
	\end{equation}
	Recall that we consider only $\bm{\lambda}$'s whose minimum nonzero element is equal to 1. Let $\zeta\in\cL$ be such that 
	\begin{equation}
	\lambda_{\pi(\zeta)}=1  \label{lambda(zeta)}
	\end{equation}
	and for $j=\zeta+1,\zeta+2,\cdots,L$,
	\begin{equation}
	\lambda_{\pi(j)}=0.
	\end{equation}
	Toward listing all the inequalities defining $\cR_{\sup}$, we consider a certain finite subset of $\mathbb{R}_+^L$ defined as follows. Let $\cG_L$ be the collection of all $\bm{\lambda}\in\mathbb{R}_+^L$ such that for $j=\zeta-1,\zeta-2,\cdots,1$,
	\begin{equation}
	\lambda_{\pi(j)}\in\left\{\sum_{i=j+1}^L\lambda_{\pi(i)},\frac{1}{2}\sum_{i=j+1}^L\lambda_{\pi(i)},\cdots,\frac{1}{\theta_{j+1}+1}\sum_{i=j+1}^L\lambda_{\pi(i)}\right\},  \label{def-cG_L}
	\end{equation}
	where $\theta_{\zeta}=0$ and for $j=\zeta-2,\cdots,1$, $\theta_{j+1}$ is the integer such that
	\begin{equation}
	\lambda_{\pi(j+1)}=\frac{1}{\theta_{j+1}}\sum_{i=j+2}^L\lambda_{\pi(i)}.  \label{def-cG_L-lambda(j+1)}
	\end{equation}
	
	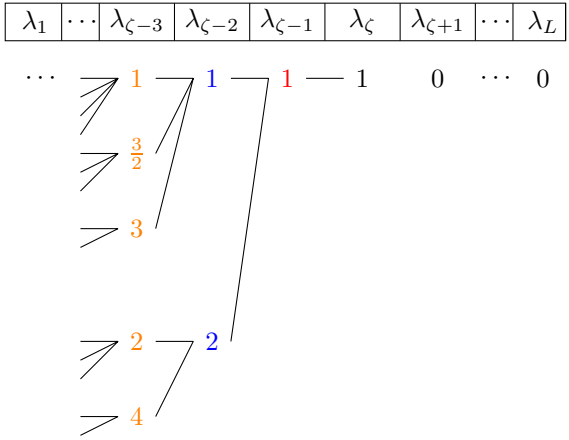
\begin{figure}[!ht]
		\centering
		\begin{tikzpicture}
		\draw (1.25,8) rectangle (8.7,8.5); 
		\draw (2.0,8)--(2.0,8.5) (2.5,8)--(2.5,8.5) (3.5,8)--(3.5,8.5) (4.5,8)--(4.5,8.5) (5.5,8)--(5.5,8.5) (6.5,8)--(6.5,8.5) (7.5,8)--(7.5,8.5) (8,8)--(8,8.5) (8.7,8)--(8.7,8.5);
		\node at (1.65,8.25) {$\lambda_1$}; \node at (2.3,8.25) {$\cdots$}; 
		\node at (3,8.25) {$\lambda_{\zeta-3}$}; \node at (4,8.25) {$\lambda_{\zeta-2}$}; \node at (5,8.25) {$\lambda_{\zeta-1}$}; \node at (6,8.25) {$\lambda_{\zeta}$};
		\node at (7,8.25) {$\lambda_{\zeta+1}$}; \node at (7.8,8.25) {$\cdots$}; \node at (8.4,8.25) {$\lambda_L$};
		
		\node at (1.75,7.5) {$\cdots$};
		\node at (6,7.5) {$1$}; \node at (7,7.5) {$0$}; \node at (7.8,7.5) {$\cdots$}; \node at (8.4,7.5) {$0$};
		
		\draw (5.25,7.5)--(5.75,7.5);
		\node at (5,7.5) {{\color{red}$1$}};
		
		\draw (4.75,7.5)--(4.25,7.5) (4.75,7.5)--(4.25,4);
		\node at (4,7.5) {\color{blue}{$1$}}; \node at (4,4) {\color{blue}{$2$}};
		
		\draw (3.75,7.5)--(3.25,7.5) (3.75,7.5)--(3.25,6.5) (3.75,7.5)--(3.25,5.5);
		\node at (3,7.5) {\color{orange}{$1$}}; \node at (3,6.5) {\color{orange}{$\frac{3}{2}$}}; \node at (3,5.5) {\color{orange}{$3$}};
		\draw (3.75,4)--(3.25,4) (3.75,4)--(3.25,3);
		\node at (3,4) {\color{orange}{$2$}}; \node at (3,3) {\color{orange}{$4$}};
		
		\draw (2.75,7.5)--(2.25,7.5) (2.75,7.5)--(2.25,7.25) (2.75,7.5)--(2.25,7.0) (2.75,7.5)--(2.25,6.75);
		\draw (2.75,6.5)--(2.25,6.5) (2.75,6.5)--(2.25,6.25) (2.75,6.5)--(2.25,6.0);
		\draw (2.75,5.5)--(2.25,5.5) (2.75,5.5)--(2.25,5.25);
		\draw (2.75,4)--(2.25,4) (2.75,4)--(2.25,3.75) (2.75,4)--(2.25,3.5);
		\draw (2.75,3)--(2.25,3) (2.75,3)--(2.25,2.75);
		\end{tikzpicture}
		\caption{Recursive generation of vectors in $\cG_L^0$}
		\label{fig-cG0-generator}
	\end{figure}
	
	Here, \eqref{lambda(zeta)}-\eqref{def-cG_L-lambda(j+1)} not only defines $\cG_L$ but in fact provides a method to exhaust all $\bm{\lambda}\in\cG_L$. For $\zeta=1$, the only possible $\bm{\lambda}$ are $\bm{\lambda}^{[1]}=(1,0,0,\cdots)$ and its permutations. For $\zeta\geq 2$, starting with $\lambda_{\pi(\zeta)}=1$, the values of $\lambda_{\pi(\zeta-1)}, \lambda_{\pi(\zeta-2)}, \cdots, \lambda_{\pi(1)}$ can be chosen recursively according to \eqref{def-cG_L}. It is easy to check that 
	\begin{equation}
	\theta_j\in\{1,2,\cdots,\theta_{j+1}+1\}\label{range-theta(j)-1}
	\end{equation}
	and
	\begin{equation}
	1\leq \theta_j\leq \zeta-j  \label{range-theta(j)-2}
	\end{equation}
	for $1\leq j\leq \zeta-1$. Furthermore, for the last element of the set in \eqref{def-cG_L} which is the smallest in the set, we have
	\begin{eqnarray}
	&&\hspace{-0.8cm}\frac{1}{\theta_{j+1}+1}\sum_{i=j+1}^L\lambda_{\pi(i)} \nonumber \\
	&=&\frac{1}{\theta_{j+1}+1}\left(\frac{1}{\theta_{j+1}}\sum_{i=j+2}^L\lambda_{\pi(i)}+\sum_{i=j+2}^L\lambda_{\pi(i)}\right)  \nonumber \\
	&=&\frac{1}{\theta_{j+1}+1}~\frac{\theta_{j+1}+1}{\theta_{j+1}}\sum_{i=j+2}^L\lambda_{\pi(i)}  \nonumber \\
	&=&\frac{1}{\theta_{j+1}}\sum_{i=j+2}^L\lambda_{\pi(i)}  \nonumber \\
	&=&\lambda_{\pi(j+1)},  \nonumber 
	\end{eqnarray}
	so that $\lambda_{\pi(j)}\geq \lambda_{\pi(j+1)}$ as required by \eqref{lambda-permute-def}. Also, we see from \eqref{def-cG_L} that 
	\begin{enumerate}
		\item for $\zeta\geq 2$, $\lambda_{\pi(\zeta-1)}=\lambda_{\pi(\zeta)}=1$;
		\item for $\zeta\geq 3$, $\lambda_{\pi(j+1)}$ is always a possible choice for $\lambda_{\pi(j)}$ for $1\leq j\leq \zeta-2$.
	\end{enumerate}
	
	Denote the cardinality of $\cG_L$ by $S_L$. Let $\cG_L^0=\left\{\bm{\lambda}\in\cG_L:\bm{\lambda} \text{ is ordered}\,\right\}$, and denote its cardinality by 
	\begin{equation}
	|\cG_L^0|=S_L^0.  \label{def-S_L-0}
	\end{equation}
	The vectors in $\cG_L^0$ are generated recursively as illustrated in \reffig{fig-cG0-generator}. 
	For the ease of notation, we let
	\begin{equation}
	\cG_L^0=\left\{\bm{\lambda}^{(1)}, \bm{\lambda}^{(2)},\cdots, \bm{\lambda}^{(S_L^0)}\right\}   \label{denote-cG_L0}
	\end{equation}
	with $\bm{\lambda}^{(1)}=\bm{\lambda}^{[1]}$ (cf. \eqref{def-lambda[1]}) and
	\begin{equation*}
	\cG_L=\cG_L^0\cup\left\{\bm{\lambda}^{(S_L^0+1)}, \bm{\lambda}^{(S_L^0+2)},\cdots, \bm{\lambda}^{(S_L)}\right\}.
	\end{equation*}
	In other words, the set $\cG_L$ is the collection of all possible permutations of the vectors in $\cG_L^0$. 
	
	For $i=1,2,\cdots,S_L$, let $\pi_i(\cdot)$ be a permutation of $\{1,2,\cdots,L\}$ such that
	\begin{equation}
	\lambda^{(i)}_{\pi_i(1)}\geq \lambda^{(i)}_{\pi_i(2)}\geq \cdots\geq \lambda^{(i)}_{\pi_i(L)}.  \label{def-pi(i)}
	\end{equation}
	For any $\bm{\lambda}\in\mathbb{R}_+^L$, let $\bm{f}(\bm{\lambda})=\big(f_1(\bm{\lambda}),f_2(\bm{\lambda}),\cdots,f_L(\bm{\lambda})\big)$. 
	The following technical lemma will be instrumental for the proof of our main theorem.
	\begin{lemma}\label{lemma-combination-of-two}
		Consider any ordered vector $\bm{\lambda}\in\mathbb{R}_+^L$ such that $\bm{\lambda}\neq\bm{\lambda}^{[1]}$. Assume there exists $c_i\geq 0,~i=1,2,\cdots,S_L$ such that 
		\begin{equation}
		\big(\bm{\lambda},\bm{f}(\bm{\lambda})\big)=\sum_{i=1}^{S_L}c_i\cdot\big(\bm{\lambda}^{(i)},\bm{f}(\bm{\lambda}^{(i)})\big).  \label{assumption-lemma-combination-two}
		\end{equation}
		Let $\cI=\big\{i\in\{1,2,\cdots,S_L\}:c_i\neq 0\big\}$. For any $\eta\in\{1,2,\cdots,\zeta-1\}$,
		\begin{enumerate}[(i)]
			\item if $\lambda_1\leq \frac{1}{\eta}\sum_{j=2}^L\lambda_j$, then $\lambda^{(i)}_{\pi_i(1)}\leq \frac{1}{\eta} \sum_{j=2}^L\lambda^{(i)}_{\pi_i(j)}$ for all $i\in\cI$;
			\item if $\lambda_1\geq \frac{1}{\eta}\sum_{j=2}^L\lambda_j$, then $\lambda^{(i)}_{\pi_i(1)}\geq \frac{1}{\eta} \sum_{j=2}^L\lambda^{(i)}_{\pi_i(j)}$ for all $i\in\cI$.
		\end{enumerate}
	\end{lemma}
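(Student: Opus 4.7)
The plan is to exploit the assumed equality $\bm{f}(\bm{\lambda}) = \sum_i c_i \bm{f}(\bm{\lambda}^{(i)})$ together with $\bm{\lambda} = \sum_i c_i \bm{\lambda}^{(i)}$ to pin down the minimizer structure of $g_{\eta+1, \tilde{\bm{\lambda}}^{(i)}}(\cdot)$ for each $i \in \cI$, where $\tilde{\bm{\lambda}}^{(i)} = (\lambda^{(i)}_{\pi_i(1)}, \ldots, \lambda^{(i)}_{\pi_i(L)})$ denotes the ordered version of $\bm{\lambda}^{(i)}$. The core observation I would establish first is: for any $\alpha \in \cL$, if $\beta^{*}$ is a minimizer of $g_{\alpha, \bm{\lambda}}(\cdot)$, then $\beta^{*}$ is also a minimizer of $g_{\alpha, \tilde{\bm{\lambda}}^{(i)}}(\cdot)$ for every $i \in \cI$.

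To prove this observation, I would first note that because $\lambda^{(i)}_{\pi_i(1)}, \ldots, \lambda^{(i)}_{\pi_i(\beta)}$ are the $\beta$ largest entries of $\bm{\lambda}^{(i)}$, it holds that $\sum_{j=\beta+1}^{L} \lambda^{(i)}_j \geq \sum_{j=\beta+1}^{L} \lambda^{(i)}_{\pi_i(j)}$ for every $\beta \in \{0,1,\ldots,L-1\}$. Combining with $\bm{\lambda} = \sum_i c_i \bm{\lambda}^{(i)}$ this gives $g_{\alpha, \bm{\lambda}}(\beta) \geq \sum_i c_i g_{\alpha, \tilde{\bm{\lambda}}^{(i)}}(\beta)$ for every $\beta$. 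Taking $\beta = \beta^{*}$ and using \Cref{lemma-indep-order}, I obtain the chain
\begin{equation*}
f_\alpha(\bm{\lambda}) = g_{\alpha, \bm{\lambda}}(\beta^{*}) \geq \sum_i c_i\, g_{\alpha, \tilde{\bm{\lambda}}^{(i)}}(\beta^{*}) \geq \sum_i c_i\, f_\alpha(\tilde{\bm{\lambda}}^{(i)}) = \sum_i c_i\, f_\alpha(\bm{\lambda}^{(i)}) = f_\alpha(\bm{\lambda}).
\end{equation*}
Equality therefore holds throughout; in particular $\sum_i c_i \bigl(g_{\alpha, \tilde{\bm{\lambda}}^{(i)}}(\beta^{*}) - f_\alpha(\tilde{\bm{\lambda}}^{(i)})\bigr) = 0$ with every summand nonnegative, forcing $g_{\alpha, \tilde{\bm{\lambda}}^{(i)}}(\beta^{*}) = f_\alpha(\tilde{\bm{\lambda}}^{(i)})$ for every $i \in \cI$, as claimed.

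With this tool in hand I would specialize to $\alpha = \eta + 1$. For case (i), \Cref{lemma-f-lambda1}(i) gives $f_{\eta+1}(\bm{\lambda}) = g_{\eta+1, \bm{\lambda}}(0)$, so $\beta^{*} = 0$ is a valid minimizer; the observation then yields $g_{\eta+1, \tilde{\bm{\lambda}}^{(i)}}(0) \leq g_{\eta+1, \tilde{\bm{\lambda}}^{(i)}}(1)$, which unpacks exactly to $\lambda^{(i)}_{\pi_i(1)} \leq \frac{1}{\eta} \sum_{j=2}^{L} \lambda^{(i)}_{\pi_i(j)}$. For case (ii), \Cref{lemma-f-lambda1}(ii) yields $f_{\eta+1}(\bm{\lambda}) = f_\eta(\lambda_2, \ldots, \lambda_L)$; a change of variable $\beta' = \beta + 1$ rewrites this as $\min_{\beta' \in \{1, \ldots, \eta\}} g_{\eta+1, \bm{\lambda}}(\beta')$, so some minimizer $\beta^{*} \geq 1$ is available. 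The observation transfers this minimizer to each $\tilde{\bm{\lambda}}^{(i)}$, and \Cref{lemma-relation-beta}(ii) rules out $g_{\eta+1, \tilde{\bm{\lambda}}^{(i)}}(0) < g_{\eta+1, \tilde{\bm{\lambda}}^{(i)}}(1)$ (which would force $0$ to be the unique minimizer, contradicting $\beta^{*} \geq 1$), so $g_{\eta+1, \tilde{\bm{\lambda}}^{(i)}}(0) \geq g_{\eta+1, \tilde{\bm{\lambda}}^{(i)}}(1)$, i.e., $\lambda^{(i)}_{\pi_i(1)} \geq \frac{1}{\eta} \sum_{j=2}^{L} \lambda^{(i)}_{\pi_i(j)}$.

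The main obstacle is establishing the core observation: it requires both the rearrangement bound $\sum_{j=\beta+1}^{L} \lambda^{(i)}_j \geq \sum_{j=\beta+1}^{L} \lambda^{(i)}_{\pi_i(j)}$ and the min-vs-any-value slack to collapse to equalities for every $i \in \cI$, which in turn relies crucially on the assumption that $(\bm{\lambda}, \bm{f}(\bm{\lambda})) = \sum_i c_i (\bm{\lambda}^{(i)}, \bm{f}(\bm{\lambda}^{(i)}))$ holds simultaneously in both blocks. Once this is secured, the case analysis for $\alpha = \eta + 1$ is routine bookkeeping via \Cref{lemma-f-lambda1} and \Cref{lemma-relation-beta}.
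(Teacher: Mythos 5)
Your proof is correct, and while it uses the same core ingredients as the paper (specializing to the single index $\alpha=\eta+1$, Theorem~\ref{thm-opt-f}, Lemma~\ref{lemma-relation-beta}, Lemma~\ref{lemma-f-lambda1}, and the fact that equality in a conic combination of componentwise lower bounds forces every slack to vanish), it organizes the argument in a genuinely different and arguably cleaner way. The paper's Appendix~\ref{section-proof-combination-two} proves (i) and (ii) separately by contradiction: it posits a violating subset $\cI_1$ or $\cI_2$ of $\cI$, derives a strict inequality $f_{\eta+1}(\bm{\lambda}^{(i)})<\tfrac{1}{\eta+1-t}\sum_{j>t}\lambda^{(i)}_j$ for those indices (and the corresponding nonstrict bound for the rest), sums against $c_i$ to get a strict bound on $f_{\eta+1}(\bm{\lambda})$, and finds a contradiction with the exact value of $f_{\eta+1}(\bm{\lambda})$ furnished by Lemma~\ref{lemma-f-lambda1}. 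Your version instead extracts a single reusable ``minimizer-transfer'' observation, valid for every $\alpha$: any $\beta^*$ achieving $f_\alpha(\bm{\lambda})=g_{\alpha,\bm{\lambda}}(\beta^*)$ must also achieve $f_\alpha$ for every ordered $\tilde{\bm{\lambda}}^{(i)}$, $i\in\cI$. The proof of this observation is the same squeeze/slack argument the paper performs implicitly, with the rearrangement bound $\sum_{j>\beta}\lambda^{(i)}_j\geq\sum_{j>\beta}\lambda^{(i)}_{\pi_i(j)}$ absorbed cleanly. Cases (i) and (ii) then follow from a minimal amount of bookkeeping (for (ii) you use the strict form of Lemma~\ref{lemma-relation-beta}(ii) to rule out $\beta=0$ being a strict minimizer for $\tilde{\bm{\lambda}}^{(i)}$, which plays the role of the paper's ``for every $t\in\{1,\ldots,\eta\}$'' argument). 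What your route buys is a modular intermediate statement that unifies (i) and (ii) and is not tied to $\eta+1$; what the paper's route buys is a more direct, if more repetitive, derivation that does not need to name the ordered vector $\tilde{\bm{\lambda}}^{(i)}$ explicitly. Both proofs correctly need the assumed equality to hold in \emph{both} blocks of $(\bm{\lambda},\bm{f}(\bm{\lambda}))$, which you flag appropriately.
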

	\begin{remark}
		In the above, since $\bm{\lambda}$ is ordered, we have 
		\begin{equation}
		\lambda_1\geq \frac{1}{\zeta-1}\sum_{j=2}^L\lambda_j.   \label{lambda1>=constraint}
		\end{equation}
		Therefore, when $\eta=\zeta-1$, the condition in (i) can only be satisfied with an equality, i.e., $\lambda_1=\frac{1}{\zeta-1}\sum_{j=2}^L\lambda_j$.
	\end{remark}
	\begin{proof}
		See Appendix \ref{section-proof-combination-two}.
	\end{proof}
	
	\medskip
	The assumption that $\bm{\lambda}^{(i)}\in\cG_L$ for $1\leq i\leq S_L$ is not invoked in the proof of \Cref{lemma-combination-of-two}. By taking this assumption into account, \Cref{lemma-combination-of-two} can be further strengthened with the following setup. For any ordered vector $\bm{\lambda}\in\mathbb{R}_+^L$ not equal to $\bm{\lambda}^{[1]}$, by the constraint in \eqref{lambda1>=constraint}, there exists a unique $\eta\in\{1,2,\cdots,\zeta-1\}$ such that
	\begin{equation}
	\frac{1}{\eta} \sum_{j=2}^L\lambda_j\leq \lambda_1<\frac{1}{\eta-1} \sum_{j=2}^L\lambda_j. \label{combination-two-condition}
	\end{equation}
	In the sequel, we adopt the convention that 
	\begin{equation*}
	\frac{1}{0}\cdot c=
	\begin{cases}
	\infty,&\text{if }c\neq 0 \\
	1,&\text{if }c=0.
	\end{cases}
	\end{equation*}
	Then the upper bound in \eqref{combination-two-condition} is $\infty$ when $\eta=1$.
	
	\begin{lemma}\label{lemma-combination-of-two-range}
		For any ordered vector $\bm{\lambda}\in\mathbb{R}_+^L$ such that $\bm{\lambda}\neq\bm{\lambda}^{[1]}$, assume there exists $c_i\geq 0,~i=1,2,\cdots,S_L$ such that 
		\begin{equation}
		\big(\bm{\lambda},\bm{f}(\bm{\lambda})\big)=\sum_{i=1}^{S_L}c_i\cdot\big(\bm{\lambda}^{(i)},\bm{f}(\bm{\lambda}^{(i)})\big).  \label{assumption-lemma-combination-two-range}
		\end{equation}
		Then for all $i\in\cI$,
		\begin{equation}
		\lambda^{(i)}_{\pi_i(1)}\in\left\{\frac{1}{\eta} \sum_{j=2}^L\lambda^{(i)}_{\pi_i(j)},~\frac{1}{\eta-1} \sum_{j=2}^L\lambda^{(i)}_{\pi_i(j)}\right\},  \label{lambda1-lemma-combination-of-two-range}
		\end{equation}
		where $\eta$ depends on $\bm{\lambda}$ and is defined in \eqref{combination-two-condition}.	In particular, if the lower bound in \eqref{combination-two-condition} is tight, then $\lambda^{(i)}_{\pi_i(1)}=\frac{1}{\eta} \sum_{j=2}^L\lambda^{(i)}_{\pi_i(j)}$ for all $i\in\cI$.
	\end{lemma}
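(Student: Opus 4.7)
The plan is to bootstrap Lemma \ref{lemma-combination-of-two} using the discrete structure of $\cG_L$ encoded in \eqref{def-cG_L}. First, I would apply Lemma \ref{lemma-combination-of-two}(ii) at the $\eta$ defined by \eqref{combination-two-condition}: since $\lambda_1 \geq \frac{1}{\eta}\sum_{j=2}^L \lambda_j$, we obtain the lower bound $\lambda^{(i)}_{\pi_i(1)} \geq \frac{1}{\eta}\sum_{j=2}^L \lambda^{(i)}_{\pi_i(j)}$ for every $i \in \cI$. Second, when $\eta \geq 2$, I would apply Lemma \ref{lemma-combination-of-two}(i) at the value $\eta - 1 \in \{1,2,\ldots,\zeta-1\}$; since $\lambda_1 < \frac{1}{\eta-1}\sum_{j=2}^L \lambda_j$ the hypothesis of part (i) is met, yielding the matching upper bound $\lambda^{(i)}_{\pi_i(1)} \leq \frac{1}{\eta-1}\sum_{j=2}^L \lambda^{(i)}_{\pi_i(j)}$ for every $i \in \cI$. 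Together these two inequalities sandwich $\lambda^{(i)}_{\pi_i(1)}$ within a real interval.

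To collapse the interval to the two-point set in \eqref{lambda1-lemma-combination-of-two-range}, I would next invoke the defining recursion \eqref{def-cG_L}--\eqref{def-cG_L-lambda(j+1)} of $\cG_L$ specialized to $j=1$. For each ordered vector $\pi_i(\bm{\lambda}^{(i)}) \in \cG_L^0$ with at least two nonzero coordinates, the first coordinate must take the form $\lambda^{(i)}_{\pi_i(1)} = \frac{1}{k_i}\sum_{j=2}^L \lambda^{(i)}_{\pi_i(j)}$ for some positive integer $k_i$. Combining this discretization with the sandwich bound forces $\eta - 1 \leq k_i \leq \eta$, hence $k_i \in \{\eta - 1,\eta\}$, which is precisely \eqref{lambda1-lemma-combination-of-two-range}. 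For the final tightness assertion, when $\lambda_1 = \frac{1}{\eta}\sum_{j=2}^L \lambda_j$ both parts of Lemma \ref{lemma-combination-of-two} apply simultaneously at $\eta$, and sandwich the quantity to an equality $\lambda^{(i)}_{\pi_i(1)} = \frac{1}{\eta}\sum_{j=2}^L \lambda^{(i)}_{\pi_i(j)}$.

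The main obstacle I anticipate is the bookkeeping around the degenerate vector $\bm{\lambda}^{(1)} = \bm{\lambda}^{[1]}$, which does not fit the recursion \eqref{def-cG_L}. Fortunately, the obstruction is resolved automatically: for $\eta \geq 2$ the upper bound from step one applied to $\bm{\lambda}^{[1]}$ would demand $1 \leq 0$, which is impossible, so we must have $c_1 = 0$ whenever $\eta \geq 2$, and the two-point set in \eqref{lambda1-lemma-combination-of-two-range} is well-defined for every $i \in \cI$. The remaining case $\eta = 1$ is consistent with the convention $\frac{1}{0} \cdot 0 = 1$ introduced just before the lemma, under which $\bm{\lambda}^{[1]}$ satisfies the stated conclusion trivially. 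Thus the two invocations of Lemma \ref{lemma-combination-of-two} together with the recursion \eqref{def-cG_L} furnish the complete argument.
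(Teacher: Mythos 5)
Your proposal is correct and matches the paper's proof in all essentials: both derive the lower bound from Lemma~\ref{lemma-combination-of-two}(ii) at $\eta$, the upper bound from Lemma~\ref{lemma-combination-of-two}(i) at $\eta-1$ (when $\eta\geq 2$), and then invoke the discretization of the first coordinate from \eqref{def-cG_L} to collapse the resulting interval to the two-point set. The paper presents the upper bound first and folds the $\eta=1$ special cases into it rather than treating the degenerate vector $\bm{\lambda}^{[1]}$ as a standalone observation, but that is only a matter of organization.
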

	\begin{proof}
		The lemma can be easily obtained from \Cref{lemma-combination-of-two}. See details in Appendix \ref{section-proof-lemma-combination-two-range}. 
	\end{proof}
	\begin{remark}
		For all $1\leq i\leq S_L$, $\lambda^{(i)}_{\pi_i(1)}$ can in general take one of the $\theta_2+1$ values prescribed in \eqref{def-cG_L}. However, under the constraint \eqref{assumption-lemma-combination-two-range}, the above lemma asserts that for all $i\in\cI$, $\lambda^{(i)}_{\pi_i(1)}$ can only take one of the two values prescribed in \eqref{lambda1-lemma-combination-of-two-range}.
	\end{remark}

	Let $\cR^*$ be the collection of nonnegative rate tuples $\bm{R}$ such that 
	\begin{equation}
	\sum_{l=1}^L\lambda_lR_l\geq \sum_{\alpha=1}^L f_{\alpha}(\bm{\lambda})H(X_{\alpha}), \text{ for all }\bm{\lambda}\in\cG_L.  \label{def-cR*}
	\end{equation}
	The next theorem shows that $\cR^*$ provides an equivalent characterization of $\cR_{\sup}$. 
	Note that $\cR^*$ is the intersection of only a finite set of halfspaces, because the cardinality of $\cG_L$ is finite in view of its definition in \eqref{lambda(zeta)}-\eqref{def-cG_L}. 
	Thus, $\cR^*$ is more explicit than $\cR_{\text{h}}$. 
	For $L=1,2,\cdots,5$, all the rate constraints of $\cR^*$ with ordered coefficient vectors are listed in Appendix \ref{section-table}. 
	\begin{theorem} \label{thm-alter-region}
		$\cR_{\sup}=\cR^*$.
	\end{theorem}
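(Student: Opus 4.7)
The easy inclusion $\cR_{\sup}\subseteq\cR^*$ follows from $\cR_{\sup}=\cR_{\text{h}}$ (cf.\ \eqref{Rsup=Rh}) together with $\cG_L\subseteq\mathbb{R}_+^L$: the defining inequalities \eqref{def-cR*} of $\cR^*$ form a subfamily of those \eqref{rate-contraints} of $\cR_{\text{h}}$. For the reverse direction $\cR^*\subseteq\cR_{\text{h}}$, my plan is to show that for every $\bm{\lambda}\in\mathbb{R}_+^L$ there is a nonnegative representation
\[
\bigl(\bm{\lambda},\bm{f}(\bm{\lambda})\bigr)=\sum_{i=1}^{S_L}c_i\bigl(\bm{\lambda}^{(i)},\bm{f}(\bm{\lambda}^{(i)})\bigr),\qquad c_i\ge0.
\]
Combining the $S_L$ defining inequalities of $\cR^*$ with weights $c_i$ will then recover the $\cR_{\text{h}}$-inequality for $\bm{\lambda}$, so any $\bm{R}\in\cR^*$ satisfies it.

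I would build the decomposition by induction on $\zeta$, the number of nonzero coordinates of $\bm{\lambda}$. By \Cref{lemma-indep-order} and the closure of $\cG_L$ under coordinate permutations, we may assume $\bm{\lambda}$ is ordered. The case $\zeta=1$ is immediate from positive homogeneity of $\bm{f}$. For $\zeta\ge2$, pick the unique integer $\eta$ from \eqref{combination-two-condition} and decompose $\bm{\lambda}=t\bm{\lambda}_{\text{low}}+(1-t)\bm{\lambda}_{\text{high}}$ for a suitable $t\in[0,1]$, where $\bm{\lambda}_{\text{low}}$ and $\bm{\lambda}_{\text{high}}$ share the tail $(\lambda_2,\ldots,\lambda_L)$ and have first coordinates $\frac{1}{\eta}\sum_{j\ge2}\lambda_j$ and $\frac{1}{\eta-1}\sum_{j\ge2}\lambda_j$, respectively. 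Using \Cref{lemma-f-lambda1} and \Cref{remark-lemma-f-lambda1} on each of $\bm{\lambda},\bm{\lambda}_{\text{low}},\bm{\lambda}_{\text{high}}$---splitting the index $\alpha$ into the ranges $\alpha\le\eta$ and $\alpha\ge\eta+1$---I verify coordinatewise that $\bm{f}(\bm{\lambda})=t\bm{f}(\bm{\lambda}_{\text{low}})+(1-t)\bm{f}(\bm{\lambda}_{\text{high}})$. The problem is thus reduced to decomposing $\bm{\lambda}_{\text{low}}$ and $\bm{\lambda}_{\text{high}}$; for each, I apply the inductive hypothesis to the shared tail $(\lambda_2,\ldots,\lambda_L)\in\mathbb{R}_+^{L-1}$ (whose support size is $\zeta-1$) to get $\sum_j\tilde c_j\bigl(\tilde{\bm{\lambda}}^{(j)},\tilde{\bm{f}}(\tilde{\bm{\lambda}}^{(j)})\bigr)$ with $\tilde{\bm{\lambda}}^{(j)}\in\cG_{L-1}$, and lift each $\tilde{\bm{\lambda}}^{(j)}$ to an $L$-vector by prepending $\frac{1}{k}\sum_i\tilde\lambda^{(j)}_i$ with $k=\eta$ (for $\bm{\lambda}_{\text{low}}$) or $k=\eta-1$ (for $\bm{\lambda}_{\text{high}}$). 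Summing the lifts with weights $\tilde c_j$ reproduces the boundary vector, and a further application of \Cref{lemma-f-lambda1} to each lifted vector delivers the matching $\bm{f}$-identity.

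The main obstacle is verifying that each lifted vector in fact lies in $\cG_L$. After sorting, this amounts to the admissibility condition $k\le\tilde\theta^{(j)}+1$, where $\tilde\theta^{(j)}$ is the $\theta$-parameter of the top coordinate of $\tilde{\bm{\lambda}}^{(j)}$; a short algebraic check shows this is equivalent to the prepended coordinate dominating the original top coordinate, hence to both the sortedness and the first-coordinate requirement in \eqref{def-cG_L}. This admissibility is precisely the structural content of \Cref{lemma-combination-of-two-range}: it forces every $\tilde{\bm{\lambda}}^{(j)}$ with $\tilde c_j>0$ in the tail decomposition to satisfy $\tilde\theta^{(j)}\ge\eta-1$, so that $k\le\eta\le\tilde\theta^{(j)}+1$ holds automatically. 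Arranging that the inductive decomposition of the tail respects this constraint---propagating the structural condition on $\bm{\lambda}$ down the recursion---is the technical heart of the argument.
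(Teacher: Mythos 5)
Your overall strategy --- prove $\cR^*\subseteq\cR_{\text{h}}$ by exhibiting, for every $\bm{\lambda}\in\mathbb{R}_+^L$, a conic decomposition $(\bm{\lambda},\bm{f}(\bm{\lambda}))=\sum_i c_i(\bm{\lambda}^{(i)},\bm{f}(\bm{\lambda}^{(i)}))$ with $\bm{\lambda}^{(i)}\in\cG_L$, built recursively by stripping the first coordinate --- is exactly the paper's strategy, and the two-boundary interpolation $\bm{\lambda}=t\bm{\lambda}_{\text{low}}+(1-t)\bm{\lambda}_{\text{high}}$ together with the coordinatewise $\bm{f}$-identity is sound. The gap is in the lifting step, and it is genuine.

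You claim that \Cref{lemma-combination-of-two-range} forces every tail constituent $\tilde{\bm\lambda}^{(j)}$ with $\tilde c_j>0$ to satisfy $\tilde\theta^{(j)}\ge \eta-1$, so that the uniform lifts with $k\in\{\eta-1,\eta\}$ are admissible. This is not what the lemma gives. Applied to the tail $(\lambda_2,\ldots,\lambda_L)$, \Cref{lemma-combination-of-two-range} constrains $\tilde\theta^{(j)}\in\{\eta'-1,\eta'\}$, where $\eta'$ is the parameter of \eqref{combination-two-condition} \emph{for the tail} (i.e.\ determined by $\lambda_2$ relative to $\sum_{j\ge 3}\lambda_j$), not the parameter $\eta$ for $\bm\lambda$. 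These two parameters differ in the regime corresponding to the paper's Case~b): whenever $\lambda_2\le\lambda_1<\frac{1}{\eta'}\sum_{j\ge2}\lambda_j$ strictly, one has $\eta=\eta'+1$, so $\tilde\theta^{(j)}$ can equal $\eta'-1=\eta-2<\eta-1$, and the lift with $k=\eta$ leaves $\cG_L$. A concrete instance: $\bm\lambda=(3,3,1,1)$ has $\eta=2$ while the tail $(3,1,1)$ has $\eta'=1$ and decomposes as $(1,0,0)+(2,1,1)$ with $\tilde\theta$-values $0$ and $1$; lifting $(1,0,0)$ with $k=\eta=2$ yields $(\tfrac12,1,0,0)$, whose sorted, normalized form $(2,1,0,0)$ has a non-integer $\theta$-parameter and is not in $\cG_4$. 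The paper resolves this by \emph{not} lifting uniformly: in its Case~b) it partitions the tail constituents into $\cI_1$ (those with $\tilde\theta^{(j)}=\eta'-1$) and $\cI_2$ (those with $\tilde\theta^{(j)}=\eta'$), lifts $\cI_1$-constituents only with $k=\eta'$, lifts $\cI_2$-constituents with $k\in\{\eta',\eta'+1\}$, and then interpolates the two resulting conic sums to hit $\lambda_1$; this per-constituent choice is precisely the subtlety your outline does not account for. You also do not address the regime $\lambda_1>\sum_{j\ge2}\lambda_j$, where $\eta=1$ and your $\bm\lambda_{\text{high}}$ has an infinite first coordinate; the paper's Case~c) handles it by first peeling off a multiple of $\bm\lambda^{[1]}$ via \Cref{lemma-considerable-lambda1}.
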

	\begin{proof}
		We prove the theorem by showing that $\cR_{\text{h}}=\cR^*$. Since $\cG_L\subseteq\mathbb{R}_+^L$, we have $\cR_{\text{h}}\subseteq\cR^*$. To show $\cR^*\subseteq\cR_{\text{h}}$, we consider the following. Define three sets of $(2L)$-vectors by
		\begin{equation*}
		\cF_L^1=\left\{(\bm{\lambda},\bm{f}(\bm{\lambda})):\bm{\lambda}\in\mathbb{R}_+^L\right\},
		\end{equation*}
		\begin{equation*}
		\cF_L^2=\left\{(\bm{\lambda},\bm{f}(\bm{\lambda})):\bm{\lambda}\in\cG_L\right\},
		\end{equation*}
		and
		\begin{equation*}
		\cF_L^3=\left\{(\bm{\lambda},\bm{f}(\bm{\lambda})):\bm{\lambda}\in\cG_L^0\right\}.
		\end{equation*}
		Note that none of $\cF_L^2$ and $\cF_L^3$ is a vector space since they are not closed under vector addition. We prove $\cR^*\subseteq\cR_{\text{h}}$ by proving the claim that any $(\bm{\lambda},\bm{f}(\bm{\lambda}))\in\cF_L^1$ is a conic combination of the vectors in $\cF_L^2$.
		
		Without loss of generality, we consider only $\bm{\lambda}$ such that $\lambda_1\geq \lambda_2\geq \cdots\geq \lambda_L$, and show that $(\bm{\lambda},\bm{f}(\bm{\lambda}))$ for any such $\bm{\lambda}$ is a conic combination of the vectors in $\cF_L^3$. We prove the claim by induction on $L$ for $L\geq 1$. Since we consider only $\bm{\lambda}$'s whose minimum nonzero element is equal to 1, it is easy to see that $\cF_1^1=\cF_1^3=\{(1,1)\}$ and thus the claim is true for $L=1$.
		
		Assume the claim is true for $L=N$. We will show that the claim is true for $L=N+1$. This can readily be verified for $\bm{\lambda}\in\mathbb{R}_+^{N+1}$ such that $\zeta=1$. Thus, we consider only $\bm{\lambda}\in\mathbb{R}_+^{N+1}$ such that $\zeta\geq 2$. For any ordered vector $\bm{\lambda}_N=(\lambda_2,\lambda_3,\cdots,\lambda_{N+1})\in\mathbb{R}_+^N$, let $\bm{\lambda}_{N+1}=(\lambda_1,\lambda_2,\cdots,\lambda_{N+1})$ where $\lambda_1\geq \lambda_2$. By the induction hypothesis, there exist $c_i\geq 0,~i=1,2,\cdots,S_N^0$ such that
		\begin{equation}
		(\bm{\lambda}_N,\bm{f}(\bm{\lambda}_N))=\sum_{i=1}^{S_N^0} c_i\left(\bm{\lambda}^{(i)}_N,\bm{f}(\bm{\lambda}^{(i)}_N)\right),  \label{pf-thm-alter-region-induction-hypo}
		\end{equation}
		where $\bm{\lambda}^{(i)}_N,~i=1,2,\cdots,S_N^0$ are distinct elements of $\cG_N^0$. Let $\bm{\lambda}^{(i)}_N=(\lambda^{(i)}_2,\lambda^{(i)}_3,\cdots,\lambda^{(i)}_{N+1})$. Recall that $\cI=\{i\in\{1,2,\cdots,S_N\}:c_i\neq 0\}$ in \Cref{lemma-combination-of-two}. For simplicity, let $c_i=0$ for all $i\in\{S_N^0+1,S_N^0+2,\cdots,S_N\}$. For any $i\in\cI$, by \Cref{lemma-combination-of-two-range}, we have
		\begin{equation}
		\lambda^{(i)}_2\in\left\{\frac{1}{\eta'}\sum_{j=3}^{N+1}\lambda^{(i)}_j, \frac{1}{\eta'-1}\sum_{j=3}^{N+1}\lambda^{(i)}_j\right\},  \label{pf-alter-region-lambda(i2)-range}
		\end{equation}
		where $\eta'\in\{1,2,\cdots,N-1\}$ is unique and determined by
		\begin{equation}
		\frac{1}{\eta'}\sum_{j=3}^{N+1}\lambda_j\leq \lambda_2< \frac{1}{\eta'-1}\sum_{j=3}^{N+1}\lambda_j.  \label{pf-alter-region-assumption-lambda2}
		\end{equation}
		Since the second inequality in~\eqref{pf-alter-region-assumption-lambda2} is equivalent to $\lambda_2<\frac{1}{\eta'}\sum_{j=2}^{N+1}\lambda_j$, we consider the following three cases for~$\lambda_1$:
		\begin{enumerate}[~~~a)]
			\item $\frac{1}{\eta'}\sum_{j=2}^{N+1}\lambda_j<\lambda_1\leq \sum_{j=2}^{N+1}\lambda_j$;
			\item $\lambda_2\leq \lambda_1\leq \frac{1}{\eta'}\sum_{j=2}^{N+1}\lambda_j$;
			\item $\lambda_1>\sum_{j=2}^{N+1}\lambda_j$.
		\end{enumerate}	
		
		\textit{Case a):} If $\frac{1}{\eta'}\sum_{j=2}^{N+1}\lambda_j<\lambda_1\leq \sum_{j=2}^{N+1}\lambda_j$, there exists a unique $\varphi\in\{1,2,\cdots,\eta'-1\}$ such that
		\begin{equation}
		\frac{1}{\varphi+1}\sum_{j=2}^{N+1}\lambda_j< \lambda_1\leq \frac{1}{\varphi}\sum_{j=2}^{N+1}\lambda_j.  \label{pf-thm-alter-region-range(lambda1)}
		\end{equation}
		Then by \Cref{lemma-f-lambda1}, we have
		\begin{equation}
		f_{\alpha}(\bm{\lambda}_{N+1})=
		\begin{cases}
		\frac{1}{\alpha}\sum_{j=1}^{N+1}\lambda_j,&\text{for }1\leq \alpha\leq \varphi+1  \\
		f_{\alpha-1}(\bm{\lambda}_N),&\text{for }\varphi+2\leq \alpha\leq N+1.
		\end{cases}  \label{pf-thm-alter-region-f}
		\end{equation}
		For all $i\in\cI$, let 
		\begin{equation}
		\lambda^{(1,i)}_1=\frac{1}{\varphi+1}\sum_{j=2}^{N+1}\lambda^{(i)}_j  \label{pf-thm-alter-region-def-lambda(1,i)}
		\end{equation}
		and
		\begin{equation}
		\lambda^{(2,i)}_1=\frac{1}{\varphi}\sum_{j=2}^{N+1}\lambda^{(i)}_j.   \label{pf-thm-alter-region-def-lambda(2,i)}
		\end{equation}
		For $j\in\{2,3,\cdots,N+1\}$, for notational simplicity, let
		\begin{equation}
		\lambda^{(1,i)}_j=\lambda^{(2,i)}_j=\lambda^{(i)}_j.  \label{pf-thm-alter-region-def-lambda(j)}
		\end{equation}
		Let
		\begin{equation*}
		\bm{\lambda}^{(1,i)}_{N+1}=(\lambda^{(1,i)}_1,\lambda^{(1,i)}_2,\cdots,\lambda^{(1,i)}_{N+1})
		\end{equation*}
		and
		\begin{equation*}
		\bm{\lambda}^{(2,i)}_{N+1}=(\lambda^{(2,i)}_1,\lambda^{(2,i)}_2,\cdots,\lambda^{(2,i)}_{N+1}).
		\end{equation*}
		From \eqref{pf-alter-region-lambda(i2)-range}, \eqref{def-cG_L}, and the range of $\varphi$, we can check that $\bm{\lambda}^{(1,i)}_{N+1},\bm{\lambda}^{(2,i)}_{N+1}\in\cG_{N+1}^0$. By \Cref{remark-lemma-f-lambda1} following \Cref{lemma-f-lambda1}, we have
		\begin{equation}
		f_{\alpha}(\bm{\lambda}^{(1,i)}_{N+1})=
		\begin{cases}
		\frac{1}{\alpha}\sum_{j=1}^{N+1}\lambda^{(1,i)}_j,& \text{for }1\leq \alpha\leq \varphi+2 \\
		f_{\alpha-1}(\bm{\lambda}^{(i)}_N),&\text{for }\varphi+2\leq \alpha\leq N+1,
		\end{cases}  \label{pf-thm-alter-region-f1}
		\end{equation}
		and
		\begin{equation}
		f_{\alpha}(\bm{\lambda}^{(2,i)}_{N+1})=
		\begin{cases}
		\frac{1}{\alpha}\sum_{j=1}^{N+1}\lambda^{(2,i)}_j,& \text{for }1\leq \alpha\leq \varphi+1 \\
		f_{\alpha-1}(\bm{\lambda}^{(i)}_N),&\text{for }\varphi+1\leq \alpha\leq N+1.
		\end{cases}  \label{pf-thm-alter-region-f2}
		\end{equation}
		Consider the conic combination of $\lambda^{(1,i)}_1$ for $i\in\cI$,
		\begin{eqnarray}
		\sum_{i\in\cI}c_i\lambda^{(1,i)}_1&=&\frac{1}{\varphi+1}\sum_{i\in\cI}c_i\sum_{j=2}^{N+1}\lambda^{(i)}_j  \label{pf-alter-region-lambda1i-1} \\
		&=&\frac{1}{\varphi+1}\sum_{j=2}^{N+1}\bigg(\sum_{i\in\cI}c_i\lambda^{(i)}_j\bigg)  \nonumber  \\
		&=&\frac{1}{\varphi+1}\sum_{j=2}^{N+1}\lambda_j   \label{pf-alter-region-lambda1i-2} \\
		&<&\lambda_1,   \label{pf-alter-region-lambda1i-last}
		\end{eqnarray}
		where \eqref{pf-alter-region-lambda1i-1} follows from \eqref{pf-thm-alter-region-def-lambda(1,i)}, \eqref{pf-alter-region-lambda1i-2} follows from \eqref{pf-thm-alter-region-induction-hypo}, and \eqref{pf-alter-region-lambda1i-last} follows from \eqref{pf-thm-alter-region-range(lambda1)}. Similarly, from \eqref{pf-thm-alter-region-def-lambda(2,i)}, \eqref{pf-thm-alter-region-induction-hypo}, and \eqref{pf-thm-alter-region-range(lambda1)}, we have
		\begin{equation*}
		\sum_{i\in\cI}c_i\lambda^{(2,i)}_1\geq\lambda_1. 
		\end{equation*}
		Let $u_1=\sum_{i\in\cI}c_i\lambda^{(1,i)}_1$ and $u_2=\sum_{i\in\cI}c_i\lambda^{(2,i)}_1$. Then we have $u_1<\lambda_1\leq u_2$. For all $i\in\cI$, let
		\begin{equation}
		\lambda^{(i)}_1=\frac{u_2-\lambda_1}{u_2-u_1}\lambda^{(1,i)}_1+\frac{\lambda_1-u_1}{u_2-u_1}\lambda^{(2,i)}_1.  \label{pf-thm-alter-region-def-lambda1i}
		\end{equation}
		It is easy to check that 
		\begin{equation}
		\sum_{i\in\cI}c_i\lambda^{(i)}_1=\lambda_1.   \label{pf-thm-alter-region-sum-lambda1}
		\end{equation}
		Let $c_i^{(1)}=\frac{u_2-\lambda_1}{u_2-u_1}\cdot c_i$ and $c_i^{(2)}=\frac{\lambda_1-u_1}{u_2-u_1}\cdot c_i$. It is readily seen that $c_i^{(1)}$ and $c_i^{(2)}$ are nonnegative, and we can check that
		\begin{equation}
		\begin{cases}
		c_i^{(1)}+c_i^{(2)}=c_i\\
		c_i^{(1)}\lambda^{(1,i)}_1+c_i^{(2)}\lambda^{(2,i)}_1=c_i\lambda^{(i)}_1.
		\end{cases}  \label{pf-thm-alter-region-coefficient(c0)}
		\end{equation}
		Then we have from \eqref{pf-thm-alter-region-f}, \eqref{pf-thm-alter-region-sum-lambda1}, and \eqref{pf-thm-alter-region-coefficient(c0)} that
		\begin{equation}
		\bm{\lambda}_{N+1}=\sum_{i\in\cI}\left(c_i^{(1)}\bm{\lambda}^{(1,i)}_{N+1}+c_i^{(2)}\bm{\lambda}^{(2,i)}_{N+1}\right).  \label{pf-thm-alter-region-case(ia)-lambda}
		\end{equation}
		Following \eqref{pf-thm-alter-region-f}, we have for $1\leq \alpha\leq \varphi+1$ that 
		\begin{align}
		&f_{\alpha}(\bm{\lambda}_{N+1}) \nonumber \\
		&=\frac{1}{\alpha}\sum_{j=1}^{N+1}\lambda_j   \nonumber \\
		&=\frac{1}{\alpha}\sum_{j=1}^{N+1}\sum_{i\in\cI}c_i\lambda^{(i)}_j  \label{pf-thm-alter-region-f-small-2} \\
		&=\frac{1}{\alpha}\sum_{j=1}^{N+1}\left[\sum_{i\in\cI}\left(c_i^{(1)}\lambda^{(1,i)}_j+c_i^{(2)}\lambda^{(2,i)}_j\right)\right]  \label{pf-thm-alter-region-f-small-3} \\
		&=\sum_{i\in\cI}\left[c_i^{(1)}\bigg(\frac{1}{\alpha} \sum_{j=1}^{N+1}\lambda^{(1,i)}_j\bigg)+c_i^{(2)}\bigg(\frac{1}{\alpha} \sum_{j=1}^{N+1}\lambda^{(2,i)}_j\bigg)\right] \nonumber \\
		&=\sum_{i\in\cI}\left[c_i^{(1)} f_{\alpha}(\bm{\lambda}^{(1,i)}_{N+1})+c_i^{(2)} f_{\alpha}(\bm{\lambda}^{(2,i)}_{N+1})\right]   \label{pf-thm-alter-region-f-small-last}
		\end{align}
		where \eqref{pf-thm-alter-region-f-small-2} follows from \eqref{pf-thm-alter-region-induction-hypo}, \eqref{pf-thm-alter-region-f-small-3} follows from \eqref{pf-thm-alter-region-def-lambda(j)} and \eqref{pf-thm-alter-region-coefficient(c0)}, and \eqref{pf-thm-alter-region-f-small-last} follows from \eqref{pf-thm-alter-region-f1} and \eqref{pf-thm-alter-region-f2}. Similarly, for $\varphi+2\leq \alpha\leq N+1$, following \eqref{pf-thm-alter-region-f}, we have 
		\begin{align}
		f_{\alpha}(\bm{\lambda}_{N+1})&=f_{\alpha-1}(\bm{\lambda}_N)  \nonumber \\
		&=\sum_{i\in\cI}c_if_{\alpha-1}(\bm{\lambda}^{(i)}_N)    \label{pf-thm-alter-region-f-large-2} \\
		&=\sum_{i\in\cI}\left(c_i^{(1)}+c_i^{(2)}\right)f_{\alpha-1}(\bm{\lambda}^{(i)}_N)  \label{pf-thm-alter-region-f-large-3} \\
		&=\sum_{i\in\cI}\left[c_i^{(1)} f_{\alpha}(\bm{\lambda}^{(1,i)}_{N+1})+c_i^{(2)} f_{\alpha}(\bm{\lambda}^{(2,i)}_{N+1})\right], \label{pf-thm-alter-region-f-large-last}
		\end{align}
		where \eqref{pf-thm-alter-region-f-large-2} follows from \eqref{pf-thm-alter-region-induction-hypo}, \eqref{pf-thm-alter-region-f-large-3} follows from \eqref{pf-thm-alter-region-coefficient(c0)}, and \eqref{pf-thm-alter-region-f-large-last} follows from \eqref{pf-thm-alter-region-f1} and \eqref{pf-thm-alter-region-f2}. In other words, \eqref{pf-thm-alter-region-f-small-last} or \eqref{pf-thm-alter-region-f-large-last} holds for all $1\leq \alpha\leq N+1$. Summarizing the above, we have
		\begin{align}
		&\big(\bm{\lambda}_{N+1},\bm{f}(\bm{\lambda}_{N+1})\big) \nonumber \\
		&=\sum_{i\in\cI}\bigg[c_i^{(1)}\left(\bm{\lambda}^{(1,i)}_{N+1},f_{\alpha}(\bm{\lambda}^{(1,i)}_{N+1})\right)+c_i^{(2)}\left(\bm{\lambda}^{(2,i)}_{N+1},f_{\alpha}(\bm{\lambda}^{(2,i)}_{N+1})\right)\bigg],  \label{pf-thm-alter-region-case(ia)}
		\end{align}
		and thus $\big(\bm{\lambda}_{N+1},\bm{f}(\bm{\lambda}_{N+1})\big)$ is a conic combination of vectors in $\cF_{N+1}^3$. 
		
		\textit{Case b):} If $\lambda_2\leq \lambda_1\leq \frac{1}{\eta'}\sum_{j=2}^{N+1}\lambda_j$, since the condition $\frac{1}{\eta'}\sum_{j=3}^{N+1}\lambda_j\leq \lambda_2$ in \eqref{pf-alter-region-assumption-lambda2} is equivalent to
		\begin{equation*}
		\lambda_2\geq \frac{1}{\eta'+1}\sum_{j=2}^{N+1}\lambda_j,  
		\end{equation*}
		we have
		\begin{equation*}
		\frac{1}{\eta'+1}\sum_{j=2}^{N+1}\lambda_j\leq \lambda_1\leq \frac{1}{\eta'}\sum_{j=2}^{N+1}\lambda_j.
		\end{equation*}
		By \Cref{lemma-f-lambda1}, we obtain 
		\begin{equation*}
		f_{\alpha}(\bm{\lambda}_{N+1})=
		\begin{cases}
		\frac{1}{\alpha}\sum_{j=1}^{N+1}\lambda_j,&\text{for }1\leq \alpha\leq \eta'+1 \\
		f_{\alpha-1}(\bm{\lambda}_N),&\text{for }\eta'+2\leq \alpha\leq N+1.
		\end{cases}   
		\end{equation*}
		In light of \eqref{pf-alter-region-lambda(i2)-range}, let $\cI_1=\left\{i\in\cI:\lambda^{(i)}_2=\frac{1}{\eta'-1}\sum_{j=3}^{N+1}\lambda^{(i)}_j\right\}$ and $\cI_2=\left\{i\in\cI:\lambda^{(i)}_2=\frac{1}{\eta'}\sum_{j=3}^{N+1}\lambda^{(i)}_j\right\}$, where $\cI_1\cup\cI_2=\cI$. For~$i\in\cI_1$, let 
		\begin{equation}
		\lambda^{(1,i)}_1=\lambda^{(2,i)}_1=\frac{1}{\eta'}\sum_{j=2}^{N+1}\lambda^{(i)}_j.   \label{pf-thm-alter-region-def-lambda(i)-b}
		\end{equation}
		For $i\in\cI_2$, let 
		\begin{equation}
		\lambda^{(1,i)}_1=\frac{1}{\eta'+1}\sum_{j=2}^{N+1}\lambda^{(i)}_j  \label{pf-thm-alter-region-def-lambda(1,i)-b}
		\end{equation}
		and 
		\begin{equation*}
		\lambda^{(2,i)}_1=\frac{1}{\eta'}\sum_{j=2}^{N+1}\lambda^{(i)}_j.  
		\end{equation*}
		Again, from \eqref{pf-alter-region-lambda(i2)-range} and \eqref{def-cG_L}, we can check that $\bm{\lambda}^{(1,i)}_{N+1},\bm{\lambda}^{(2,i)}_{N+1}\in\cG_{N+1}^0$ for all $i\in\cI$. By \Cref{remark-lemma-f-lambda1} following \Cref{lemma-f-lambda1}, we have for $i\in\cI_1$ that 
		\begin{align*}
		&f_{\alpha}(\bm{\lambda}^{(1,i)}_{N+1})=f_{\alpha}(\bm{\lambda}^{(2,i)}_{N+1})  \\
		&=\begin{cases}
		\frac{1}{\alpha}\sum_{j=1}^{N+1}\lambda^{(1,i)}_j,& \text{for }1\leq \alpha\leq \eta'+1 \\
		f_{\alpha-1}(\bm{\lambda}^{(i)}_N),&\text{for }\eta'+1\leq \alpha\leq N+1,
		\end{cases}
		\end{align*}
		and for $i\in\cI_2$,
		\begin{equation*}
		f_{\alpha}(\bm{\lambda}^{(1,i)}_{N+1})=
		\begin{cases}
		\frac{1}{\alpha}\sum_{j=1}^{N+1}\lambda^{(2,i)}_j,& \text{for }1\leq \alpha\leq \eta'+2 \\
		f_{\alpha-1}(\bm{\lambda}^{(i)}_N),&\text{for }\eta'+2\leq \alpha\leq N+1,
		\end{cases} 
		\end{equation*}
		and
		\begin{equation*}
		f_{\alpha}(\bm{\lambda}^{(2,i)}_{N+1})=
		\begin{cases}
		\frac{1}{\alpha}\sum_{j=1}^{N+1}\lambda^{(2,i)}_j,& \text{for }1\leq \alpha\leq \eta'+1 \\
		f_{\alpha-1}(\bm{\lambda}^{(i)}_N),&\text{for }\eta'+1\leq \alpha\leq N+1.
		\end{cases} 
		\end{equation*}
		Following from \eqref{pf-thm-alter-region-def-lambda(i)-b} and \eqref{pf-thm-alter-region-def-lambda(1,i)-b}, we have 
		\begin{align}
		\sum_{i\in\cI}c_i\lambda^{(1,i)}_1&=\sum_{i\in\cI_1}c_i\frac{1}{\eta'}\sum_{j=2}^{N+1}\lambda^{(i)}_j+\sum_{i\in\cI_2}c_i\frac{1}{\eta'+1}\sum_{j=2}^{N+1}\lambda^{(i)}_j  \nonumber \\
		&=\sum_{i\in\cI_1}c_i\frac{1}{\eta'}\left(1+\frac{1}{\eta'-1}\right)\sum_{j=3}^{N+1}\lambda^{(i)}_j  \nonumber \\
		&\hspace{0.5cm}+\sum_{i\in\cI_2}c_i\frac{1}{\eta'+1}\left(1+\frac{1}{\eta'}\right)\sum_{j=3}^{N+1}\lambda^{(i)}_j   \label{pf-alter-region-lambda1i-b-1} \\
		&=\sum_{i\in\cI_1}c_i\frac{1}{\eta'-1}\sum_{j=3}^{N+1}\lambda^{(i)}_j+\sum_{i\in\cI_2}c_i\frac{1}{\eta'}\sum_{j=3}^{N+1}\lambda^{(i)}_j  \nonumber  \\
		&=\sum_{i\in\cI_1}c_i\lambda^{(i)}_2+\sum_{i\in\cI_2}c_i\lambda^{(i)}_2   \label{pf-alter-region-lambda1i-b-3} \\
		&=\lambda_2  \nonumber \\
		&\leq \lambda_1,   \nonumber 
		\end{align}
		where \eqref{pf-alter-region-lambda1i-b-1} and \eqref{pf-alter-region-lambda1i-b-3} follow from the definition of $\cI_1$ and $\cI_2$. Similar to \eqref{pf-alter-region-lambda1i-1}-\eqref{pf-alter-region-lambda1i-last}, we have
		\begin{equation*}
		\sum_{i\in\cI}c_i\lambda^{(2,i)}_1\geq \lambda_1.
		\end{equation*}
		For $i\in\cI$, similar to \eqref{pf-thm-alter-region-def-lambda1i}-\eqref{pf-thm-alter-region-coefficient(c0)}, let $c_i^{(1)}=\frac{u_2-\lambda_1}{u_2-u_1}\cdot c_i$, $c_i^{(2)}=\frac{\lambda_1-u_1}{u_2-u_1}\cdot c_i$, and 
		\begin{equation*}
		\lambda^{(i)}_1=\frac{u_2-\lambda_1}{u_2-u_1}\lambda^{(1,i)}_1+\frac{\lambda_1-u_1}{u_2-u_1}\lambda^{(2,i)}_1.
		\end{equation*}
		We can check that 
		\begin{equation*}
		\sum_{i\in\cI}c_i\lambda^{(i)}_1=\lambda_1
		\end{equation*}
		and for all $i\in\cI$,
		\begin{equation*}
		\begin{cases}
		c_i^{(1)}+c_i^{(2)}=c_i\\
		c_i^{(1)}\lambda^{(1,i)}_1+c_i^{(2)}\lambda^{(2,i)}_1=c_i\lambda^{(i)}_1.
		\end{cases} 
		\end{equation*}
		Then similar to \eqref{pf-thm-alter-region-case(ia)-lambda}-\eqref{pf-thm-alter-region-case(ia)}, we have 
		\begin{align*}
		&\big(\bm{\lambda}_{N+1},\bm{f}(\bm{\lambda}_{N+1})\big)   \\
		&=\sum_{i\in\cI}\bigg[c_i^{(1)}\left(\bm{\lambda}^{(1,i)}_{N+1},f_{\alpha}(\bm{\lambda}^{(1,i)}_{N+1})\right)+c_i^{(2)}\left(\bm{\lambda}^{(2,i)}_{N+1},f_{\alpha}(\bm{\lambda}^{(2,i)}_{N+1})\right)\bigg].  
		\end{align*}
		
		\textit{Case c):} If $\lambda_1>\sum_{j=2}^{N+1}\lambda_j$, let $\bm{\lambda}'_{N+1}=\left(\sum_{j=2}^{N+1}\lambda_j,\lambda_2,\cdots,\lambda_{N+1}\right)$ and $\bm{\lambda}^{(1)}_{N+1}$ be the $(N+1)$-vector with the first component being 1 and the rest being 0. From \Cref{lemma-considerable-lambda1}, we have 
		\begin{align*}
		&(\bm{\lambda}_{N+1},\bm{f}(\bm{\lambda}_{N+1}))  \\
		&=\bigg(\lambda_1-\sum_{j=2}^{N+1}\lambda_j\bigg)\left(\bm{\lambda}^{(1)}_{N+1},\bm{f}(\bm{\lambda}^{(1)}_{N+1})\right)+\big(\bm{\lambda}'_{N+1},\bm{f}(\bm{\lambda}'_{N+1})\big).  
		\end{align*}
		It is easy to see that 
		\begin{equation*}
		\left(\bm{\lambda}^{(1)}_{N+1},\bm{f}(\bm{\lambda}^{(1)}_{N+1})\right)\in\cF_{N+1}^3.  
		\end{equation*}
		Note that $\bm{\lambda}'_{N+1}$ satisfies the condition for Case a) provided that $\eta'\neq1$. Otherwise, it satisfies the condition for Case b). Thus we see that $\big(\bm{\lambda}'_{N+1},\bm{f}(\bm{\lambda}'_{N+1})\big)$ is always a conic combination of the vectors in $\cF_{N+1}^3$. This implies that $(\bm{\lambda}_{N+1},\bm{f}(\bm{\lambda}_{N+1}))$ is a conic combination of the vectors in $\cF_{N+1}^3$, as is to be proved. 
	\end{proof}
	
	\medskip
	For any $\bm{\lambda}\in\mathbb{R}_+^L$, let $\bm{\lambda}_{L-1}=(\lambda_2,\lambda_3,\cdots,\lambda_L)$ and $\bm{f}(\bm{\lambda}_{L-1})=\big(f_1(\bm{\lambda}_{L-1}), f_2(\bm{\lambda}_{L-1}),\cdots,f_{L-1}(\bm{\lambda}_{L-1})\big)$. The following lemma provides a method for finding a set of conic combination coefficients for $\big(\bm{\lambda}_{L-1},\bm{f}(\bm{\lambda}_{L-1})\big)$ from the conic combination for $\big(\bm{\lambda},\bm{f}(\bm{\lambda})\big)$.
	\begin{lemma}  \label{lemma-combination-reduce-dim}
		Consider any ordered vector $\bm{\lambda}\in\mathbb{R}_+^L$ such that $\bm{\lambda}\neq\bm{\lambda}^{[1]}$. Assume there exists $c_i\geq 0,~i=1,2,\cdots,S_L$ such that 
		\begin{equation}
		\big(\bm{\lambda},\bm{f}(\bm{\lambda})\big)=\sum_{i=1}^{S_L}c_i\cdot\big(\bm{\lambda}^{(i)},\bm{f}(\bm{\lambda}^{(i)})\big),  \label{assumption-lemma-combination-reduce-dim}
		\end{equation}
		Then we have
		\begin{align*}
		&\big(\bm{\lambda}_{L-1},\bm{f}(\bm{\lambda}_{L-1})\big)  \\
		&=\sum_{i=1}^{S_L}c_i\cdot \big((\lambda^{(i)}_2,\lambda^{(i)}_3,\cdots,\lambda^{(i)}_L),\bm{f}(\lambda^{(i)}_2,\lambda^{(i)}_3,\cdots,\lambda^{(i)}_L)\big).
		\end{align*}
	\end{lemma}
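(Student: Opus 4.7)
Projecting the vector equation \eqref{assumption-lemma-combination-reduce-dim} onto its last $L-1$ coordinates gives $\bm{\lambda}_{L-1} = \sum_i c_i (\lambda_2^{(i)}, \ldots, \lambda_L^{(i)})$ at once. So the real substance is the $\bm{f}$-identity $f_\alpha(\bm{\lambda}_{L-1}) = \sum_i c_i f_\alpha(\lambda_2^{(i)}, \ldots, \lambda_L^{(i)})$ for each $\alpha \in \{1, \ldots, L-1\}$, which I would establish by proving the two inequalities separately.

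The ``$\geq$'' direction is essentially free. Combining the concavity of $f_\alpha$ (\Cref{lemma-concave-f}) with its positive homogeneity \eqref{scale-f-lambda} shows $f_\alpha$ is superadditive on $\mathbb{R}_+^{L-1}$, i.e., $f_\alpha\!\left(\sum_i c_i \bm{y}_i\right) \geq \sum_i c_i f_\alpha(\bm{y}_i)$ for $c_i \geq 0$; applying this with $\bm{y}_i = (\lambda_2^{(i)}, \ldots, \lambda_L^{(i)})$ delivers the bound.

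The ``$\leq$'' direction is the heart of the matter. My plan is to leverage the hypothesis at level $\alpha+1$, namely $f_{\alpha+1}(\bm{\lambda}) = \sum_i c_i f_{\alpha+1}(\bm{\lambda}^{(i)})$, and translate it into the desired identity via the ``shift'' formula of \Cref{lemma-f-lambda1}(ii), which replaces $f_{\alpha+1}$ of a vector by $f_\alpha$ of that vector with its largest entry removed, under a suitable size condition. Concretely, let $\eta \in \{1, \ldots, \zeta-1\}$ be the integer determined by the bracketing \eqref{combination-two-condition} for $\bm{\lambda}$, and split into two regimes. For $\alpha \geq \eta$: \Cref{lemma-f-lambda1}(ii) applied to $\bm{\lambda}$ gives $f_\alpha(\bm{\lambda}_{L-1}) = f_{\alpha+1}(\bm{\lambda})$; by \Cref{lemma-combination-of-two-range} the analogous bracketing holds for the largest entry of each $\bm{\lambda}^{(i)}$ with $i \in \cI$, so combining \Cref{lemma-indep-order} with a second application of \Cref{lemma-f-lambda1}(ii) yields $f_\alpha(\lambda_2^{(i)}, \ldots, \lambda_L^{(i)}) = f_{\alpha+1}(\bm{\lambda}^{(i)})$ termwise, and summing against $c_i$ together with the $f_{\alpha+1}$-hypothesis closes this range. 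For $\alpha < \eta$: \Cref{lemma-f-lambda1}(i) provides the explicit forms $f_\alpha(\bm{\lambda}_{L-1}) = \tfrac{1}{\alpha}\sum_{j=2}^L \lambda_j$ and $f_\alpha(\lambda_2^{(i)}, \ldots, \lambda_L^{(i)}) = \tfrac{1}{\alpha}\sum_{j=2}^L \lambda_j^{(i)}$ for $i \in \cI$, and the identity reduces to the already-established coordinate equation $\sum_{j=2}^L \lambda_j = \sum_i c_i \sum_{j=2}^L \lambda_j^{(i)}$.

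I expect the main difficulty to be a bookkeeping subtlety in the termwise shift identity: \Cref{lemma-combination-of-two-range} constrains $\lambda_{\pi_i(1)}^{(i)}$, the \emph{largest} entry of $\bm{\lambda}^{(i)}$, whereas $(\lambda_2^{(i)}, \ldots, \lambda_L^{(i)})$ drops the \emph{first} entry $\lambda_1^{(i)}$, which need not be the largest when $\bm{\lambda}^{(i)} \in \cG_L \setminus \cG_L^0$. Reconciling these requires the permutation invariance of $f_\alpha$ (\Cref{lemma-indep-order}) and a careful argument that, in the regimes of $\alpha$ where the shift is used, $f_\alpha$ sees only the multiset of entries so that the structural constraint on the sorted $\bm{\lambda}^{(i)}$ transfers to the unsorted truncation.
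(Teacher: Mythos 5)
Your overall skeleton matches the paper's: project \eqref{assumption-lemma-combination-reduce-dim} coordinate-wise, split at the threshold $\eta$ from \eqref{combination-two-condition}, use \Cref{lemma-f-lambda1}(i) for $\alpha<\eta$ and the shift \Cref{lemma-f-lambda1}(ii) for $\alpha\ge\eta$, with concavity (\Cref{lemma-concave-f}) supplying the reverse inequality. However, there is a genuine error in the $\alpha\ge\eta$ regime: the termwise identity
$f_\alpha\big(\lambda_2^{(i)},\ldots,\lambda_L^{(i)}\big)=f_{\alpha+1}\big(\bm{\lambda}^{(i)}\big)$
is simply \emph{false} when $\bm{\lambda}^{(i)}$ is not ordered. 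For instance, with $\bm{\lambda}^{(i)}=(0,1,1)\in\cG_3$ one has $f_1(1,1)=2$ but $f_2(0,1,1)=1$. The proposed ``reconciliation'' — that permutation invariance of $f_\alpha$ lets the constraint on the sorted $\bm{\lambda}^{(i)}$ transfer to the unsorted truncation because $f_\alpha$ only sees multisets — does not survive scrutiny: $f_\alpha$ of the full vector is indeed multiset-invariant, but the \emph{truncation map} $\bm{\lambda}^{(i)}\mapsto(\lambda_2^{(i)},\ldots,\lambda_L^{(i)})$ is not, and that is exactly where the discrepancy enters.

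What actually holds is only the one-sided bound $f_{\alpha+1}\big(\bm{\lambda}^{(i)}\big)\le f_\alpha\big(\lambda_2^{(i)},\ldots,\lambda_L^{(i)}\big)$: \Cref{lemma-f-lambda1}(ii) on the \emph{sorted} $\bm{\lambda}^{(i)}$ gives $f_{\alpha+1}(\bm{\lambda}^{(i)})=f_\alpha\big(\lambda^{(i)}_{\pi_i(2)},\ldots,\lambda^{(i)}_{\pi_i(L)}\big)$, and this is $\le f_\alpha\big(\lambda_2^{(i)},\ldots,\lambda_L^{(i)}\big)$ because removing the largest entry yields a coordinate-wise smaller $(L{-}1)$-vector (after re-sorting) than removing the first entry, and $f_\alpha$ is monotone in each $g_{\alpha,\cdot}(\beta)$. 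That one-sided bound suffices for your ``$\le$'' direction, since together with $f_\alpha(\bm{\lambda}_{L-1})=f_{\alpha+1}(\bm{\lambda})=\sum_i c_i f_{\alpha+1}(\bm{\lambda}^{(i)})$ and the concavity inequality you already have, the chain closes as a sandwich and forces equality throughout — which is precisely what the paper does. So the scheme is salvageable, but the step as you wrote it is wrong, and the fix requires abandoning the termwise-equality claim rather than ``reconciling'' it. A similar remark applies to your $\alpha<\eta$ case: \Cref{lemma-f-lambda1}(i) only applies to ordered vectors, so you must argue that the sorted truncation $\big(\lambda^{(i)}_{\pi'_i(2)},\ldots,\lambda^{(i)}_{\pi'_i(L)}\big)$ still satisfies the hypothesis of \Cref{lemma-f-lambda1}(i), which takes a short additional argument distinguishing whether the first entry is the largest.
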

	\begin{proof}
		See Appendix \ref{section-proof-combination-reduce-dim}.
	\end{proof}
	
	\begin{lemma}\label{lemma-lambda(L-1)}
		For any $\bm{\lambda}^{(i)}\in\cG_L$, if $\left(\lambda^{(i)}_2,\lambda^{(i)}_3,\cdots,\lambda^{(i)}_L\right)\in\cG_{L-1}$, then $\lambda^{(i)}_1=0$ or $\lambda^{(i)}_{\pi_i(1)}$.
	\end{lemma}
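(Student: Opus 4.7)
I will work with the ordered representative. Let $\bm{\nu}\in\cG_L^0$ denote the ordering of $\bm{\lambda}^{(i)}$, with parameters $\zeta$ and positive integers $\theta_1,\ldots,\theta_{\zeta-1}$ from \eqref{def-cG_L}--\eqref{def-cG_L-lambda(j+1)}, so that $\nu_j=\theta_j^{-1}\sum_{k=j+1}^{L}\nu_k$. Since $\cG_{L-1}$ is closed under permutation of coordinates, $(\lambda^{(i)}_2,\ldots,\lambda^{(i)}_L)\in\cG_{L-1}$ if and only if its ordered version, call it $\tilde{\bm{\nu}}$, lies in $\cG_{L-1}^0$; and $\tilde{\bm{\nu}}$ is precisely the result of deleting one occurrence of the value $a:=\lambda^{(i)}_1$ from the multiset of entries of $\bm{\nu}$. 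Thus the lemma reduces to: whenever $\tilde{\bm{\nu}}\in\cG_{L-1}^0$, one has $a\in\{0,\nu_1\}$.

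A key auxiliary observation is that the recursion forces $\nu_{\zeta-1}=\nu_\zeta=1$ whenever $\zeta\geq 2$: the bound \eqref{range-theta(j)-1} with $j=\zeta-1$ gives $\theta_{\zeta-1}\in\{1,\ldots,\theta_\zeta+1\}=\{1\}$ (using the convention $\theta_\zeta=0$), so $\nu_{\zeta-1}=\nu_\zeta=1$. Hence the two smallest nonzero coordinates of $\bm{\nu}$ always coincide.

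Now assume for contradiction that $0<a<\nu_1$, and let $p$ be the smallest index with $\nu_p=a$. The minimality and $a<\nu_1$ give $p\geq 2$ and $\nu_{p-1}>\nu_p$. Moreover $p\leq\zeta-1$: if $a>1$ then $p<\zeta$ because $\nu_\zeta=1\neq a$, and if $a=1$ then the identity $\nu_{\zeta-1}=1$ supplies an earlier occurrence. The ordered deletion $\tilde{\bm{\nu}}$ has entry $\nu_{p-1}$ at position $p-1$ with the entries $\nu_{p+1},\ldots,\nu_L$ to its right, so if $\tilde{\bm{\nu}}\in\cG_{L-1}^0$ the corresponding ratio
\begin{equation*}
\tilde\theta_{p-1} \;=\; \frac{\sum_{k=p+1}^{L}\nu_k}{\nu_{p-1}} \;=\; \theta_{p-1}-\frac{\nu_p}{\nu_{p-1}}
\end{equation*}
must be a positive integer. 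But $\theta_{p-1}\in\mathbb{Z}_{\geq 1}$ and $\nu_p/\nu_{p-1}\in(0,1)$, placing $\tilde\theta_{p-1}$ strictly between the consecutive integers $\theta_{p-1}-1$ and $\theta_{p-1}$, a contradiction.

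The only delicate step is ruling out the boundary case $p=\zeta$, since only positions $p-1\leq\zeta-2$ carry genuine recursion coefficients in $\cG_{L-1}^0$; the forced identity $\nu_{\zeta-1}=1$ handles this obstacle by guaranteeing that the smallest-$p$ choice always lands in the range $2\leq p\leq\zeta-1$. Everything else is direct arithmetic on the explicit form of $\cG_L$.
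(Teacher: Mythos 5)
Your proof is correct, and it takes a genuinely more streamlined route than the paper's. The paper also works with the ordered vector $\bm{\nu}$ and the recursion parameters $\gamma_j$ (its notation for your $\theta_j$), but it proceeds in a different way: letting $j_0$ be the position of $\lambda^{(i)}_1$ in the sorted order, it runs a backward induction from $j=j_0$ down to $j=1$ to show that $\nu_j=\nu_{j_0}$ and $\gamma_j=\gamma_{j_0}+(j_0-j)$ for every $j\leq j_0$, which in particular forces $\nu_1=\nu_{j_0}$; the key step of that induction compares $\gamma_{N-1}$ with the corresponding parameter $\gamma'_{N-1}$ of the deleted vector, obtains the relation $\gamma'_{N-1}=\gamma_{N-1}[\gamma_{j_0}+(j_0-N)]/[\gamma_{j_0}+(j_0-N+1)]$, and invokes coprimality of consecutive integers together with the range bound \eqref{range-theta(j)-1} to pin down $\gamma_{N-1}$. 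Your argument dispenses with both the induction and the coprimality step: by choosing $p$ to be the \emph{first} occurrence of $a$, you get $\nu_{p-1}>\nu_p$ for free and can then read off $\tilde\theta_{p-1}=\theta_{p-1}-\nu_p/\nu_{p-1}$, which lands strictly between two consecutive integers — a one-line contradiction. Your auxiliary observation $\nu_{\zeta-1}=\nu_\zeta=1$ (which the paper also records just after \eqref{range-theta(j)-2}) is exactly what is needed to confine $p$ to $\{2,\ldots,\zeta-1\}$ so that both $\theta_{p-1}$ and $\tilde\theta_{p-1}$ are genuinely recursion coefficients and hence positive integers; you handle this point explicitly. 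The trade-off is that the paper's inductive version extracts more structural information (the whole prefix of $\bm{\nu}$ up to $j_0$ is constant, with an explicit formula for the $\gamma_j$'s), whereas your proof delivers only the statement of the lemma — but since nothing else in the paper needs that extra information, yours is the leaner argument.
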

	\begin{proof}
		See Appendix \ref{section-proof-lemma-lambda(L-1)}.
	\end{proof}
	
	\begin{lemma}\label{lemma-no-redundancy}
		For any $i_0\in\{1,2,\cdots,S_L\}$, there does not exist $(c_1,c_2,\cdots,c_{S_L})\in\mathbb{R}_+^{S_L}$ such that $c_{i_0}=0$ and 
		\begin{equation*}
		\big(\bm{\lambda}^{(i_0)},\bm{f}(\bm{\lambda}^{(i_0)})\big)=\sum_{i=1}^{S_L}c_i \cdot\big(\bm{\lambda}^{(i)},\bm{f}(\bm{\lambda}^{(i)})\big).
		\end{equation*}
	\end{lemma}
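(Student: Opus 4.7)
The plan is to proceed by strong induction on $L$, with the base case $L=1$ trivial since $\cG_1=\{(1)\}$. For the inductive step, suppose for contradiction that $(\bm{\lambda}^{(i_0)},\bm{f}(\bm{\lambda}^{(i_0)}))=\sum_{i}c_i(\bm{\lambda}^{(i)},\bm{f}(\bm{\lambda}^{(i)}))$ with $c_{i_0}=0$, and let $\cI=\{i:c_i>0\}$. Because $\cG_L$ is closed under permutations and $\bm{f}$ is permutation-symmetric (\Cref{lemma-indep-order}), a preliminary permutation of both sides reduces us to the case $\bm{\lambda}^{(i_0)}\in\cG_L^0$; let $\zeta$ denote the position of its last nonzero entry.

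Two quick cases dispose of most of the work. If $\bm{\lambda}^{(i_0)}=\bm{\lambda}^{[1]}$, nonnegativity forces every $\bm{\lambda}^{(i)}$ with $i\in\cI$ to have support in $\{1\}$, and the normalization that the minimum nonzero entry equals $1$ collapses each such $\bm{\lambda}^{(i)}$ to $\bm{\lambda}^{[1]}=\bm{\lambda}^{(i_0)}$, contradicting $c_{i_0}=0$. If $\zeta<L$, the same nonnegativity argument confines the support of every $\bm{\lambda}^{(i)}$ with $i\in\cI$ to $\{1,\ldots,\zeta\}$; truncating both sides to the first $\zeta$ coordinates yields a $\zeta$-dimensional conic combination of distinct elements of $\cG_\zeta$ in which the truncation of $\bm{\lambda}^{(i_0)}$ has coefficient $0$, contradicting the inductive hypothesis at dimension $\zeta<L$.

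The substantive case is $\zeta=L$. First, apply \Cref{lemma-combination-reduce-dim} to strip the leading coordinate, then use (the proof of) \Cref{thm-alter-region} to expand each tail $(\bm{\lambda}^{(i)}_{L-1},\bm{f}(\bm{\lambda}^{(i)}_{L-1}))$ as a conic combination of vectors $(\bm{\nu},\bm{f}(\bm{\nu}))$ with $\bm{\nu}\in\cG_{L-1}$. A short positivity argument shows that the inductive hypothesis at $L-1$ upgrades to \emph{uniqueness} of conic representation (coefficient $1$ on $\bm{\lambda}^{(i_0)}_{L-1}$ itself and $0$ on every other element of $\cG_{L-1}$), and this forces every participating tail to be a scalar multiple $e_i\,\bm{\lambda}^{(i_0)}_{L-1}$.

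It remains to rule out all such $\bm{\lambda}^{(i)}$ with $i\neq i_0$. The recursion \eqref{def-cG_L} applied to the ordered form of $\bm{\lambda}^{(i)}$ pins its second-smallest nonzero entry to $1$, which together with a minimum-entry comparison restricts $e_i\in\{0,1\}$. The case $e_i=0$ gives $\bm{\lambda}^{(i)}=\bm{\lambda}^{[1]}$, which violates the tight-lower-bound conclusion of \Cref{lemma-combination-of-two-range} (it would require $1=\tfrac{1}{\eta}\cdot 0$). For $e_i=1$ one has $\bm{\lambda}^{(i)}_{L-1}\in\cG_{L-1}$, so \Cref{lemma-lambda(L-1)} yields $\lambda^{(i)}_1\in\{0,\lambda^{(i)}_{\pi_i(1)}\}$; the tight-lower-bound regime of \Cref{lemma-combination-of-two-range} (valid since $\bm{\lambda}^{(i_0)}\in\cG_L^0$) then either pins $\lambda^{(i)}_1=\lambda^{(i_0)}_1$ so that $\bm{\lambda}^{(i)}=\bm{\lambda}^{(i_0)}$ (contradicting $i\neq i_0$), or leaves $\lambda^{(i)}_1=0$. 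Summing first coordinates then gives $\sum_{i\in\cI}c_i\lambda^{(i)}_1=0\neq\lambda^{(i_0)}_1$, the desired contradiction. The main obstacle is this $\zeta=L$ case: one must carefully interleave the reduction lemma, the derived uniqueness of conic representation, the $\cG_L$ recursion, \Cref{lemma-lambda(L-1)}, and the tight-bound form of \Cref{lemma-combination-of-two-range} to eliminate every competing vector.
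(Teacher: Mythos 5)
Your proposal is correct, and it takes a genuinely different route from the paper's proof in the crux of the argument. Both approaches strip the leading coordinate via \Cref{lemma-combination-reduce-dim} and then expand each truncated $\bm{\lambda}^{(i)}$ as a conic combination of $\cG_{L-1}$ vectors, but from there they diverge. The paper proves a separate and rather long proposition (its Proposition~2 in Appendix~H) asserting that not all truncated $\bm{\lambda}^{(i)}$ with $c_i>0$ can equal $\bm{\lambda}^{(j_0)}_{L-1}$, establishes this by a case analysis on the integers $\eta_i$ and explicit comparisons of $f_{\eta_{i_0}}$ and $f_{\eta_{i_0}+1}$, and uses it to show the aggregated $(L-1)$-dimensional conic combination has positive weight off $\bm{\lambda}^{(j_0)}_{L-1}$, hence can be rescaled into a redundancy at $L-1$. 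You instead observe that the inductive hypothesis at $L-1$ directly upgrades, by a one-line positivity argument, to \emph{uniqueness} of the conic representation of $\bigl(\bm{\lambda}^{(j_0)}_{L-1},\bm{f}(\bm{\lambda}^{(j_0)}_{L-1})\bigr)$; this forces every participating truncated tail $(\bm{\lambda}^{(i)}_{L-1},\bm{f}(\bm{\lambda}^{(i)}_{L-1}))$ with $c_i>0$ to be a scalar multiple $e_i$ of $\bigl(\bm{\lambda}^{(j_0)}_{L-1},\bm{f}(\bm{\lambda}^{(j_0)}_{L-1})\bigr)$. The $\cG_L$ recursion (two smallest nonzero entries both equal $1$) pins $e_i\in\{0,1\}$, and then the already-available Lemmas \ref{lemma-lambda(L-1)} and \ref{lemma-combination-of-two-range} (tight-lower-bound case) eliminate $e_i=0$, eliminate the $e_i=1,\ \lambda^{(i)}_1\neq 0$ subcase (it would force $\bm{\lambda}^{(i)}=\bm{\lambda}^{(i_0)}$), and leave only $\lambda^{(i)}_1=0$, which contradicts $\lambda^{(i_0)}_1\geq 1$ upon summing first coordinates. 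Your route entirely avoids the paper's Proposition~2 and its page-long proof, substituting a cleaner elimination built from lemmas the paper already needs elsewhere; the paper's route, conversely, never needs the uniqueness corollary and handles $\zeta<L$ and $\zeta=L$ uniformly, whereas you treat $\zeta<L$ by a separate (but easy) truncation argument. Both logical chains are sound, and yours is arguably shorter once the supporting lemmas are taken as given.
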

	\begin{proof}
		See Appendix \ref{section-proof-lemma-no-redundancy}.
	\end{proof}
	
	\medskip
	\Cref{thm-alter-region} gives a rate region $\cR^*$ that simplifies the characterization of the superposition region. The following theorem shows that there is no redundancy in the specification of $\cR^*$.
	\begin{theorem}\label{thm-no-redundancy}
		For the inequalities specifying $\cR^*$ in \eqref{def-cR*}, none of them is implied by the others.
	\end{theorem}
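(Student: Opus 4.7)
My plan is to argue by contradiction and reduce to \Cref{lemma-no-redundancy} using linear-programming duality together with the monotonicity and concavity of $f_\alpha$. Suppose the inequality indexed by $\bm{\lambda}^{(i_0)}$ in \eqref{def-cR*} were implied by the remaining inequalities in \eqref{def-cR*} together with the nonnegativity constraints $R_l\geq 0$. I would view each constraint as a homogeneous linear inequality in the joint variables $\bigl(R_1,\ldots,R_L,H(X_1),\ldots,H(X_L)\bigr)\geq \bm{0}$, so that Farkas' lemma yields nonnegative scalars $\{c_i:i\neq i_0\}$ and nonnegative slack vectors $\bm{d},\bm{e}\in\mathbb{R}_+^L$ with
\[
\sum_{i\neq i_0} c_i\,\bm{\lambda}^{(i)} + \bm{d} = \bm{\lambda}^{(i_0)}, \qquad \sum_{i\neq i_0} c_i\,\bm{f}(\bm{\lambda}^{(i)}) = \bm{f}(\bm{\lambda}^{(i_0)})+\bm{e};
\]
equivalently, $\sum_{i\neq i_0} c_i\bm{\lambda}^{(i)}\leq \bm{\lambda}^{(i_0)}$ componentwise and $\sum_{i\neq i_0} c_i f_\alpha(\bm{\lambda}^{(i)})\geq f_\alpha(\bm{\lambda}^{(i_0)})$ for each $\alpha\in\cL$.

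Next I would collapse both families of inequalities to equalities. The function $f_\alpha$ is monotone in $\bm{\lambda}$ --- every feasible $\alpha$-resolution for a smaller vector is feasible for a larger one --- so $f_\alpha\bigl(\sum_{i\neq i_0} c_i \bm{\lambda}^{(i)}\bigr)\leq f_\alpha(\bm{\lambda}^{(i_0)})$, while \Cref{lemma-concave-f} (concavity combined with positive homogeneity) yields $f_\alpha\bigl(\sum_{i\neq i_0} c_i \bm{\lambda}^{(i)}\bigr)\geq \sum_{i\neq i_0} c_i f_\alpha(\bm{\lambda}^{(i)})$. Chaining these two bounds with the assumed Farkas inequality sandwiches everything and forces $\sum_{i\neq i_0} c_i f_\alpha(\bm{\lambda}^{(i)}) = f_\alpha(\bm{\lambda}^{(i_0)})$ for every $\alpha$. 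To promote the componentwise inequality on the $\bm{\lambda}$-side to equality, I would specialize $\alpha=1$ and invoke $f_1(\bm{\lambda})=\sum_j\lambda_j$ from \eqref{f(1)}: the $\alpha=1$ equality reads $\sum_j\bigl(\sum_{i\neq i_0} c_i \bm{\lambda}^{(i)}\bigr)_j=\sum_j\lambda^{(i_0)}_j$, and combined with the componentwise $\leq$ this pins down $\sum_{i\neq i_0} c_i\bm{\lambda}^{(i)}=\bm{\lambda}^{(i_0)}$.

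Finally, setting $c_{i_0}=0$, the equalities just obtained assemble into
\[
\bigl(\bm{\lambda}^{(i_0)},\bm{f}(\bm{\lambda}^{(i_0)})\bigr)=\sum_{i=1}^{S_L} c_i\bigl(\bm{\lambda}^{(i)},\bm{f}(\bm{\lambda}^{(i)})\bigr),
\]
which directly contradicts \Cref{lemma-no-redundancy} and closes the argument. The step I expect to require the most care is the Farkas setup: the right viewpoint is to treat $H(X_1),\ldots,H(X_L)$ as free parameters taking any nonnegative values (not as fixed constants tied to one specific source), so that the duality produces the \emph{coordinatewise} domination of $\bm{f}(\bm{\lambda}^{(i_0)})$ by $\sum c_i\bm{f}(\bm{\lambda}^{(i)})$ rather than merely a single scalar inequality. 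Once that lifted viewpoint is in place, the monotonicity/concavity sandwich and the extraction of $\bm{\lambda}$-equality from $f_1$ are routine.
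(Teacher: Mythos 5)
Your proposal is correct, and it takes a route that is related to but genuinely distinct from the paper's. The paper formulates a primal LP that minimizes $\sum_l\lambda_l^{(i_0)}R_l$ subject only to the other inequalities of \eqref{def-cR*} with $\bm{R}$ left \emph{free} in sign; the dual then has \emph{equality} constraints $\sum_{i\neq i_0}c_i\lambda_l^{(i)}=\lambda_l^{(i_0)}$, so \Cref{lemma-concave-f} immediately yields $f_\alpha(\bm{\lambda}^{(i_0)})\geq\sum_i c_if_\alpha(\bm{\lambda}^{(i)})$, and strictness for some $\alpha$ is what produces the strict dual gap, with the all-equalities case ruled out by \Cref{lemma-no-redundancy}. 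You instead fold $R_l\geq 0$ and $H(X_\alpha)\geq 0$ into the hypotheses and apply Farkas' lemma to the homogeneous system in the joint variables $(\bm{R},\bm{H})$; this gives you only the \emph{componentwise inequality} $\sum_{i\neq i_0}c_i\bm{\lambda}^{(i)}\leq\bm{\lambda}^{(i_0)}$ together with $\sum_{i\neq i_0}c_if_\alpha(\bm{\lambda}^{(i)})\geq f_\alpha(\bm{\lambda}^{(i_0)})$, so you need an extra ingredient --- the monotonicity of $f_\alpha$ in $\bm{\lambda}$, which the paper's proof never invokes --- to close the sandwich and force both families to equalities before reducing to \Cref{lemma-no-redundancy}. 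What your version buys is that it proves non-implication even when the nonnegativity constraints $R_l\geq 0$ are allowed to help the derivation (the natural notion of non-redundancy for a region of nonnegative rate tuples), which is a stronger statement than non-implication over unrestricted $\bm{R}$; the price is the extra monotonicity step and the small bookkeeping to recover the vector equality from the $\alpha=1$ slot via $f_1(\bm{\lambda})=\sum_j\lambda_j$. Two minor remarks: the slack vectors you denote $\bm{d},\bm{e}$ should simply be required nonnegative (allowing $\bm{0}$), not drawn from $\mathbb{R}_+^L$ as defined in \eqref{definition-R-set}, which excludes the zero vector; and to apply \Cref{lemma-concave-f} to the multi-term sum $\sum_{i\neq i_0}c_i\bm{\lambda}^{(i)}$ you need the routine inductive extension of the two-term inequality, which is fine but worth stating, especially since $\sum_{i\neq i_0}c_i\bm{\lambda}^{(i)}$ could in principle be $\bm{0}$ (in which case everything collapses trivially since $f_\alpha\geq 0$).
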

	\begin{proof}
		For any $i_0\in\{1,2,\cdots,S_L\}$, consider the following linear program:
		\begin{align*}
		m_p=\text{min }&\sum_{l=1}^L\lambda_l^{(i_0)}R_l   \\
		\text{s.t. }&\sum_{l=1}^L\lambda_l^{(i)}R_l\geq \sum_{\alpha=1}^Lf_\alpha(\bm{\lambda}^{(i)})H(X_\alpha),  \\
		&\hspace{0.8cm}\forall~1\leq i\leq S_L,i\neq i_0. 
		\end{align*}
		To prove \Cref{thm-no-redundancy}, it suffices to show the following: for any $i_0\in\{1,2,\cdots,S_L\}$,
		\begin{equation*}
		\sum_{\alpha=1}^Lf(\bm{\lambda}^{(i_0)})H(X_\alpha)>m_p. 
		\end{equation*}
		By strong duality, $m_p=m_d$, where $m_d$ is the optimal value of the dual problem 
		\begin{align}
			m_d=\text{max }&\sum_{1\leq i\leq S_L,i\neq i_0}c_i\left(\sum_{\alpha=1}^Lf_{\alpha}(\bm{\lambda}^{(i)})H(X_{\alpha})\right)   \nonumber \\
			\text{s.t. }&\sum_{1\leq i\leq S_L,i\neq i_0} c_i\lambda_l^{(i)}=\lambda_l^{(i_0)}, \forall~l\in\cL   \label{no-redundancy-dual-cond}\\
							&c_i\geq 0, 1\leq i\leq S_L,i\neq i_0. \nonumber
		\end{align}
		Then it suffices to show that for any $i_0\in\{1,2,\cdots,S_L\}$,
		\begin{equation}
		\sum_{\alpha=1}^Lf(\bm{\lambda}^{(i_0)})H(X_\alpha)>m_d, \label{no-redundancy-larger-md}
		\end{equation}
		for all possible values of $H(X_{\alpha})$, $\alpha\in\cL$. 
		For notational simplicity, let $c_{i_0}=0$. 
		By \Cref{lemma-concave-f}, \eqref{no-redundancy-dual-cond} implies that 
		\begin{equation}
		f_{\alpha}(\bm{\lambda}^{(i_0)})\geq \sum_{i=1}^{S_L}c_i f_{\alpha}(\bm{\lambda}^{(i)}), \text{ for all }\alpha\in\cL.   \label{no-redundancy->=}
		\end{equation}
		Upon multiplying by $H(X_{\alpha})$ and summing over all $\alpha\in\cL$, we obtain
		\begin{equation*}
		\sum_{\alpha=1}^{L}f_{\alpha}(\bm{\lambda}^{(i_0)})H(X_{\alpha})\geq \sum_{\alpha=1}^{L}\left(\sum_{i=1}^{S_L}c_if_{\alpha}(\bm{\lambda}^{(i)})\right)H(X_{\alpha}),  
		\end{equation*}
		which is equivalent to \eqref{no-redundancy-larger-md} except that the inequality above is nonstrict. 
		Thus to prove \eqref{no-redundancy-larger-md}, we only need to show that there exists at least one $\alpha\in\cL$ such that 
		\begin{equation*}
		f_{\alpha}(\bm{\lambda}^{(i_0)})>\sum_{i=1}^{S_L}c_i f_{\alpha}(\bm{\lambda}^{(i)}).  
		\end{equation*}
		Assume the contrary is true, i.e. equality holds in \eqref{no-redundancy->=} for all $\alpha\in\cL$. Then this implies
		\begin{equation}
		\big(\bm{\lambda}^{(i_0)},\bm{f}(\bm{\lambda}^{(i_0)})\big)=\sum_{i=1}^{S_L}c_i \cdot\big(\bm{\lambda}^{(i)},\bm{f}(\bm{\lambda}^{(i)})\big),
		\end{equation}
		which is a contradiction to \Cref{lemma-no-redundancy}. This completes the proof of the theorem.
	\end{proof}

	\section{Checking the Achievability of A Rate Tuple}  \label{section-check-achievability}
	\subsection{Checking Achievability} \label{section-check-achievability-1}
	
	Given the superposition coding rate region $\cR_{\sup}$ characterized by the constraints in \eqref{def-R-sup1} and \eqref{def-R-sup2}, 
	it is readily seen that a rate tuple is achievable if and only if there exist nonnegative 
	variables $r_i^{\alpha}~(i,\alpha\in\cL)$ satisfying \eqref{def-R-sup1} and \eqref{def-R-sup2}. 
	Thus, we can check the achievability of a given rate tuple $\mathbf{R}$ by determining 
	whether there exists a set of {\it feasible solutions} $r_i^{\alpha}~(i,\alpha\in\cL)$ for the optimization problem: 
	\begin{eqnarray}
	&\min& 0    \nonumber \\
	&\text{s.t.}&\sum_{\alpha=1}^{L}r_i^{\alpha}=R_i,~~\forall i\in\cL   \label{LP-condition-1} \\
	&&\sum_{i\in\cU}r_i^{\alpha}\geq H(X_{\alpha}),~~\forall\cU\subseteq\cL,~|\cU|=\alpha  \nonumber  \\
	&&r_i^{\alpha}\geq 0,~1\leq i,\alpha\leq L.   \label{LP-condition-3} 
	\end{eqnarray}
	This can be easily achieved through the MATLAB linear programming function: $$\text{x = linprog(f,A,b,Aeq,beq)}.$$
	We have run numerical tests of the ``linprog" function on a notebook computer to determine the achievability of a given rate tuple for $L\leq 20$. 
	For $L=21$, the program runs out of memory, since the size of the constraint matrices ``A" and ``Aeq" become prohibitively large. 
	
	It is also natural to check the achievability of a given rate tuple by verifying the inequalities specifying $\cR_{\sup}$ in \Cref{thm-alter-region}. 
	Even though by \Cref{thm-no-redundancy} there is no redundancy in the set of inequalities in \eqref{def-cR*} that specifies $\cR^*$, 
	by taking advantage of the symmetry of the problem, we in fact do not need to check all these inequalities. 
	The next lemma identifies the subset of these inequalities we need to check. 
	As we will see, the number of such inequalities is significantly smaller than the total number of inequalities specifying~$\cR^*$. 
	
	For any $\bm{R}\in\mathbb{R}^L_+$ and any permutation $\omega$ on $\{1,2,\cdots,L\}$, similar to the definition of $\omega(\bm{\lambda})$ at the beginning of Section III, let  
	\begin{equation*}
	\omega(\bm{R})=\left(R_{\omega(1)},R_{\omega(2)},\cdots,R_{\omega(L)}\right).
	\end{equation*}
	Due to the symmetry of the problem, for any $\bm{\lambda}\in\cG_L$, the inequality
	\begin{equation*}
	\sum_{l=1}^L\lambda_lR_l\geq \sum_{\alpha=1}^L f_{\alpha}(\bm{\lambda})H(X_{\alpha})
	\end{equation*}
	implies the inequality
	\begin{equation*}
	\sum_{l=1}^L\lambda_{\omega(l)}R_{\omega(l)}\geq \sum_{\alpha=1}^L f_{\alpha}\big(\omega(\bm{\lambda})\big)H(X_{\alpha}),
	\end{equation*}
	and vice versa. Thus, $\bm{R}$ is achievable if and only if $\omega(\bm{R})$ is achievable for all $\omega$. As such, we only need to consider rate tuples $\bm{R}\in\mathbb{R}^L_+$ such that
	\begin{equation}
	R_1\leq R_2\leq \cdots\leq R_L.  \label{def-order-R}
	\end{equation}
	
	\begin{lemma}  \label{lemma-min-sum-multiplication}
		For any nonnegative rate tuple $\bm{R}$ such that \eqref{def-order-R} is satisfied and any $\bm{\lambda}\in\mathbb{R}_+^L$, we have
		\begin{equation*}
		\sum_{i=1}^L\lambda_iR_i\geq \sum_{i=1}^L\lambda_{\pi(i)}R_i. 
		\end{equation*}
	\end{lemma}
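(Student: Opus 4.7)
The inequality is precisely the rearrangement inequality applied to the sequences $(R_1,\ldots,R_L)$ and $(\lambda_1,\ldots,\lambda_L)$: since $R_1\leq R_2\leq\cdots\leq R_L$ is sorted in increasing order, the sum $\sum_i \lambda_{\sigma(i)} R_i$ over all permutations $\sigma$ of $\{1,\ldots,L\}$ is minimized exactly when the $\lambda$'s are listed in decreasing order, i.e., when $\sigma=\pi$. The claimed inequality then says that $\sum_i \lambda_i R_i$ (corresponding to an arbitrary ordering) is no smaller than this minimum.

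The plan is to prove this by the standard adjacent-swap argument. The key computation is the following: for any two indices $i<j$ with $\lambda_i<\lambda_j$, compare the contribution of the pair before and after swapping $\lambda_i$ and $\lambda_j$,
\begin{equation*}
(\lambda_i R_i+\lambda_j R_j)-(\lambda_j R_i+\lambda_i R_j)=(\lambda_j-\lambda_i)(R_j-R_i)\geq 0,
\end{equation*}
where the inequality uses $R_i\leq R_j$ by \eqref{def-order-R}. Thus swapping the values $\lambda_i$ and $\lambda_j$ in this situation never increases the sum $\sum_l \lambda_l R_l$.

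Starting from the ordering $(\lambda_1,\ldots,\lambda_L)$, I would apply a finite sequence of such swaps (for example, bubble sort on the $\lambda$ coordinates) to transform $(\lambda_1,\ldots,\lambda_L)$ into the decreasing rearrangement $(\lambda_{\pi(1)},\ldots,\lambda_{\pi(L)})$. Every swap in the sorting process exchanges a pair with $\lambda_i<\lambda_j$ at positions $i<j$, so by the computation above each swap weakly decreases the sum $\sum_l \lambda_l R_l$. Chaining these inequalities gives
\begin{equation*}
\sum_{i=1}^L \lambda_i R_i \geq \sum_{i=1}^L \lambda_{\pi(i)} R_i,
\end{equation*}
as desired. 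There is no real obstacle here; the only minor subtlety is that $\pi$ need not be unique when $\bm{\lambda}$ has repeated entries, but the right-hand side is independent of the choice of $\pi$ because ties among the $\lambda$'s contribute the same total regardless of how they are broken. One short remark sufficing to handle this case can be included for completeness.
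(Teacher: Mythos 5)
Your proof is correct and follows essentially the same approach as the paper: both establish the rearrangement inequality by a sequence of transpositions that each weakly decrease the sum, relying on the same key identity $(\lambda_j-\lambda_i)(R_j-R_i)\geq 0$. The only cosmetic difference is that the paper organizes the swaps in a selection-sort style (repeatedly fixing the smallest index where the current permutation disagrees with $\pi$), whereas you sketch bubble sort; this does not affect the substance of the argument.
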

	\begin{remark}
		The inequality in \Cref{lemma-min-sum-multiplication} is sometimes called the {\it rearrangement inequality} \cite[Chapter 5]{rearrange-inequality}.
	\end{remark}
	\begin{proof}
		See Appendix \ref{section-proof-lemma-min-sum-multiplication}.
	\end{proof}
	
	\medskip
	From \Cref{lemma-indep-order}, we can see that RHS of the inequality in \eqref{def-cR*} does not change with $\bm{\lambda}$ replaced by $\pi(\bm{\lambda})$. Thus, in order to check the achievability of a given rate tuple (assume satisfying \eqref{def-order-R}), by \Cref{lemma-min-sum-multiplication}, we only need to check those inequalities for which the coefficients are in descending order, i.e.
	\begin{equation*}
	\lambda_1\geq \lambda_2\geq \cdots\geq\lambda_L.
	\end{equation*}
	All the other inequalities are redundant for this rate tuple. Then, the number of inequalities we need to check is only $S_L^0$ (cf. \eqref{def-S_L-0}), which is bounded in the following theorem.
	
	\begin{theorem} \label{thm-polynomial-time}
		$2^{L-1}\leq S_L^0\leq L!$.
	\end{theorem}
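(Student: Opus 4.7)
\smallskip\noindent\emph{Proof plan.} The plan is to stratify $\cG_L^0$ by the index $\zeta\in\{1,\ldots,L\}$ of the last nonzero coordinate and then bound the size of each stratum using the recursive generation procedure given in \eqref{def-cG_L}--\eqref{def-cG_L-lambda(j+1)}. Let $b_\zeta$ denote the number of ordered vectors in $\cG_\zeta^0$ all of whose coordinates are nonzero. Every element of $\cG_L^0$ with exactly $\zeta$ nonzero coordinates is obtained by appending $L-\zeta$ zeros to such a fully nonzero length-$\zeta$ vector, so
\begin{equation*}
S_L^0 = \sum_{\zeta=1}^{L} b_\zeta,
\end{equation*}
and the problem reduces to bounding $b_\zeta$ from above and below.

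For the upper bound, I would walk through the recursion for $b_\zeta$ from $j=\zeta-1$ down to $j=1$. At each step $\lambda_{\pi(j)}$ is chosen from the $\theta_{j+1}+1$ values listed in \eqref{def-cG_L}, and applying \eqref{range-theta(j)-2} to $\theta_{j+1}$ gives $\theta_{j+1}\leq \zeta-j-1$, so step $j$ contributes at most $\zeta-j$ choices. Multiplying over $j$ yields $b_\zeta \leq \prod_{j=1}^{\zeta-1}(\zeta-j) = (\zeta-1)!$, and summing over $\zeta$,
\begin{equation*}
S_L^0 \leq \sum_{\zeta=1}^{L}(\zeta-1)! \leq L\cdot(L-1)! = L!.
\end{equation*}

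For the lower bound, I would invoke the complementary estimate $\theta_{j+1}\geq 1$ from \eqref{range-theta(j)-1}, which is available for every $j+1\in\{2,\ldots,\zeta-1\}$. Each of the $\zeta-2$ steps $j=1,\ldots,\zeta-2$ therefore offers at least $2$ choices (while the final step $j=\zeta-1$ is forced to a single value), giving $b_\zeta\geq 2^{\zeta-2}$ for $\zeta\geq 2$ together with $b_1=1$. Consequently,
\begin{equation*}
S_L^0 \geq 1 + \sum_{\zeta=2}^{L} 2^{\zeta-2} = 2^{L-1}.
\end{equation*}

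The only bookkeeping item to verify is that distinct sequences of choices produce distinct ordered vectors, but this is immediate from \eqref{def-cG_L-lambda(j+1)}, which expresses each $\theta_{j+1}$ as a function of $\lambda_{\pi(j+1)}$: the choice sequence can be read off any vector in $\cG_\zeta^0$ unambiguously. I do not anticipate a serious technical obstacle, as both bounds come down to straightforward term-by-term products over the recursion unfolded from the definition of $\cG_L^0$.
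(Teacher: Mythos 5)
Your proof is correct and follows essentially the same route as the paper's: your $b_\zeta$ is exactly the paper's $D_k$ (the count of fully nonzero ordered vectors), the decomposition $S_L^0=\sum_\zeta b_\zeta$ is the paper's telescoping $S_L^0=S_1^0+\sum_k D_k$, and both bounds $2^{\zeta-2}\leq b_\zeta\leq(\zeta-1)!$ come from the same term-by-term count of $\theta_{j+1}+1$ choices in the recursion, using \eqref{range-theta(j)-1}--\eqref{range-theta(j)-2} just as the paper does.
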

	\begin{remark}
		We will see from the proof that both the above inequalities become strict for $L\geq 3$. 
	\end{remark}
	\begin{remark}
		On a notebook computer, it took about 8 days to list all the $S_L^0$ inequalities for all $L\leq 15$. For $L=16$, the computation involved appears to be prohibitive.
	\end{remark}
	\begin{proof}
		We can see from Appendix \ref{section-table} that $S_1^0=1$ and $S_2^0=2$. It is easy to check that the theorem is true for $L=1$ and $L=2$.
		
		For $L\geq 2$, let $\bm{\lambda}=(\lambda_1,\lambda_2, \cdots,\lambda_L)$ and $\cG_L^*=\left\{\bm{\lambda}\in\cG_{L}^0:\lambda_L=0\right\}$. For any $\bm{\lambda}\in\cG_L^*$, it is easy to check from \eqref{def-cG_L} that $\left(\lambda_1,\lambda_2,\cdots,\lambda_{L-1}\right)\in\cG_{L-1}^0$. On the other hand, for any $\left(\lambda_1, \lambda_2\right.$, $\left.\cdots, \lambda_{L-1}\right)\in\cG_{L-1}^0$, we have $\left(\lambda_1,\lambda_2,\cdots,\lambda_{L-1},0\right)\in\cG_L^*$. Thus, there is a one-to-one correspondence between $\cG_L^*$ and $\cG_{L-1}^0$, which implies that
		\begin{equation}
		|\cG_L^*|=S_{L-1}^0.   \label{pf-thm-polynomial-time-cardinality-1}
		\end{equation}
		For $k\geq2$, let $D_k=|\cG_k^0\backslash\cG_k^*|$. By \eqref{pf-thm-polynomial-time-cardinality-1} and the fact that $\cG_k^*\subseteq \cG_k^0$, we have
		\begin{equation*}
		D_k=S_k^0-S_{k-1}^0,
		\end{equation*}
		which implies that
		\begin{equation*}
		S_L^0=S_1^0+\sum_{k=2}^L D_k.   \label{pf-thm-polynomial-S(L)-1}
		\end{equation*}
		Now, we only need to calculate $D_k$ for $k\geq 2$. For any $\left(\lambda_{L-k+1},\lambda_{L-k+2},\cdots,\lambda_L\right)\in\cG_k^0\backslash\cG_k^*$, where $\lambda_L=1$, we can see from \eqref{def-cG_L} that $\left(\lambda_{L-k+2},\lambda_{L-k+3},\cdots,\lambda_L\right)\in\cG_{k-1}^0\backslash\cG_{k-1}^*$ by construction. Thus, all $\left(\lambda_{L-k+1},\lambda_{L-k+2},\cdots,\lambda_L\right)\in\cG_k^0\backslash\cG_k^*$ can be generated from $\left(\lambda_{L-k+2},\lambda_{L-k+3},\cdots,\lambda_L\right)\in\cG_{k-1}^0\backslash\cG_{k-1}^*$ with a proper choice of $\lambda_{L-k+1}$. Since $\lambda_L=1$, we have $\zeta=L$. Recall from \eqref{def-cG_L-lambda(j+1)} that $\theta_{L}=0$ and for $j=L-1,L-2,\cdots,L-k+1$, $\theta_{j}$ is the integer such that
		\begin{equation}
		\lambda_{j}=\frac{1}{\theta_{j}}\sum_{i=j+1}^L\lambda_{i}.  \label{pf-thm-polynomial-def-theta(j)}
		\end{equation}
		According to \eqref{pf-thm-polynomial-def-theta(j)}, the $k$-vector $\left(\lambda_{L-k+1},\lambda_{L-k+2}, \cdots,\lambda_L\right)\in\cG_{k}^0\backslash\cG_{k}^*$ is uniquely determined by the tuple $(\theta_{L-k+1},$ $\cdots,\theta_{L-1},\theta_{L})$. Thus $D_k$ is equal to the cardinality of the set
		\begin{align*}
		\Theta_k=&\big\{(\theta_{L-k+1},\cdots,\theta_{L-1},\theta_{L}):1\leq \theta_{j}\leq \theta_{j+1}+1  \\
		&\quad \text{ for }j=L-1,L-2,\cdots,L-k+1\big\}.  
		\end{align*}
		By straightforward counting, we can obtain
		\begin{equation}
		D_k=|\Theta_k|=\sum_{\theta_L=0}^{0}~\sum_{\theta_{L-1}=1}^{\theta_L+1}~\sum_{\theta_{L-2}=1}^{\theta_{L-1}+1}\cdots \sum_{\theta_{L-k+1}=1}^{\theta_{L-k+2}+1}~1. \label{pf-thm-polynomial-D(k)-induction}
		\end{equation}
		
		Now we bound $D_k$ according to \eqref{pf-thm-polynomial-D(k)-induction}. Observe that $\theta_L=0$ and $\theta_{L-1}=1$ always hold. Then for $k\geq 3$, \eqref{pf-thm-polynomial-D(k)-induction} can be rewritten as 
		\begin{equation*}
		D_k=\sum_{\theta_{L-2}=1}^{2}~\sum_{\theta_{L-3}=1}^{\theta_{L-2}+1}\cdots \sum_{\theta_{L-k+1}=1}^{\theta_{L-k+2}+1}~1.
		\end{equation*}
		Let 
		\begin{equation*}
		D_k^{(1)}=\sum_{\theta_{L-2}=1}^{2}~\sum_{\theta_{L-3}=1}^{2}\cdots \sum_{\theta_{L-k+1}=1}^{2}~1  
		\end{equation*}
		and 
		\begin{equation*}
		D_k^{(2)}=\sum_{\theta_{L-2}=1}^{2}~\sum_{\theta_{L-3}=1}^{3}\cdots \sum_{\theta_{L-k+1}=1}^{k-1}~1.   
		\end{equation*}
		From \eqref{range-theta(j)-2}, it is easy to check that 
		\begin{equation}
		D_k^{(1)}\leq D_k\leq D_k^{(2)},  \label{pf-thm-polynomial-D(k)}
		\end{equation}
		and we have
		\begin{equation*}
		D_k^{(1)}=2^{k-2}
		\end{equation*}
		and 
		\begin{equation*}
		D_k^{(2)}=(k-1)!~.
		\end{equation*}
		Thus, for $L\geq 3$, we have
		\begin{equation*}
		\sum_{k=3}^L D_k^{(1)}=\sum_{k=3}^L 2^{k-2}=2^{L-1}-2
		\end{equation*}
		and
		\begin{eqnarray}
		\sum_{k=3}^L D_k^{(2)}&=&\sum_{k=3}^L (k-1)!  \nonumber \\
		&\leq &(L-1)!\times (L-2)  \nonumber \\
		&\leq &L!-2.  \nonumber 
		\end{eqnarray}
		Then, by \eqref{pf-thm-polynomial-S(L)-1}, \eqref{pf-thm-polynomial-D(k)}, and the fact that $S_1^0=1$ and $D_2=1$, we have for $L\geq 3$ that
		\begin{equation*}
		2^{L-1}\leq S_L^0\leq L!~.
		\end{equation*}
		This proves the theorem.
	\end{proof}
	
	\subsection{Comparison of complexity}  \label{section-check-achievability-2}
	In this section, we compare the complexity of checking the achievability of a given rate tuple through different methods described in the last section. 
	In \Cref{fig-log-compare}, we compare the program running time of checking the LP feasibility using the MATLAB ``linprog" function
	and that of checking the achievability of a rate tuple through the inequalities with ordered coefficients. 
	\begin{figure}[!h]
		\centering
		\begin{tikzpicture}[scale=0.25,font=\scriptsize]
		\coordinate (P3) at (3, -4.059);
		\coordinate (P4) at (4, -4.059);
		\coordinate (P5) at (5, -4.059);
		\coordinate (P6) at (6, -4.059);
		\coordinate (P7) at (7, -3.837);
		\coordinate (P8) at (8, -2.737);
		\coordinate (P9) at (9, -1.515);
		\coordinate (P10) at (10, 0.536);
		\coordinate (P11) at (11, 4.322);
		\coordinate (P12) at (12, 8.205);
		\coordinate (P13) at (13, 11.983);
		\coordinate (P14) at (14, 15.757);
		\coordinate (P15) at (15, 19.556);
		
		\coordinate (Q3) at (3, -4.059);
		\coordinate (Q4) at (4, -3.837);
		\coordinate (Q5) at (5, -3.837);
		\coordinate (Q6) at (6, -3.644);
		\coordinate (Q7) at (7, -3.644);
		\coordinate (Q8) at (8, -2.737);
		\coordinate (Q9) at (9, -2.556);
		\coordinate (Q10) at (10, -2.474);
		\coordinate (Q11) at (11, -2.322);
		\coordinate (Q12) at (12, -1.690);
		\coordinate (Q13) at (13, -0.971);
		\coordinate (Q14) at (14, 0.214);
		\coordinate (Q15) at (15, 1.379);
		\coordinate (Q16) at (16, 2.392);
		\coordinate (Q17) at (17, 3.858);
		\coordinate (Q18) at (18, 5.585);
		\coordinate (Q19) at (19, 10.073);
		\coordinate (Q20) at (20, 12.295);
		
		\foreach \i in {3,...,15}
		\fill [opacity=.8] (P\i) circle (3.5pt);
		\draw[blue] (P3)--(P4)--(P5)--(P6)--(P7)--(P8)--(P9)--(P10)--(P11)--(P12)--(P13)--(P14)--(P15); 
		
		\foreach \i in {3,...,20}
		\fill [opacity=.8] (Q\i) circle (3.5pt);
		\draw[orange] (Q3)--(Q4)--(Q5)--(Q6)--(Q7)--(Q8)--(Q9)--(Q10)--(Q11)--(Q12)--(Q13)--(Q14)--(Q15)--(Q16)--(Q17)--(Q18)--(Q19)--(Q20);
		
		\foreach \x [count=\i from 3] in {3,5,...,19} {\draw [opacity=0.1] (\x,-5) -- (\x,-4.8); \node[below] at (\x,-5) {\x};}
		
		\node [right] at (8,17) {\color{blue}{$\log_2{T_1}$: via inequalities}};
		\node [right] at (13,8) {\color{orange}{$\log_2{T_2}$: via LP feasibility}};
		
		\draw [->,opacity=0.1] (0,-5)--(20,-5);
		\draw [->,opacity=0.1] (0,-5)--(0,20);
		\node [right] at (20,-5) {$L$};
		\node [right] at (0,20) {$\log_2 T$};
		\end{tikzpicture}
		\caption{\footnotesize logarithm of running time}
		\label{fig-log-compare}
	\end{figure}
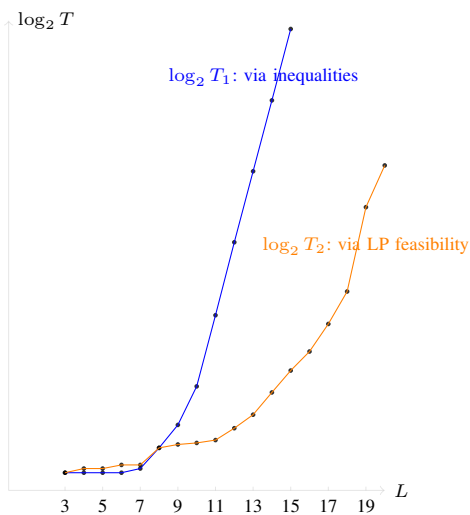
	We can see from the figure that checking the LP feasibility uses much less time than checking the achievability of a rate tuple through the inequalities with ordered coefficients. 
	We also observe that the running time of checking the inequalities with ordered coefficients grows exponentially with $L$  for $L\geq 10$, 
	even though it was shown in \Cref{thm-polynomial-time} that the number of such inequalities may grow at a rate higher than exponential in $L$. 
	
	The time of listing the inequalities with ordered coefficients and the time of constructing 
	the parameters of ``linprog" are involved in the comparison in \Cref{fig-log-compare}. 
	If we want to check the achievability of a large number of rate tuples, the more efficient way is to 
	save the inequalities with ordered coefficients and the parameters of ``linprog" in advance. 
	Then we can use the ``load" function in MATLAB to invoke these data when checking the achievability of rate tuples. 
	In this case, we start counting the program running time right after the ``load" function, and we call this the {\it pure running time}.
	In \Cref{fig-log-compare-pure}, we compare the pure running time of checking the LP feasibility using ``linprog" function
	and that of checking the achievability of a rate tuple through the inequalities with ordered coefficients. 
	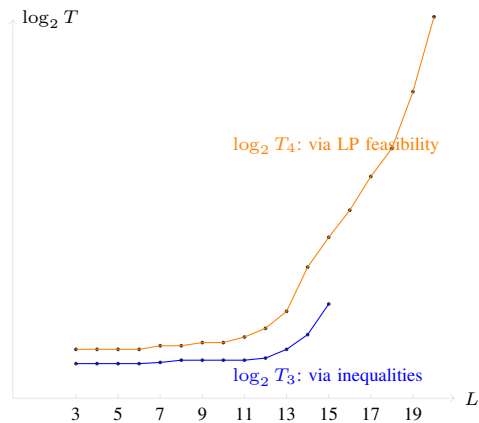
\begin{figure}[!h]
		\centering
		\begin{tikzpicture}[scale=0.28,font=\scriptsize]
		\coordinate (P3') at (3, 4.644);
		\coordinate (P4') at (4, 4.644);
		\coordinate (P5') at (5, 4.644);
		\coordinate (P6') at (6, 4.644);
		\coordinate (P7') at (7, 4.700);
		\coordinate (P8') at (8, 4.807);
		\coordinate (P9') at (9, 4.807);
		\coordinate (P10') at (10, 4.807);
		\coordinate (P11') at (11, 4.807);
		\coordinate (P12') at (12, 4.907);
		\coordinate (P13') at (13, 5.322);
		\coordinate (P14') at (14, 6.022);
		\coordinate (P15') at (15, 7.476);

		\coordinate (Q3') at (3, 5.322);
		\coordinate (Q4') at (4, 5.322);
		\coordinate (Q5') at (5, 5.322);
		\coordinate (Q6') at (6, 5.322);
		\coordinate (Q7') at (7, 5.492);
		\coordinate (Q8') at (8, 5.492);
		\coordinate (Q9') at (9, 5.644);
		\coordinate (Q10') at (10, 5.644);
		\coordinate (Q11') at (11, 5.907);
		\coordinate (Q12') at (12, 6.322);
		\coordinate (Q13') at (13, 7.129);
		\coordinate (Q14') at (14, 9.229);
		\coordinate (Q15') at (15, 10.644);
		\coordinate (Q16') at (16, 11.933);
		\coordinate (Q17') at (17, 13.535);
		\coordinate (Q18') at (18, 14.869);
		\coordinate (Q19') at (19, 17.566);
		\coordinate (Q20') at (20, 21.121);
		
		\foreach \i in {3,...,15}
		\fill [opacity=.8] (P\i') circle (2.5pt);
		\draw[blue] (P3')--(P4')--(P5')--(P6')--(P7')--(P8')--(P9')--(P10')--(P11')--(P12')--(P13')--(P14')--(P15');
		
		\foreach \i in {3,...,20}
		\fill [opacity=.8] (Q\i') circle (2.5pt);
		\draw[orange] (Q3')--(Q4')--(Q5')--(Q6')--(Q7')--(Q8')--(Q9')--(Q10')--(Q11')--(Q12')--(Q13')--(Q14')--(Q15')--(Q16')--(Q17')--(Q18')--(Q19')--(Q20');
		
		\foreach \x [count=\i from 3] in {3,5,...,19} {\draw [opacity=0.1] (\x,3) -- (\x,3.2); \node[below] at (\x,3) {\x};}
		
		\node [right] at (10,4) {\color{blue}{$\log_2{T_3}$: via inequalities}};
		\node [right] at (10,15) {\color{orange}{$\log_2{T_4}$: via LP feasibility}};
		
		\draw [->,opacity=0.1] (0,3)--(21,3);
		\draw [->,opacity=0.1] (0,3)--(0,21);
		\node [right] at (21,3) {$L$};
		\node [right] at (0,21) {$\log_2 T$};
		\end{tikzpicture}
		\caption{\footnotesize logarithm of pure running time}
		\label{fig-log-compare-pure}
	\end{figure}
	We see from the figure that checking the achievability of a rate tuple through inequalities in turn uses much less time than checking the LP feasibility. 
	This is not surprising because the time for checking the achievability through inequalities is mainly spent on listing these inequalities. 
	
	The bottleneck of checking the LP feasibility lies in that the parameters of the ``linprog" function 
	(i.e., the coefficients of the LP conditions \eqref{LP-condition-1}-\eqref{LP-condition-3}) use a mass of memory, 
	which exceeds the capacity of the hard disk on the notebook computer for $L\geq 21$. 
	The bottleneck of checking the achievability through the inequalities with ordered coefficients lies in that 
	the complexity of listing these inequalities grows exponentially with $L$, which becomes unmanageable for $L\geq 15$. 
	
	\section{Subset Entropy Inequality}  \label{section-subset-inequality}
	In \cite{yeung99}, the proof of the optimality of superposition coding was established through a subset entropy inequality, namely Theorem 3 therein. As we will point out, this subset entropy inequality is in fact a generalization of Han's inequality \cite{Han-inequality-78}. The proof of Theorem 3 in \cite{yeung99}, however, is painstaking. In this section, we present a weaker version of this theorem, namely \Cref{thm-subset-entropy-inequality} below, whose proof is considerably simpler. With our explicit characterization of $\cR_{\sup}$ in \Cref{thm-alter-region}, \Cref{thm-subset-entropy-inequality} is sufficient for proving the optimality of superposition coding. 
	
	\begin{theorem}[Subset entropy inequality]  \label{thm-subset-entropy-inequality}
		Let $L\geq2$ and for any $\bm{u}\in\{0,1\}^L$, let $H_{\bm{u}}=H\big(W_i:u_i=1\big)$. For any $\bm{\lambda}\in\cG_L$, there exists $\{c_{\alpha}(\bm{u})\},~\alpha\in\cL$, where $\{c_{\alpha}(\bm{u})\}$ is an optimal $\alpha$-resolution for $\bm{\lambda}$, such that
		\begin{equation}
		\sum_{\bm{u}\in\Omega_L^{\alpha-1}}c_{\alpha-1}(\bm{u})H_{\bm{u}}\geq \sum_{\bm{u}\in\Omega_L^{\alpha}}c_{\alpha}(\bm{u})H_{\bm{u}}  \label{subset-entropy-inequality}
		\end{equation}
		for all $\alpha=2,3,\cdots,L$.
	\end{theorem}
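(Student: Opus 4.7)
The plan is to induct on $L$, exploiting the recursive structure of $\cG_L^0$ to build, for each $\bm{\lambda} \in \cG_L$, an entire family of optimal $\alpha$-resolutions $\{c_\alpha(\bm{u})\}_{\alpha \in \cL}$ that jointly satisfies \eqref{subset-entropy-inequality}. By \Cref{lemma-indep-order}, if $\bm{\lambda} = \omega(\tilde{\bm{\lambda}})$ for ordered $\tilde{\bm{\lambda}} \in \cG_L^0$, any optimal resolution for $\tilde{\bm{\lambda}}$ transports to one for $\bm{\lambda}$ via $\omega$, and \eqref{subset-entropy-inequality} is covariant under the simultaneous relabeling $W_i \leftrightarrow W_{\omega(i)}$. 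It therefore suffices to treat ordered $\bm{\lambda} \in \cG_L^0$; the base cases $L \leq 1$ and $\bm{\lambda} = \bm{\lambda}^{[1]}$ are trivial because $f_\alpha \equiv 0$ for $\alpha \geq 2$.

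For the inductive step, write $\bm{\lambda} = (\lambda_1, \bm{\lambda}^{(0)})$ with $\bm{\lambda}^{(0)} = (\lambda_2, \ldots, \lambda_L)$. The recursive definition \eqref{def-cG_L}--\eqref{def-cG_L-lambda(j+1)} forces $\lambda_1 = S_2/k$ for some $k \in \{1, \ldots, \theta_2+1\}$ with $S_2 = \sum_{i\geq 2}\lambda_i$, and the truncated tail lies in $\cG_{L-1}^0$ with shifted parameters $\theta'_j = \theta_{j+1}$. \Cref{lemma-f-lambda1} and its ensuing remark then give
\begin{equation*}
f_\alpha(\bm{\lambda}) = \tfrac{1}{\alpha}\textstyle\sum_{i=1}^L \lambda_i \ \text{for}\ 1 \leq \alpha \leq k+1, \qquad f_\alpha(\bm{\lambda}) = f_{\alpha-1}(\bm{\lambda}^{(0)}) \ \text{for}\ k+1 \leq \alpha \leq L,
\end{equation*}
with the two branches agreeing at $\alpha = k+1$. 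By the inductive hypothesis, fix a family $\{c'_{\alpha'}\}$ of optimal $\alpha'$-resolutions for $\bm{\lambda}^{(0)}$ satisfying the subset entropy inequality against any joint distribution on $W_2, \ldots, W_L$. For $\alpha \geq k+2$ set $c_\alpha((1,\bm{u})) := c'_{\alpha-1}(\bm{u})$ and $c_\alpha(\bm{v}) := 0$ otherwise; optimality is immediate from $\lambda_1 = S_2/k \geq S_2/(\alpha-1) \geq f_{\alpha-1}(\bm{\lambda}^{(0)})$. For $\alpha \leq k+1$, $\bm{\lambda}$ lies in the perfect-resolution regime of \Cref{lemma-perfect-resolution}, and we take $c_\alpha$ to be an explicit symmetrization that distributes coordinate $1$ over all positions, whose specific form is dictated by the rigid ratios among $\lambda_1,\ldots,\lambda_L$ forced by membership in $\cG_L^0$.

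Verification of \eqref{subset-entropy-inequality} splits by range. For $\alpha \geq k+3$, both $c_\alpha$ and $c_{\alpha-1}$ are lifts, and expanding $H(W_1, W_{\bm{u}}) = H(W_1) + H(W_{\bm{u}} \mid W_1)$ yields
\begin{equation*}
\textstyle\sum c_{\alpha-1}(\bm{v}) H_{\bm{v}} - \sum c_\alpha(\bm{v}) H_{\bm{v}} = H(W_1)\bigl[f_{\alpha-2}(\bm{\lambda}^{(0)}) - f_{\alpha-1}(\bm{\lambda}^{(0)})\bigr] + \bigl[\sum c'_{\alpha-2}(\bm{u}) H(W_{\bm{u}}\mid W_1) - \sum c'_{\alpha-1}(\bm{u}) H(W_{\bm{u}}\mid W_1)\bigr].
\end{equation*}
The first bracket is nonnegative because $f_\alpha$ is non-increasing in $\alpha$ (obtained by dropping one coordinate from each group of an optimal resolution), and the second is nonnegative upon applying the inductive hypothesis to the conditional variables $(W_2 \mid W_1), \ldots, (W_L \mid W_1)$---legitimate because the subset entropy inequality uses only submodularity and the chain rule of entropy, both preserved under conditioning. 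For $2 \leq \alpha \leq k+1$ both resolutions are full and the desired inequality reduces to a weighted form of Han's inequality \cite{Han-inequality-78} applied to the active coordinates. The remaining transition case $\alpha = k+2$, where $c_{\alpha-1}$ is full and $c_\alpha$ is lifted, is a hybrid Han-type inequality that pins down the precise symmetrization used in defining $c_{k+1}$.

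The principal obstacle is the explicit construction of the full resolutions $c_\alpha$ for $\alpha \leq k+1$, together with the accompanying verification at the transition $\alpha = k+2$, in a way that is jointly consistent for all $\alpha \in \cL$. The leverage supplied by $\bm{\lambda} \in \cG_L^0$ is the rigid geometry of the set: each $\lambda_j$ equals $S_{j+1}/\theta_j$ for an integer $\theta_j$ taking only finitely many values, which should let us decompose $c_\alpha$ as a nonnegative combination of a small number of level-uniform resolutions, each of which satisfies \eqref{subset-entropy-inequality} by a direct application of Han's inequality to an appropriate subset of $\{W_1, \ldots, W_L\}$. It is precisely this finiteness and rigidity---absent when $\bm{\lambda}$ ranges over all of $\mathbb{R}_+^L$ as in Yeung--Zhang's Theorem 3---that makes the weaker version considered here tractable by an elementary induction rather than their highly technical majorization argument.
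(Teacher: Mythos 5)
Your proposal follows the same inductive architecture as the paper's proof (reduce to $\cG_L^0$, induct on $L$, lift resolutions from $L-1$ for large $\alpha$, treat small $\alpha$ separately), but it does not close. You acknowledge the gap yourself: the ``explicit construction of the full resolutions $c_\alpha$ for $\alpha \leq k+1$, together with the accompanying verification at the transition $\alpha = k+2$'' is left undone, and the concluding paragraph about decomposing $c_\alpha$ ``as a nonnegative combination of a small number of level-uniform resolutions'' is a conjecture, not an argument. This is exactly where the paper's proof invokes an external resource that you do not have: for $\alpha \leq \theta+1$ (where $\lambda_1 \leq \frac{1}{\alpha-1}\sum_{i\geq 2}\lambda_i$), the paper appeals to the first case of the proof of Proposition~1 in \cite{yeung99}, which supplies a top-down passage from any optimal $\alpha$-resolution to an optimal $(\alpha-1)$-resolution preserving the entropy inequality. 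Crucially, the paper initiates that descent from $\alpha = \theta+1$ using the same lifted resolution built in the high-$\alpha$ range, so there is no transition mismatch to patch; your variant lifts only for $\alpha \geq \theta+2$ and therefore creates an extra seam at $\alpha = \theta+2$ versus $\alpha = \theta+1$ that you also cannot close.

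Two further remarks on the part you do carry out. First, the appeal to ``the inductive hypothesis applied to the conditional variables $(W_2 \mid W_1), \ldots, (W_L \mid W_1)$'' is not well-formed as stated: the IH is a statement about joint entropies of random variables, and there is no random variable ``$W_2 \mid W_1$.'' The clean move, and the one that matches the paper's citation of ``the steps leading to (48) in \cite{yeung99},'' is to apply the $(L-1)$-level IH to the tuple $\tilde{W}_i := (W_1, W_i)$, $i = 2, \ldots, L$, obtaining a family $\{\tilde c'_\alpha\}$ with $\sum \tilde c'_{\alpha-2}(\bm{u}) H(W_1, W_{\bm{u}}) \geq \sum \tilde c'_{\alpha-1}(\bm{u}) H(W_1, W_{\bm{u}})$; lifting this family gives the inequality directly, with no need for your two-bracket decomposition, the monotonicity of $f_\alpha$, or a polymatroid-extension claim. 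Second, your premise that the IH yields a single family of resolutions ``satisfying the subset entropy inequality against any joint distribution'' overreaches the stated theorem, in which the $W_i$ are fixed before the family is chosen; this universality is not established and is not actually needed once the IH is applied to $\tilde W$ as above. In short, the large-$\alpha$ half of your proof is salvageable (though presented imprecisely), but the small-$\alpha$ half is a genuine missing piece that the paper fills by leaning on the machinery of \cite{yeung99}.
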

	\begin{remark}
		Theorem 3 in \cite{yeung99} is the same as \Cref{thm-subset-entropy-inequality} above except that the former is for all $\bm{\lambda}\in\mathbb{R}_+^L$. By the explicit characterization of $\cR_{\sup}$ in \Cref{thm-alter-region}, namely $\cR^*$, \Cref{thm-subset-entropy-inequality} is sufficient for proving the tightness of~$\cR^*$.
	\end{remark}
	\begin{remark}
		For $\alpha\in\cL$ and $\bm{u}\in\Omega_L^{\alpha}$, let
		\begin{equation*}
		\tilde{c}_{\alpha}(\bm{u})=\frac{1}{\binom{L-1}{\alpha-1}}.
		\end{equation*}
		It is not difficult to see that for all $\alpha\in\cL$, $\{\tilde{c}_{\alpha}(\bm{u}):\bm{u}\in\Omega_L^{\alpha}\}$ is the unique optimal $\alpha$-resolution for $\bm{\lambda}=\bm{1}$. Then \eqref{subset-entropy-inequality} in \Cref{thm-subset-entropy-inequality} becomes
		\begin{equation*}
		\frac{1}{\binom{L-1}{\alpha-2}}\sum_{\bm{u}\in\Omega_L^{\alpha-1}}H_{\bm{u}}\geq \frac{1}{\binom{L-1}{\alpha-1}}\sum_{\bm{u}\in\Omega_L^{\alpha}}H_{\bm{u}},
		\end{equation*}
		which is Han's inequality \cite{Han-inequality-78}. It was proved in \cite{liutie14-s-all} that both Han's inequality and the subset entropy inequality in~\cite{yeung99} can be established from the subset entropy inequality of Madiman and Tetali~\cite{Madiman-Tetali-inequality10}.
	\end{remark}
	\begin{proof}[Proof of \Cref{thm-subset-entropy-inequality}]
		By symmetry, we only have to prove the theorem for $\bm{\lambda}\in\cG_L^0$. We will prove the theorem by induction on $L$. It is easy to check that the theorem is true for $L=2$. Assume the theorem is true for $L=N-1$, we will show that the theorem is also true for $L=N$. This can be readily verified for $\bm{\lambda}\in\cG_N^0$ such that $\zeta=1$. Thus, we only need to consider $\bm{\lambda}\in\cG_N^0$ such that $\zeta\geq 2$. 
		
		For any $\bm{\lambda}_N=(\lambda_1,\lambda_2,\cdots,\lambda_N)\in\cG_N^0$, by the construction in \eqref{def-cG_L}, we have $\bm{\lambda}_{N-1}= (\lambda_2,\lambda_3, \cdots,\lambda_N) \in\cG_{N-1}^0$. For $\alpha\in\{1,2,\cdots,N-1\}$, by the induction hypothesis, let $\left\{\tilde{c}_{\alpha}(\bm{u}):\bm{u}\in\Omega_{N-1}^{\alpha}\right\}$ be an optimal $\alpha$-resolution for $\bm{\lambda}_{N-1}$ such that \eqref{subset-entropy-inequality} is satisfied for all $\alpha=2,3,\cdots,N-1$. Now we need to design a proper optimal $\alpha$-resolution $\{c_{\alpha}(\bm{w}):\bm{w}\in\Omega_N^{\alpha}\}$ for $\bm{\lambda}_N$ that satisfies \eqref{subset-entropy-inequality} for all $\alpha=2,3,\cdots,N$.
		
		From \eqref{def-cG_L}, there exists a $\theta\in\{1,2,\cdots,N-1\}$ such that
		\begin{equation}
		\lambda_1= \frac{1}{\theta}\sum_{i=2}^N\lambda_i.   \label{pf-thm-subset-lambda(1)}
		\end{equation}
		For any $\bm{u}\in\{0,1\}^{N-1}$ and $\bm{w}\in\{0,1\}^N$, let $\bm{u}=(u_2,u_3\cdots,u_N)$ and $\bm{w}=(w_1,w_2,\cdots,w_N)$. For $\alpha=1,2,\cdots,N$, we now construct an $\alpha$-resolution for $\bm{\lambda}_{N}$ in (i) and (ii) in the following.
		\begin{enumerate}[(i)]
			\item Design $\{c_{\alpha}(\bm{w})\}$ for $\alpha=\theta+1,\theta+2,\cdots,N$.
			
			For $\alpha\geq \theta+1$ and $\bm{w}\in\Omega_{N}^{\alpha}$, let 
			\begin{equation*}
			c_{\alpha}(\bm{w})=
			\begin{cases}
			\tilde{c}_{\alpha-1}(\bm{u}),&\text{if }\bm{w}=(1,\bm{u}),~\bm{u}\in\Omega_{N-1}^{\alpha-1}  \\
			0,&\text{otherwise}.
			\end{cases}
			\end{equation*}
			From \eqref{pf-thm-subset-lambda(1)}, we have
			\begin{equation}
			\lambda_1\geq \frac{1}{\alpha-1}\sum_{i=2}^N\lambda_i~~\text{ for all } \alpha=\theta+1,\theta+2,\cdots,N.   \label{pf-thm-subset-lambda1>=}
			\end{equation}
			Lemma 9 in \cite{yeung99} states that $\{c_{\alpha}(\bm{w})\}$ is an optimal $\alpha$-resolution for $\bm{\lambda}_N$ if 
			\begin{equation}
			\lambda_1> \frac{1}{\alpha-1}\sum_{i=2}^N\lambda_i~~\text{ for all } \alpha=\theta+1,\theta+2,\cdots,N.   \label{pf-thm-subset-Lemma9[2]-condition}
			\end{equation}
			We observe that the lemma can be strengthened by replacing the strict inequality in \eqref{pf-thm-subset-Lemma9[2]-condition} by a non-strict inequality (i.e., the condition in \eqref{pf-thm-subset-lambda1>=}) with essentially no change in the proof. Thus, by invoking this strengthened version of the lemma, we conclude that $\{c_{\alpha}(\bm{w})\}$ is an optimal $\alpha$-resolution for $\bm{\lambda}_N$.
			
			For all $\alpha=\theta+2,\theta+3,\cdots,N$, following the steps leading to (48) in \cite{yeung99}, we can check that 
			\begin{equation*}
			\sum_{\bm{w}\in\Omega_N^{\alpha-1}}c_{\alpha-1}(\bm{w})H_{\bm{w}}\geq \sum_{\bm{w}\in\Omega_N^{\alpha}}c_{\alpha}(\bm{w})H_{\bm{w}}.
			\end{equation*}
			
			\item Design $\{c_{\alpha}(\bm{w})\}$ for $\alpha=1,2,\cdots,\theta$.
			
			For $\alpha=\theta+1,\theta,\cdots,2$ and any optimal $\alpha$-resolution $\{c_{\alpha}(\bm{w}):\bm{w}\in\Omega_N^{\alpha}\}$ for $\bm{\lambda}_N$, we claim that there exists an optimal $(\alpha-1)$-resolution $\{c_{\alpha-1}(\bm{w}):\bm{w}\in\Omega_N^{\alpha-1}\}$ for $\bm{\lambda}_N$ such that \eqref{subset-entropy-inequality} is satisfied. Since $\lambda_1\leq \frac{1}{\alpha-1}\sum_{i=2}^N\lambda_i$, this is exactly the first case of the proof of Proposition 1 in \cite{yeung99}, which is relatively straightforward. 
		\end{enumerate}
		In (i) and (ii) above, we have constructed an optimal $\alpha$-resolution $\{c_{\alpha}(\bm{w})\}$ for any $\bm{\lambda}_N\in\cG_N^0$ that satisfies \eqref{subset-entropy-inequality} for all $\alpha=2,3,\cdots,N$. This proves the theorem.
	\end{proof}
	
	\section{Conclusion and Remarks}  \label{section-conclusion}
	In this paper, we studied the SMDC problem for which superposition coding was proved to be optimal in~\cite{yeung97,yeung99}. We enhanced their results by obtaining in closed form the minimum set of inequalities that is needed for characterizing $\cR_{\sup}$, the superposition coding rate region. We further show by the symmetry of the problem that only a much smaller subset of these inequalities needs to be verified in determining the achievability of a given rate tuple. Yet, the cardinality of this smaller set grows at least exponentially fast with~$L$, the number of levels of the coding system, thus revealing the inherent complexity of the problem. A subset entropy inequality, which plays a key role in the converse proof in \cite{yeung99}, requires a painstaking and extremely technical proof. We present a weaker version of this subset entropy inequality whose proof is considerably simpler. With our explicit characterization of the coding rate region, this weaker version of the subset entropy inequality is sufficient for proving the optimality of superposition coding. Some of our results may be extensible to the more general settings in \cite{liutie13-sSMDC,liutie14-s-all,xiaozhiqing-D-MDC15,tianchao-liutie-SMDC-Re16,A-MDC-tianchao}.
	
	While the coding rate region needs to be characterized by a set of inequalities whose size grows 
	at least exponentially with~$L$, if these inequalities are used directly for checking whether a certain rate tuple is within the coding rate
	region, then inevitably it requires at least an exponential amount of time.  However, given that these inequalities
	are not arbitrary but instead highly structured, it may still be possible to devise a polynomial-time algorithm to preform the checking.
	If such an algorithm indeed exists, then the results in this paper can well be an important handle for finding it.
	This is an interesting problem for future research.

	\section{Acknowledgements}  \label{section-ack}
	The authors wish to thank Professor Chandra Nair and the Associate Editor for the suggestions on the linear programming method for checking rate tuple achievability. 
	The authors also wish to thank Professor Chao Tian for discussions on the duality of linear programs in the proof of Theorem 3. 
	The authors also wish to thank the reviewers for their helpful and detailed comments. 
	
	\begin{appendices}
		\section{Proof of \Cref{lemma-combination-of-two}}  \label{section-proof-combination-two}
		We first prove (i). By \Cref{lemma-f-lambda1} (i), the condition $\lambda_1\leq \frac{1}{\eta}\sum_{j=2}^L\lambda_j$ implies that 
		\begin{equation}
		f_{\eta+1}(\bm{\lambda})=\frac{1}{\eta+1}\sum_{j=1}^L\lambda_j.  \label{pf-combination-two-f-theta+1}
		\end{equation}
		In the following, we prove the claim by contradiction. Assume there exists a nonempty subset $\cI_1\subseteq\cI$ such that $i\in\cI_1$ if and only if 
		\begin{equation}
		\lambda^{(i)}_{\pi_i(1)}>\frac{1}{\eta} \sum_{j=2}^L\lambda^{(i)}_{\pi_i(j)},   \label{pf-combination-two-assmuption-1}
		\end{equation}
		which is equivalent to
		\begin{equation*}
		\sum_{j=1}^L\lambda^{(i)}_{\pi_i(j)}>\left(1+\frac{1}{\eta}\right) \sum_{j=2}^L\lambda^{(i)}_{\pi_i(j)},
		\end{equation*}
		or
		\begin{equation}
		\frac{1}{\eta+1} \sum_{j=1}^L\lambda^{(i)}_{\pi_i(j)}>\frac{1}{\eta} \sum_{j=2}^L\lambda^{(i)}_{\pi_i(j)}.   \label{pf-combination-two-f(theta)-1}
		\end{equation}
		For all $i\in\cI$, by \Cref{lemma-f-lambda1} (ii), the condition in \eqref{pf-combination-two-assmuption-1} implies that 
		\begin{equation}
		f_{\eta+1}\big(\bm{\lambda}^{(i)}\big)=f_{\eta}\big(\lambda^{(i)}_{\pi_i(2)},\lambda^{(i)}_{\pi_i(3)},\cdots,\lambda^{(i)}_{\pi_i(L)}\big).  \label{pf-combination-two-f(theta)-2}
		\end{equation}
		By \Cref{thm-opt-f}, we have
		\begin{equation}
		f_{\eta}(\lambda^{(i)}_{\pi_i(2)},\lambda^{(i)}_{\pi_i(3)},\cdots,\lambda^{(i)}_{\pi_i(L)})\leq \frac{1}{\eta} \sum_{j=2}^L\lambda^{(i)}_{\pi_i(j)}.  \label{pf-combination-two-f(theta)-3}
		\end{equation}
		Then by \eqref{pf-combination-two-f(theta)-2}, \eqref{pf-combination-two-f(theta)-3}, and \eqref{pf-combination-two-f(theta)-1} we obtain
		\begin{equation}
		f_{\eta+1}(\bm{\lambda}^{(i)})<\frac{1}{\eta+1} \sum_{j=1}^L\lambda^{(i)}_{\pi_i(j)}=\frac{1}{\eta+1} \sum_{j=1}^L\lambda^{(i)}_j.  \label{pf-combination-two-f2-i0}
		\end{equation}
		For $i\in\cI\backslash\cI_1$, we have 
		\begin{equation*}
		\lambda^{(i)}_{\pi_i(1)}\leq \frac{1}{\eta} \sum_{j=2}^L\lambda^{(i)}_{\pi_i(j)},
		\end{equation*}
		which by \Cref{lemma-f-lambda1} (i) implies that 
		\begin{equation}
		f_{\eta+1}(\bm{\lambda}^{(i)})=\frac{1}{\eta+1}\sum_{j=1}^L\lambda^{(i)}_j. \label{pf-combination-two-f2-i}
		\end{equation}
		Thus, we have from \eqref{assumption-lemma-combination-two} that
		\begin{eqnarray}
		f_{\eta+1}(\bm{\lambda})&=&\sum_{i=1}^{S_L}c_i f_{\eta+1}(\bm{\lambda}^{(i)})   \nonumber \\
		&=&\sum_{i\in\cI_1}c_i f_{\eta+1}(\bm{\lambda}^{(i)})+\sum_{i\in\cI\backslash\cI_1}c_i f_{\eta+1}(\bm{\lambda}^{(i)})  \nonumber \\
		&<&\sum_{i=1}^{S_L}c_i \cdot\left(\frac{1}{\eta+1}\sum_{j=1}^L\lambda^{(i)}_j\right)  \nonumber \\
		&=&\frac{1}{\eta+1}\sum_{j=1}^L\left(\sum_{i=1}^{S_L}c_i\lambda^{(i)}_j\right)  \nonumber \\
		&=&\frac{1}{\eta+1}\sum_{j=1}^L\lambda_j, \nonumber 
		\end{eqnarray}
		where the inequality follows from \eqref{pf-combination-two-f2-i0} and \eqref{pf-combination-two-f2-i}. This is a contradiction to \eqref{pf-combination-two-f-theta+1}. Thus, the assumption in~\eqref{pf-combination-two-assmuption-1} is false and we have for all $i\in\cI$ that
		\begin{equation*}
		\lambda^{(i)}_{\pi_i(1)}\leq \frac{1}{\eta} \sum_{j=2}^L\lambda^{(i)}_{\pi_i(j)}.  
		\end{equation*}
		
		Next, we prove (ii) by contradiction. Assume there exists a nonempty subset $\cI_2\subseteq\cI$ such that $i\in\cI_2$ if and only if 
		\begin{equation}
		\lambda^{(i)}_{\pi_i(1)}<\frac{1}{\eta} \sum_{j=2}^L\lambda^{(i)}_{\pi_i(j)},  \label{pf-combination-two-assmuption-2}
		\end{equation}
		which is equivalent to
		\begin{equation*}
		\frac{1}{\eta+1}\sum_{j=1}^L\lambda^{(i)}_{\pi_i(j)} <\frac{1}{\eta}\sum_{j=2}^L\lambda^{(i)}_{\pi_i(j)},
		\end{equation*}
		or 
		\begin{equation}
		g_{\eta+1,\bm{\lambda}^{(i)}}(0)<g_{\eta+1,\bm{\lambda}^{(i)}}(1).   \label{pf-combination-two-g(0)<g(1)}
		\end{equation}
		For any $i\in\cI_2$, by \Cref{lemma-f-lambda1} (i), \eqref{pf-combination-two-assmuption-2} implies that
		\begin{equation*}
		f_{\eta+1}(\bm{\lambda}^{(i)})=g_{\eta+1,\bm{\lambda}^{(i)}}(0).
		\end{equation*}
		For any $t\in\{1,2,\cdots,\eta\}$, in light of \eqref{pf-combination-two-g(0)<g(1)}, by applying the alternative version of \Cref{lemma-relation-beta} (ii) (see the remark below \Cref{lemma-relation-beta}), we obtain
		\begin{equation*}
		f_{\eta+1}(\bm{\lambda}^{(i)})=g_{\eta+1,\bm{\lambda}^{(i)}}(0)<g_{\eta+1,\bm{\lambda}^{(i)}}(1)<\cdots<g_{\eta+1,\bm{\lambda}^{(i)}}(t).  
		\end{equation*}
		Then it follows from the definition of $\pi_i(\cdot)$ in \eqref{def-pi(i)} that
		\begin{eqnarray}
		f_{\eta+1}(\bm{\lambda}^{(i)})&<&g_{\eta+1,\bm{\lambda}^{(i)}}(t) \nonumber \\
		&=&\frac{1}{\eta+1-t}\sum_{j=t+1}^L\lambda^{(i)}_{\pi_i(j)} \nonumber \\
		&\leq& \frac{1}{\eta+1-t}\sum_{j=t+1}^L\lambda^{(i)}_j,  \label{pf-combination-two-f(theta+2)-pi}
		\end{eqnarray}
		and so
		\begin{equation}
		f_{\eta+1}(\bm{\lambda}^{(i)})<\frac{1}{\eta+1-t}\sum_{j=t+1}^L\lambda^{(i)}_j.  \label{pf-combination-two-f(theta+2)-i0}
		\end{equation}
		For all $i\in\cI\backslash\cI_2$ and any $t\in\{1,2,\cdots,\eta\}$, by \Cref{thm-opt-f}, similar to \eqref{pf-combination-two-f(theta+2)-pi}, we have
		\begin{eqnarray}
		f_{\eta+1}(\bm{\lambda}^{(i)})&\leq& g_{\eta+1,\bm{\lambda}^{(i)}}(t)=\frac{1}{\eta+1-t}\sum_{j=t+1}^L\lambda^{(i)}_{\pi_i(j)} \nonumber \\
		&\leq& \frac{1}{\eta+1-t}\sum_{j=t+1}^L\lambda^{(i)}_j,  \nonumber
		\end{eqnarray}
		and so
		\begin{equation}
		f_{\eta+1}(\bm{\lambda}^{(i)})\leq \frac{1}{\eta+1-t}\sum_{j=t+1}^L\lambda^{(i)}_j.  \label{pf-combination-two-f(theta+2)-i}
		\end{equation}
		Thus, by \eqref{assumption-lemma-combination-two}, we have for any $t\in\{1,2,\cdots,\eta\}$ that 
		\begin{eqnarray}
		f_{\eta+1}(\bm{\lambda})&=&\sum_{i=1}^{S_L}c_i f_{\eta+1}(\bm{\lambda}^{(i)})  \nonumber \\
		&<&\sum_{i=1}^{S_L}c_i \cdot\left(\frac{1}{\eta+1-t}\sum_{j=t+1}^L\lambda^{(i)}_j\right)  \nonumber \\
		&=&\frac{1}{\eta+1-t}\sum_{j=t+1}^L\lambda_j  \nonumber \\
		&=&g_{\eta+1,\bm{\lambda}}(t),  \label{pf-combination-two-contradiction}
		\end{eqnarray}
		where the inequality follows from \eqref{pf-combination-two-f(theta+2)-i0} and \eqref{pf-combination-two-f(theta+2)-i}. 
		
		The condition $\lambda_1\geq \frac{1}{\eta} \sum_{j=2}^L\lambda_j$ is equivalent to
		\begin{equation*}
		g_{\eta+1,\bm{\lambda}}(0)\geq g_{\eta+1,\bm{\lambda}}(1).
		\end{equation*}
		Then by \Cref{thm-opt-f}, we have
		\begin{equation*}
		f_{\eta+1}(\bm{\lambda})=\min_{\beta\in\{0,1,\cdots,\eta\}}g_{\eta+1,\bm{\lambda}}(\beta)=\min_{\beta\in\{1,2,\cdots,\eta\}}g_{\eta+1,\bm{\lambda}}(\beta).
		\end{equation*}
		Thus, there must exist a $t\in\{1,2,\cdots,\eta\}$ such that 
		\begin{equation*}
		f_{\eta+1}(\bm{\lambda})=g_{\eta+1,\bm{\lambda}}(t),
		\end{equation*}
		which is a contradiction to \eqref{pf-combination-two-contradiction}. Thus the assumption in \eqref{pf-combination-two-assmuption-2} is false and we have for all~$i\in\cI$ that
		\begin{equation*}
		\lambda^{(i)}_{\pi_i(1)}\geq \frac{1}{\eta} \sum_{j=2}^L\lambda^{(i)}_{\pi_i(j)}.
		\end{equation*}
		
		\section{Proof of \Cref{lemma-combination-of-two-range}}  \label{section-proof-lemma-combination-two-range}
		For all $i\in\cI$, since $\bm{\lambda}^{(i)}\in\cG_L$ in light of \eqref{def-cG_L}, we only need to prove $\lambda^{(i)}_{\pi_i(1)}\leq \frac{1}{\eta-1}\sum_{j=2}^L\lambda^{(i)}_{\pi_i(j)}$ and $\lambda^{(i)}_{\pi_i(1)}\geq \frac{1}{\eta}\sum_{j=2}^L\lambda^{(i)}_{\pi_i(j)}$. 
		
		We first prove the upper bound on $\lambda^{(i)}_{\pi_i(1)}$. For $\eta=1$ and $i=1$, we have 
		\begin{equation*}
		\lambda^{(i)}_{\pi_i(1)}=1=\frac{1}{\eta-1}\sum_{j=2}^L\lambda^{(i)}_{\pi_i(j)}.
		\end{equation*}
		For $\eta=1$ and $i\in\cI\backslash\{1\}$, it is obvious that
		\begin{equation*}
		\lambda^{(i)}_{\pi_i(1)}<\frac{1}{\eta-1}\sum_{j=2}^L\lambda^{(i)}_{\pi_i(j)}=\infty.
		\end{equation*}
		For $\eta\in\{2,3,\cdots,\zeta-1\}$, the upper bound in \eqref{combination-two-condition} can be rewritten as  
		\begin{equation*}
		\lambda_1<\frac{1}{\eta'}\sum_{j=2}^L\lambda_j,
		\end{equation*}
		where $\eta'=\eta-1$ and $\eta'\in\{1,2,\cdots,\zeta-2\}$. By \Cref{lemma-combination-of-two} (i), this implies that
		\begin{equation*}
		\lambda_{\pi_i(1)}^{(i)}\leq \frac{1}{\eta'}\sum_{j=2}^L\lambda_{\pi_i(j)}^{(i)}=\frac{1}{\eta-1}\sum_{j=2}^L\lambda_{\pi_i(j)}^{(i)}.
		\end{equation*}
		Thus, the upper bound on $\lambda^{(i)}_{\pi_i(1)}$ is proved. 
		
		Now we prove the lower bound on $\lambda^{(i)}_{\pi_i(1)}$. For $\eta\in\{1,2,\cdots,\zeta-1\}$, the lower bound in \eqref{combination-two-condition} is
		\begin{equation*}
		\lambda_1\geq \frac{1}{\eta}\sum_{j=2}^L\lambda_j,
		\end{equation*}
		and so by \Cref{lemma-combination-of-two} (ii), we have
		\begin{equation*}
		\lambda_{\pi_i(1)}^{(i)}\geq \frac{1}{\eta}\sum_{j=2}^L\lambda_{\pi_i(j)}^{(i)}.
		\end{equation*}
		If the lower bound in \eqref{combination-two-condition} is tight, it follows immediately from \Cref{lemma-combination-of-two} that for any $\eta\in\{1,2,\cdots,\zeta-1\}$,
		\begin{equation*}
		\lambda_{\pi_i(1)}^{(i)}=\frac{1}{\eta}\sum_{j=2}^L\lambda_{\pi_i(j)}^{(i)}.
		\end{equation*}
		This proves the lemma.
		
		\section{Proof of \Cref{lemma-combination-reduce-dim}}  \label{section-proof-combination-reduce-dim}
		We only need to show that for $\alpha=1,2,\cdots,L-1$,
		\begin{equation*}
		f_{\alpha}(\bm{\lambda}_{L-1})=\sum_{i=1}^{S_L}c_i f_{\alpha}(\lambda^{(i)}_2,\lambda^{(i)}_3,\cdots,\lambda^{(i)}_L).
		\end{equation*}
		By \eqref{combination-two-condition} and \Cref{lemma-combination-of-two-range}, we have for $i\in\cI$ that 
		\begin{equation}
		\lambda^{(i)}_{\pi_i(1)}\in\left\{\frac{1}{\eta}\sum_{j=2}^L\lambda^{(i)}_{\pi_i(j)},\frac{1}{\eta-1}\sum_{j=2}^L\lambda^{(i)}_{\pi_i(j)}\right\}.  \label{pf-combination-reduce-dim-lambda1i=}
		\end{equation}
		Consider the following two cases:
		\begin{enumerate}[i)]
			\item $\alpha=1,2,\cdots,\eta-1$;
			\item $\alpha=\eta,\eta+1,\cdots,L-1$.
		\end{enumerate}
		\textit{Case i):} For $\alpha=1,2,\cdots,\eta-1$, if $\eta=2$, then $\alpha$ can only be 1 and it is easy to see that
		\begin{eqnarray}
		f_{1}(\bm{\lambda}_{L-1})&=&\sum_{j=2}^L\lambda_j=\sum_{j=2}^L\sum_{i=1}^{S_L}c_i\lambda^{(i)}_j =\sum_{i=1}^{S_L}c_i\sum_{j=2}^L\lambda^{(i)}_j  \nonumber \\
		&=&\sum_{i=1}^{S_L}c_i f_{1}(\lambda^{(i)}_2,\lambda^{(i)}_3,\cdots,\lambda^{(i)}_L).  \nonumber 
		\end{eqnarray}
		If $\eta>2$, consider the following. The second inequality in \eqref{combination-two-condition} is equivalent to 
		\begin{equation}
		\frac{1}{\eta}\sum_{j=1}^L\lambda_j< \frac{1}{\eta-1}\sum_{j=2}^L\lambda_j  \label{pf-reduce-c1-1}
		\end{equation}
		or
		\begin{equation*}
		g_{\eta,\bm{\lambda}}(0)<g_{\eta,\bm{\lambda}}(1).
		\end{equation*}
		By applying the alternative version of \Cref{lemma-relation-beta} (ii) (see the remark below \Cref{lemma-relation-beta}), we obtain 
		\begin{equation*}
		g_{\eta,\bm{\lambda}}(1)<g_{\eta,\bm{\lambda}}(2),
		\end{equation*}
		which is equivalent to
		\begin{equation*}
		\frac{1}{\eta-1}\sum_{j=2}^L\lambda_j< \frac{1}{\eta-2}\sum_{j=3}^L\lambda_j
		\end{equation*}
		or
		\begin{equation}
		\lambda_2< \frac{1}{\eta-2}\sum_{j=3}^L\lambda_j.  \label{pf-reduce-c1-2}
		\end{equation}
		Then by \Cref{lemma-f-lambda1} (i), we have 
		\begin{equation}
		f_{\alpha}(\bm{\lambda}_{L-1})=g_{\alpha,\bm{\lambda}_{L-1}}(0)=\frac{1}{\alpha}\sum_{j=2}^L\lambda_j.  \label{pf-reduce-f-lambda}
		\end{equation}
		Since \eqref{pf-combination-reduce-dim-lambda1i=} implies 
		\begin{equation}
		\lambda^{(i)}_{\pi_i(1)}\leq \frac{1}{\eta-1}\sum_{j=2}^L\lambda^{(i)}_{\pi_i(j)},  \label{pf-combination-reduce-dim-lambda1i<=1}
		\end{equation}
		similar to \eqref{pf-reduce-c1-1}-\eqref{pf-reduce-c1-2} (with all $<$'s replaced by $\leq$'s), we have
		\begin{equation}
		\lambda^{(i)}_{\pi_i(2)}\leq \frac{1}{\eta-2}\sum_{j=3}^L\lambda^{(i)}_{\pi_i(j)}.  \label{pf-combination-reduce-dim-lambda2i<=}
		\end{equation}
		Thus, following \eqref{pf-combination-reduce-dim-lambda1i<=1} and \eqref{pf-combination-reduce-dim-lambda2i<=}, we have 
		\begin{eqnarray}
		\lambda^{(i)}_{\pi_i(1)}&\leq & \frac{1}{\eta-1}\sum_{j=2}^L\lambda^{(i)}_{\pi_i(j)}  \nonumber \\
		&\leq &\frac{1}{\eta-1}\left(\frac{1}{\eta-2}+1\right)\sum_{j=3}^L\lambda^{(i)}_{\pi_i(j)}  \nonumber \\
		&=&\frac{1}{\eta-2}\sum_{j=3}^L\lambda^{(i)}_{\pi_i(j)}  \nonumber \\
		&\leq &\frac{1}{\eta-2}\sum_{j\in\{2,3,\cdots,L\}\backslash\{j_0\}}\lambda^{(i)}_{\pi_i(j)}  \label{pf-combination-reduce-dim-lambda1i<=}
		\end{eqnarray}
		for any $j_0\in\{2,3,\cdots,L\}$. Let $\pi_i'(\cdot)$ be a permutation of $\{2,3,\cdots,L\}$ defined as follows:
		\begin{enumerate}[a)]
			\item if $\pi_i(1)=1$, then $\pi_i'(j)=\pi_i(j)$ for all $j\in\{2,3,\cdots,L\}$;
			\item if $\pi_i(j_0)=1$ for some $j_0\in\{2,3,\cdots,L\}$, then 
			\begin{equation}
			\pi_i'(j)=
			\begin{cases}
			\pi_i(j-1),&\text{for }j=2,3,\cdots,j_0  \\
			\pi_i(j),&\text{for }j=j_0+1,\cdots,L.  \\
			\end{cases}   \label{pf-combination-reduce-dim-pi'}
			\end{equation}
		\end{enumerate}
		It is easy to check that
		\begin{equation*}
		\lambda^{(i)}_{\pi_i'(2)}\geq \lambda^{(i)}_{\pi_i'(3)}\geq \cdots\geq \lambda^{(i)}_{\pi_i'(L)}.
		\end{equation*}
		If a) holds, then by \eqref{pf-combination-reduce-dim-lambda2i<=}, we have
		\begin{equation*}
		\lambda^{(i)}_{\pi_i'(2)}=\lambda^{(i)}_{\pi_i(2)}\leq \frac{1}{\eta-2}\sum_{j=3}^L\lambda^{(i)}_{\pi_i(j)}=\frac{1}{\eta-2}\sum_{j=3}^L\lambda^{(i)}_{\pi_i'(j)}.
		\end{equation*}
		If b) holds, then by \eqref{pf-combination-reduce-dim-lambda1i<=}, we have
		\begin{eqnarray}
		\lambda^{(i)}_{\pi_i'(2)}&=&\lambda^{(i)}_{\pi_i(1)}  \nonumber \\
		&\leq& \frac{1}{\eta-2}\sum_{j\in\{2,3,\cdots,L\}\backslash\{j_0\}}\lambda^{(i)}_{\pi_i(j)}  \nonumber \\
		&=&\frac{1}{\eta-2}\sum_{j=3}^L\lambda^{(i)}_{\pi_i'(j)}.  \nonumber 
		\end{eqnarray}
		Summarizing the two cases, we see that
		\begin{equation*}
		\lambda^{(i)}_{\pi_i'(2)}\leq \frac{1}{\eta-2}\sum_{j=3}^L\lambda^{(i)}_{\pi_i'(j)}
		\end{equation*}
		always holds. By \Cref{lemma-f-lambda1} (i), this implies that
		\begin{equation}
		f_{\alpha}(\lambda^{(i)}_2,\lambda^{(i)}_3,\cdots,\lambda^{(i)}_L)=\frac{1}{\alpha}\sum_{j=2}^L\lambda^{(i)}_{\pi_i'(j)}=\frac{1}{\alpha}\sum_{j=2}^L\lambda^{(i)}_j.  \label{pf-reduce-f-lambda(i)}
		\end{equation}
		Following \eqref{pf-reduce-f-lambda}, we have
		\begin{eqnarray}
		f_{\alpha}(\bm{\lambda}_{L-1})&=&\frac{1}{\alpha}\sum_{j=2}^L\lambda_j  \nonumber \\
		&=&\frac{1}{\alpha}\sum_{j=2}^L\left(\sum_{i=1}^{S_L}c_i\lambda^{(i)}_j\right) \label{pf-combination-reduce-dim-f(sum)-2} \\
		&=&\sum_{i=1}^{S_L}c_i\left(\frac{1}{\alpha}\sum_{j=2}^L\lambda^{(i)}_j\right) \nonumber  \\
		&=&\sum_{i=1}^{S_L}c_i f_{\alpha}(\lambda^{(i)}_2,\lambda^{(i)}_3,\cdots,\lambda^{(i)}_L), \label{pf-combination-reduce-dim-f(sum)-last}
		\end{eqnarray}
		where \eqref{pf-combination-reduce-dim-f(sum)-2} follows from \eqref{assumption-lemma-combination-reduce-dim} and \eqref{pf-combination-reduce-dim-f(sum)-last} follows from \eqref{pf-reduce-f-lambda(i)}.
		
		\textit{Case ii):} For $\alpha=\eta,\eta+1,\cdots,L-1$, by \Cref{lemma-f-lambda1} (ii), the first inequality in \eqref{combination-two-condition} implies that 
		\begin{equation}
		f_{\alpha}(\bm{\lambda}_{L-1})=f_{\alpha+1}(\bm{\lambda}).  \label{pf-combination-reduce-dim-f(lambda)}
		\end{equation}
		For any $i\in\{1,2,\cdots,S_L\}$, since \eqref{pf-combination-reduce-dim-lambda1i=} implies $\lambda^{(i)}_{\pi_i(1)}\geq \frac{1}{\eta}\sum_{j=2}^L\lambda^{(i)}_{\pi_i(j)}$, we have by \Cref{lemma-f-lambda1} (ii) that 
		\begin{equation}
		f_{\alpha+1}(\bm{\lambda}^{(i)})=f_{\alpha}\big(\lambda^{(i)}_{\pi_i(2)},\lambda^{(i)}_{\pi_i(3)},\cdots,\lambda^{(i)}_{\pi_i(L)}\big).  \label{pf-combination-reduce-dim-f(lambda-i)-1}
		\end{equation}
		From the definition of $\pi_i'(\cdot)$, it is readily seen that
		\begin{equation*}
		\lambda^{(i)}_{\pi_i(j)}\leq \lambda^{(i)}_{\pi_i'(j)}, \text{ for all }j=2,3,\cdots,L.
		\end{equation*}
		Thus, we have for any $\beta=1,2,\cdots,\alpha-1$ that
		\begin{equation*}
		\frac{1}{\alpha-\beta}\sum_{j=\beta+1}^L \lambda^{(i)}_{\pi_i(j)}\leq \frac{1}{\alpha-\beta}\sum_{j=\beta+1}^L \lambda^{(i)}_{\pi_i'(j)}.
		\end{equation*}
		By \Cref{thm-opt-f}, this implies that 
		\begin{equation*}
		f_{\alpha}\big(\lambda^{(i)}_{\pi_i(2)},\lambda^{(i)}_{\pi_i(3)},\cdots,\lambda^{(i)}_{\pi_i(L)}\big)\leq f_{\alpha}\big(\lambda^{(i)}_{\pi_i'(2)},\lambda^{(i)}_{\pi_i'(3)},\cdots,\lambda^{(i)}_{\pi_i'(L)}\big),  
		\end{equation*}
		and thus by \eqref{pf-combination-reduce-dim-f(lambda-i)-1}, we have
		\begin{equation}
		f_{\alpha+1}(\bm{\lambda}^{(i)})\leq f_{\alpha}\big(\lambda^{(i)}_2,\lambda^{(i)}_3,\cdots,\lambda^{(i)}_L\big).  \label{pf-combination-reduce-dim-f(lambda-i)-3}
		\end{equation}
		Following \eqref{pf-combination-reduce-dim-f(lambda)}, we have
		\begin{align}
		f_{\alpha}(\bm{\lambda}_{L-1})&=f_{\alpha+1}(\bm{\lambda})   \label{pf-combination-reduce-dim-case2-1} \\
		&=\sum_{i=1}^{S_L}c_i f_{\alpha+1}(\bm{\lambda}^{(i)})  \label{pf-combination-reduce-dim-case2-2} \\
		&\leq \sum_{i=1}^{S_L}c_i f_{\alpha}\big(\lambda^{(i)}_2,\lambda^{(i)}_3,\cdots,\lambda^{(i)}_L\big)   \label{pf-combination-reduce-dim-case2-3} \\
		&\leq f_{\alpha}\left(\sum_{i=1}^{S_L}c_i \cdot\big(\lambda^{(i)}_2,\lambda^{(i)}_3,\cdots,\lambda^{(i)}_L\big)\right)   \label{pf-combination-reduce-dim-case2-4} \\
		&=f_{\alpha}\left(\sum_{i=1}^{S_L}c_i\lambda^{(i)}_2, \sum_{i=1}^{S_L}c_i\lambda^{(i)}_3,\cdots, \sum_{i=1}^{S_L}c_i\lambda^{(i)}_L\right)  \nonumber \\
		&=f_{\alpha}(\lambda_2,\lambda_3,\cdots,\lambda_L)  \label{pf-combination-reduce-dim-case2-6} \\
		&=f_{\alpha}(\bm{\lambda}_{L-1}),  \label{pf-combination-reduce-dim-case2-7}
		\end{align}
		where both \eqref{pf-combination-reduce-dim-case2-2} and \eqref{pf-combination-reduce-dim-case2-6} follow from \eqref{assumption-lemma-combination-reduce-dim}, \eqref{pf-combination-reduce-dim-case2-3} follows from \eqref{pf-combination-reduce-dim-f(lambda-i)-3}, and \eqref{pf-combination-reduce-dim-case2-4} follows from \Cref{lemma-concave-f}. Upon observing that the LHS of \eqref{pf-combination-reduce-dim-case2-1} is the same as the RHS of \eqref{pf-combination-reduce-dim-case2-7}, we conclude that the inequalities in both \eqref{pf-combination-reduce-dim-case2-3} and \eqref{pf-combination-reduce-dim-case2-4} are tight, and hence 
		\begin{equation*}
		f_{\alpha}(\bm{\lambda}_{L-1})=\sum_{i=1}^{S_L}c_i f_{\alpha}\big(\lambda^{(i)}_2,\lambda^{(i)}_3,\cdots,\lambda^{(i)}_L\big).
		\end{equation*}
		The lemma is proved.
		
		\section{Proof of \Cref{lemma-lambda(L-1)}} \label{section-proof-lemma-lambda(L-1)}
		Fix $i\in\{1,2,\cdots,S_L\}$ and assume that $\left(\lambda^{(i)}_2, \lambda^{(i)}_3,\cdots,\lambda^{(i)}_L\right)\in\cG_{L-1}$. Let $\gamma_j,~j=1,2,\cdots,\zeta-1$ be the integer such that
		\begin{equation*}
		\lambda^{(i)}_{\pi_i(j)}=\frac{1}{\gamma_j}\sum_{k=j+1}^L \lambda^{(i)}_{\pi_i(k)},  
		\end{equation*} 
		and let $\gamma_{\zeta}=0$. Note that the role of $\gamma_j$ for $\bm{\lambda}^{(i)}$ is the same as the role of $\theta_j$ for $\bm{\lambda}$ (cf. \eqref{def-cG_L-lambda(j+1)}). Also note that $\zeta$ and $\gamma_j$ depend on~$i$, but since we fix $i$, this dependence is omitted to simplify notation.
		
		Let $j_0\in\{1,2,\cdots,L\}$ be such that 
		\begin{equation}
		\lambda^{(i)}_{\pi_i(j_0)}=\lambda^{(i)}_1.   \label{pf-lambda(L-1)-def-j0}
		\end{equation}
		If $j_0\geq \zeta+1$, $\lambda^{(i)}_1=0$ and thus the lemma is proved. If $j_0=1$, the lemma is immediate from \eqref{pf-lambda(L-1)-def-j0}. If $2\leq j_0\leq \zeta$, we claim that $\gamma_j=\gamma_{j_0}+(j_0-j)$ and  $\lambda^{(i)}_{\pi_i(j)}=\lambda^{(i)}_{\pi_i(j_0)}$ for all $j=j_0,j_0-1,\cdots,1$. Then the lemma follows from the claim for $j=1$. In the following, we prove the claim by induction on $j$ for $j\leq j_0$. The claim is immediate for $j=j_0$. Assume the claim is true for $j=j_0,j_0-1,\cdots,N$ for some $N\in\{2,3,\cdots,j_0\}$, and we will show that the claim is also true for $j=N-1$. By the induction hypothesis, we have 
		\begin{equation*}
		\gamma_{N}=\gamma_{j_0}+(j_0-N),  
		\end{equation*}
		and for all $j=j_0,j_0-1,\cdots,N$,
		\begin{equation}
		\lambda^{(i)}_{\pi_i(j)}=\lambda^{(i)}_{\pi_i(j_0)}=\frac{1}{\gamma_{j_0}}\sum_{k=j_0+1}^L \lambda^{(i)}_{\pi_i(k)}.  \label{pf-lambda(L-1)-induction-lambda}
		\end{equation}
		By \eqref{def-cG_L} and \eqref{range-theta(j)-1}, there exists 
		\begin{equation}
		\gamma_{N-1}\in\{1,2,\cdots,\gamma_{j_0}+(j_0-N+1)\}   \label{pf-lambda(L-1)-range-gamma(N-1)}
		\end{equation}
		such that 
		\begin{equation}
		\lambda^{(i)}_{\pi_i(N-1)}=\frac{1}{\gamma_{N-1}}\sum_{k=N}^L\lambda^{(i)}_{\pi_i(k)}.  \label{pf-lambda(L-1)-lambda(N-1)-1}
		\end{equation}
		Thus, we have
		\begin{eqnarray}
		\lambda^{(i)}_{\pi_i(N-1)}&=&\frac{1}{\gamma_{N-1}}\sum_{k=N}^L\lambda^{(i)}_{\pi_i(k)} \nonumber \\
		&=&\frac{1}{\gamma_{N-1}}\left(1+\sum_{k=N}^{j_0}\frac{1}{\gamma_{j_0}}\right)\sum_{k=j_0+1}^L\lambda^{(i)}_{\pi_i(k)} \label{pf-lambda(L-1)-lambda(N-1)-2-2} \\
		&=&\frac{\gamma_{j_0}+(j_0-N+1)}{\gamma_{j_0}\gamma_{N-1}}\sum_{k=j_0+1}^L\lambda^{(i)}_{\pi_i(k)},  \label{pf-lambda(L-1)-lambda(N-1)-2}
		\end{eqnarray}
		where \eqref{pf-lambda(L-1)-lambda(N-1)-2-2} follows from \eqref{pf-lambda(L-1)-induction-lambda}. In light of \eqref{pf-lambda(L-1)-def-j0} and $j_0\geq 2$, recall the definition of $\pi_i'(\cdot)$ in \eqref{pf-combination-reduce-dim-pi'}. With the assumption that $\left(\lambda^{(i)}_2,\lambda^{(i)}_3,\cdots,\lambda^{(i)}_L\right)\in\cG_{L-1}$, by \eqref{def-cG_L}, there exists an integer $\gamma_{N-1}'$ such that 
		\begin{equation*}
		\lambda^{(i)}_{\pi_i'(N)}=\frac{1}{\gamma_{N-1}'}\sum_{k=N+1}^L\lambda^{(i)}_{\pi_i'(k)} 
		\end{equation*}
		or
		\begin{equation}
		\lambda^{(i)}_{\pi_i(N-1)}=\frac{1}{\gamma_{N-1}'}\left(\sum_{k=N}^{j_0-1}\lambda^{(i)}_{\pi_i(k)}+\sum_{k=j_0+1}^L\lambda^{(i)}_{\pi_i(k)}\right).  \label{pf-lambda(L-1)-lambda(N-1)-3}
		\end{equation}
		Comparing the RHS of \eqref{pf-lambda(L-1)-lambda(N-1)-1} and \eqref{pf-lambda(L-1)-lambda(N-1)-3}, since $\lambda^{(i)}_{\pi_i(j_0)}\geq 1$, we have
		\begin{equation*}
		\sum_{k=N}^L\lambda^{(i)}_{\pi_i(k)}>\sum_{k=N}^{j_0-1}\lambda^{(i)}_{\pi_i(k)}+\sum_{k=j_0+1}^L\lambda^{(i)}_{\pi_i(k)}.
		\end{equation*}
		Since the LHS of \eqref{pf-lambda(L-1)-lambda(N-1)-1} and \eqref{pf-lambda(L-1)-lambda(N-1)-3} are the same, we see that $\gamma_{N-1}'<\gamma_{N-1}$, which implies that
		\begin{equation*}
		\gamma_{N-1}'\leq \gamma_{j_0}+(j_0-N).
		\end{equation*}
		Following \eqref{pf-lambda(L-1)-lambda(N-1)-3}, we have
		\begin{eqnarray}
		\lambda^{(i)}_{\pi_i(N-1)}&=&\frac{1}{\gamma_{N-1}'}\left(\sum_{k=N}^{j_0-1}\lambda^{(i)}_{\pi_i(k)}+\sum_{k=j_0+1}^L\lambda^{(i)}_{\pi_i(k)}\right) \nonumber \\
		&=&\frac{1}{\gamma_{N-1}'}\left(1+\sum_{k=N}^{j_0-1}\frac{1}{\gamma_{j_0}}\right)\sum_{k=j_0+1}^L\lambda^{(i)}_{\pi_i(k)}  \label{pf-lambda(L-1)-lambda(N-1)-4-2} \\
		&=&\frac{\gamma_{j_0}+(j_0-N)}{\gamma_{j_0}\gamma_{N-1}'}\sum_{k=j_0+1}^L\lambda^{(i)}_{\pi_i(k)},  \label{pf-lambda(L-1)-lambda(N-1)-4}
		\end{eqnarray}
		where \eqref{pf-lambda(L-1)-lambda(N-1)-4-2} follows from \eqref{pf-lambda(L-1)-induction-lambda}. Comparing \eqref{pf-lambda(L-1)-lambda(N-1)-2} and \eqref{pf-lambda(L-1)-lambda(N-1)-4}, it is easy to see that
		\begin{equation*}
		\gamma_{N-1}'=\frac{\gamma_{N-1}[\gamma_{j_0}+(j_0-N)]}{\gamma_{j_0}+(j_0-N+1)}.
		\end{equation*}
		Since $\gamma_{j_0}+(j_0-N)$ and $\gamma_{j_0}+(j_0-N+1)$ are coprime and $\gamma_{N-1}\leq \gamma_{j_0}+(j_0-N+1)$ by \eqref{pf-lambda(L-1)-range-gamma(N-1)}, we have 
		\begin{equation}
		\gamma_{N-1}=\gamma_{j_0}+(j_0-N+1).  \label{pf-lambda(L-1)-value-gamma(N-1)}
		\end{equation}
		Substituting \eqref{pf-lambda(L-1)-value-gamma(N-1)} into \eqref{pf-lambda(L-1)-lambda(N-1)-2} and invoking \eqref{pf-lambda(L-1)-induction-lambda}, we have $\lambda^{(i)}_{\pi_i(N-1)}=\lambda^{(i)}_{\pi_i(j_0)}$. This implies that the claim is true for $j=N-1$. The lemma is proved.
		
		\section{Proof of \Cref{lemma-no-redundancy}}  \label{section-proof-lemma-no-redundancy}
		Since there is only one vector in $\cG_1$, we only need to consider $L\geq 2$. If $\zeta=1$ for $\bm{\lambda}^{(i_0)}$, it is obvious that $\bm{\lambda}^{(i_0)}$ cannot be a conic combination of the other vectors in $\cG_L$. Thus, we consider only $\bm{\lambda}^{(i_0)}\in\cG_L$ such that $\zeta\geq 2$. We prove the lemma by induction on $L$ for $L\geq 2$. We first check that the claim is true for $L=2$. It is easy to see from \eqref{def-cG_L} that $\cG_2=\{(1,0), (0,1), (1,1)\}$. Then $\bm{f}\big((1,0)\big)=\bm{f}\big((0,1)\big)=(1,0)$ and $\bm{f}\big((1,1)\big)=(2,1)$. Since 
		\begin{equation*}
		(1,1)=(1,0)+(0,1)
		\end{equation*}
		whereas 
		\begin{equation*}
		f_2\big((1,1)\big)>f_2\big((1,0)\big)+f_2\big((0,1)\big),
		\end{equation*}
		we see that $\big((1,1),\bm{f}\big((1,1)\big)\big)$ cannot be a conic combination of $\big((1,0),\bm{f}\big((1,0)\big)\big)$ and $\big((0,1),\bm{f}\big((0,1)\big)\big)$. Thus, the lemma is true for $L=2$. For any $L\geq 3$, the lemma will be proved by contradiction via the following proposition, whose proof is given in Appendix \ref{section-proof-lemma-no-redundancy-claim}.
		\begin{proposition}\label{prop-induction}
			For any $L\geq 3$, if \Cref{lemma-no-redundancy} is false, then the lemma is false for $L-1$.
		\end{proposition}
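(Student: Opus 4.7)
My plan is to prove the contrapositive by contradiction: assume \Cref{lemma-no-redundancy} fails at level $L$ and manufacture a counterexample at level $L-1$ through \Cref{lemma-combination-reduce-dim}. Concretely, suppose there exist $i_0$ and nonnegative coefficients $c_1,\ldots,c_{S_L}$ with $c_{i_0}=0$ such that $(\bm{\lambda}^{(i_0)},\bm{f}(\bm{\lambda}^{(i_0)}))=\sum_i c_i(\bm{\lambda}^{(i)},\bm{f}(\bm{\lambda}^{(i)}))$. Using the permutation invariance of $\bm{f}$ furnished by \Cref{lemma-indep-order}, I will assume without loss of generality that $\bm{\lambda}^{(i_0)}\in\cG_L^0$; the case $\zeta=1$ (equivalently $\bm{\lambda}^{(i_0)}=\bm{\lambda}^{[1]}$) is excluded at the outset since such a vector admits no nontrivial conic representation in $\cG_L$.

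The first step is to apply \Cref{lemma-combination-reduce-dim} to peel off the leading coordinate, producing
$$\big(\bm{\lambda}^{(i_0)}_{L-1},\bm{f}(\bm{\lambda}^{(i_0)}_{L-1})\big)=\sum_{i=1}^{S_L}c_i\cdot\big((\lambda^{(i)}_2,\ldots,\lambda^{(i)}_L),\bm{f}(\lambda^{(i)}_2,\ldots,\lambda^{(i)}_L)\big).$$
The recursive construction in \eqref{def-cG_L} immediately certifies that $\bm{\lambda}^{(i_0)}_{L-1}\in\cG_{L-1}^0\subseteq\cG_{L-1}$. To complete the contradiction, I then need two things: (a) every vector on the right-hand side with $c_i>0$ must lie in $\cG_{L-1}$, so that the displayed identity becomes a legal conic combination inside $\cG_{L-1}$; and (b) after collecting terms with identical suffixes, the total coefficient attached to the $\cG_{L-1}$-representative of $\bm{\lambda}^{(i_0)}_{L-1}$ must vanish.

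For (a), I will combine \Cref{lemma-combination-of-two-range}, which pins $\lambda^{(i)}_{\pi_i(1)}$ down to one of two prescribed values for every $i\in\cI$, with \Cref{lemma-lambda(L-1)}, which tells us exactly when the tail of an element of $\cG_L$ returns an element of $\cG_{L-1}$. A case analysis on which of the two values Lemma~\ref{lemma-combination-of-two-range} forces, together with whether $\lambda^{(i)}_1$ equals $0$ or $\lambda^{(i)}_{\pi_i(1)}$, should show that each active suffix is a permutation of a bona fide element of $\cG_{L-1}^0$. For (b), any $i\neq i_0$ whose suffix agrees with $\bm{\lambda}^{(i_0)}_{L-1}$ differs from $\bm{\lambda}^{(i_0)}$ only in its first coordinate; the recursive structure \eqref{def-cG_L} and the two-value constraint from \Cref{lemma-combination-of-two-range} will leave no admissible choice of $\lambda^{(i)}_1\neq\lambda^{(i_0)}_1$, so the only contribution to the collected coefficient is $c_{i_0}=0$.

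I expect the main obstacle to be step (a). The set $\cG_L$ is closed under coordinate permutations but $\cG_{L-1}$ refers to vectors one shorter, and the index shift is asymmetric: an arbitrary member of $\cG_L$ need not have its tail in $\cG_{L-1}$. The key is to exploit the \emph{global} restriction that the single identity \eqref{assumption-lemma-combination-reduce-dim} places on all participating $\bm{\lambda}^{(i)}$ simultaneously --- through \Cref{lemma-combination-of-two,lemma-combination-of-two-range} --- rather than arguing about each $\bm{\lambda}^{(i)}$ in isolation. If that global control can be converted, case by case, into local membership in $\cG_{L-1}$ (possibly after reordering each $\bm{\lambda}^{(i)}$), then the remainder of the argument is a bookkeeping exercise and the proposition falls out.
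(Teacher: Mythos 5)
Your opening moves --- fixing $\bm{\lambda}^{(i_0)}\in\cG_L^0$, excluding the case $\zeta=1$, and applying \Cref{lemma-combination-reduce-dim} to pass to the tails --- match the paper's proof exactly. From there your plan diverges, and step~(a) has a genuine gap that the paper's argument is specifically designed to avoid.

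Step~(a) would require every active tail $(\lambda^{(i)}_2,\ldots,\lambda^{(i)}_L)$, $i\in\cI_L$, to be a permutation of a member of $\cG_{L-1}^0$. This is not true in general: $\bm{\lambda}^{(i)}$ is an arbitrary permutation of an ordered vector in $\cG_L^0$, so the coordinate $\lambda^{(i)}_1$ you drop can be a \emph{middle} component, and when it is, \Cref{lemma-lambda(L-1)} read contrapositively says the tail is \emph{not} in $\cG_{L-1}$. You cite \Cref{lemma-lambda(L-1)} as if it characterized exactly when the tail falls back into $\cG_{L-1}$, but it only states a necessary condition on $\lambda^{(i)}_1$ given membership of the tail, not a sufficient one. \Cref{lemma-combination-of-two-range} constrains the \emph{largest} entry $\lambda^{(i)}_{\pi_i(1)}$ and says nothing about which position holds it. Concretely, $(1,3,1,1)$ is a permutation of $(3,1,1,1)\in\cG_4^0$ whose tail $(3,1,1)$ is not in $\cG_3$; yet for $\bm{\lambda}^{(i_0)}=(4,2,1,1)$ (where $\eta=1$ and the lower bound of \eqref{combination-two-condition} is tight) the single-value condition $\lambda^{(i)}_{\pi_i(1)}=\frac{1}{\eta}\sum_{j\geq 2}\lambda^{(i)}_{\pi_i(j)}=3$ is satisfied, so the tools you name do not exclude this vector from $\cI_L$.

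The paper never attempts to place the active tails in $\cG_{L-1}$. Instead it uses the conic-combination fact already established inside the proof of \Cref{thm-alter-region} --- every $(\bm{\lambda},\bm{f}(\bm{\lambda}))$ with $\bm{\lambda}\in\mathbb{R}^{L-1}_+$ is a conic combination of vectors in $\cF^2_{L-1}$ --- to \emph{re-expand} each active tail (whether or not it lies in $\cG_{L-1}$) as a conic combination of $\cG_{L-1}$ vectors, and then collects terms. After isolating the $j_0$ contribution, the missing workhorse in your plan is \Cref{prop-differ@1st}: there must exist an active $i$ whose tail differs from $\bm{\lambda}^{(j_0)}_{L-1}$. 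That positivity forces $1-d_0-\sum_i c_i t^{(i)}_{j_0}>0$, allowing normalization; the paper never needs, and does not prove, your step~(b) that the total coefficient on the $j_0$-representative vanishes. Both the re-expansion trick and \Cref{prop-differ@1st} are load-bearing and absent from your plan, and without them the induction does not close.
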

		By backward induction, if \Cref{lemma-no-redundancy} is false for any $L\geq 3$, then the lemma is false for $L=2$. This is a contradiction because we already have shown that the lemma is true for $L=2$. This proves the lemma for all $L\geq 2$.
		
		\section{Proof of \Cref{prop-induction}}  \label{section-proof-lemma-no-redundancy-claim}
		Assume \Cref{lemma-no-redundancy} is false for some $L\geq 3$, i.e., for some $i_0\in\{1,2,\cdots,S_L\}$, there exists $(c_1,c_2,\cdots,c_{S_L})\in\mathbb{R}_+^{S_L}$ such that $c_{i_0}=0$ and 
		\begin{equation}
		\big(\bm{\lambda}^{(i_0)},\bm{f}(\bm{\lambda}^{(i_0)})\big)=\sum_{i=1}^{S_L}c_i \cdot\big(\bm{\lambda}^{(i)},\bm{f}(\bm{\lambda}^{(i)})\big).   \label{pf-lemma-no-redundancy-assumption}
		\end{equation}
		Assume without loss of generality that $\bm{\lambda}^{(i_0)}\in\cG_L^0$. Since we assume at the beginning of Appendix \ref{section-proof-lemma-no-redundancy} that $\zeta\geq 2$ for $\bm{\lambda}^{(i_0)}$, we can see from \eqref{def-cG_L} that $\big(\lambda^{(i_0)}_2,\lambda^{(i_0)}_3,\cdots,\lambda^{(i_0)}_L\big)\in\cG_{L-1}^0$ by construction. Let $\cG_{L-1}= \left\{\bm{\lambda}^{(1)}_{L-1}, \bm{\lambda}^{(2)}_{L-1},\cdots, \bm{\lambda}^{(S_{L-1})}_{L-1}\right\}$. Then there exists a unique $j_0\in\{1,2,\cdots,S_{L-1}\}$ such that 
		\begin{equation}
		\bm{\lambda}^{(j_0)}_{L-1}= \left(\lambda^{(i_0)}_2, \lambda^{(i_0)}_3,\cdots, \lambda^{(i_0)}_L\right).   \label{pf-thm-no-redundency-def-lambda(j0)}
		\end{equation}
		By \Cref{lemma-combination-reduce-dim}, \eqref{pf-lemma-no-redundancy-assumption} implies that
		\begin{align}
		&\big(\bm{\lambda}^{(j_0)}_{L-1},\bm{f}(\bm{\lambda}^{(j_0)}_{L-1})\big)  \nonumber  \\ &=\sum_{i=1}^{S_L}c_i\cdot\big((\lambda^{(i)}_2,\lambda^{(i)}_3,\cdots,\lambda^{(i)}_L),\bm{f}(\lambda^{(i)}_2,\lambda^{(i)}_3,\cdots,\lambda^{(i)}_L)\big). \label{pf-no-redundancy-lambda(L-1)-0}
		\end{align}
		Let $\cK_L=\{1,2,\cdots,S_L\}$, $\cI_L=\{i\in\cK_L:c_i\neq 0\}$, $\cK_L^{(j_0)}=\left\{i\in\cK_L:\left(\lambda^{(i)}_2,\lambda^{(i)}_3,\cdots,\lambda^{(i)}_L\right)=\bm{\lambda}^{(j_0)}_{L-1}\right\}$, and
		\begin{equation}
		d_{0}=\sum_{i\in\cK_L^{(j_0)}}c_i.   \label{pf-thm-no-redundency-def-d0}
		\end{equation}
		In the proof of \Cref{thm-alter-region}, we have shown that any vector in $\cF_L^1$ is a conic combination of the vectors in $\cF_L^2$. Then for any $i\in\cK_L\backslash\cK_L^{(j_0)}$, there exists $\left(t^{(i)}_1,t^{(i)}_2,\cdots,t^{(i)}_{S_{L-1}}\right) \in\mathbb{R}_+^{S_{L-1}}$ such that 
		\begin{align}
		&\big((\lambda^{(i)}_2,\lambda^{(i)}_3,\cdots,\lambda^{(i)}_L),\bm{f}(\lambda^{(i)}_2,\lambda^{(i)}_3,\cdots,\lambda^{(i)}_L)\big) \nonumber \\
		&=\sum_{j=1}^{S_{L-1}}t^{(i)}_j\big(\bm{\lambda}^{(j)}_{L-1},\bm{f}(\bm{\lambda}^{(j)}_{L-1})\big).  \label{pf-no-redundancy-lambda(L-1)-1}
		\end{align}
		Substitute \eqref{pf-thm-no-redundency-def-d0} and \eqref{pf-no-redundancy-lambda(L-1)-1} into \eqref{pf-no-redundancy-lambda(L-1)-0}, we have
		\begin{align}
		&\big(\bm{\lambda}^{(j_0)}_{L-1},\bm{f}(\bm{\lambda}^{(j_0)}_{L-1})\big) \nonumber \\
		&=d_0 \big(\bm{\lambda}^{(j_0)}_{L-1},\bm{f}(\bm{\lambda}^{(j_0)}_{L-1})\big)  \nonumber \\
		&\hspace{0.5cm}+\sum_{i\in\cK_{L}\backslash\cK_L^{(j_0)}}c_i\left[\sum_{j=1}^{S_{L-1}}t^{(i)}_{j}\big(\bm{\lambda}^{(j)}_{L-1},\bm{f}(\bm{\lambda}^{(j)}_{L-1})\big)\right]  \nonumber \\
		&=\left(d_0+\sum_{i\in\cK_{L}\backslash\cK_L^{(j_0)}}c_it^{(i)}_{j_0}\right) \big(\bm{\lambda}^{(j_0)}_{L-1},\bm{f}(\bm{\lambda}^{(j_0)}_{L-1})\big)  \nonumber \\
		&\hspace{0.5cm}+\sum_{i\in\cK_{L}\backslash\cK_L^{(j_0)}}c_i\left[\sum_{j\in\cK_{L-1}\backslash\{j_0\}}t^{(i)}_{j}\big(\bm{\lambda}^{(j)}_{L-1},\bm{f}(\bm{\lambda}^{(j)}_{L-1})\big)\right]   \nonumber \\
		&=\left(d_0+\sum_{i\in\cK_{L}\backslash\cK_L^{(j_0)}}c_it^{(i)}_{j_0}\right) \big(\bm{\lambda}^{(j_0)}_{L-1},\bm{f}(\bm{\lambda}^{(j_0)}_{L-1})\big)  \nonumber \\
		&\hspace{0.5cm}+\sum_{j\in\cK_{L-1}\backslash\{j_0\}}\left(\sum_{i\in\cK_{L}\backslash\cK_L^{(j_0)}}c_it^{(i)}_{j}\right) \big(\bm{\lambda}^{(j)}_{L-1},\bm{f}(\bm{\lambda}^{(j)}_{L-1})\big).   \nonumber 
		\end{align}
		Thus,
		\begin{align}
		&\left(1-d_0-\sum_{i\in\cK_{L}\backslash\cK_L^{(j_0)}}c_it^{(i)}_{j_0}\right)\big(\bm{\lambda}^{(j_0)}_{L-1},\bm{f}(\bm{\lambda}^{(j_0)}_{L-1})\big) \nonumber \\
		&= \sum_{j\in\cK_{L-1}\backslash\{j_0\}}\left(\sum_{i\in\cK_{L}\backslash\cK_L^{(j_0)}}c_it^{(i)}_{j}\right) \big(\bm{\lambda}^{(j)}_{L-1},\bm{f}(\bm{\lambda}^{(j)}_{L-1})\big).  \label{pf-thm-no-redundency-combination}
		\end{align}
		\begin{proposition} \label{prop-differ@1st}
			There exists $i\in\cI_L$ such that $\left(\lambda^{(i)}_2,\lambda^{(i)}_3,\cdots,\lambda^{(i)}_L\right)\neq \bm{\lambda}^{(j_0)}_{L-1}$.
		\end{proposition}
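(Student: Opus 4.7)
The plan is to argue by contradiction: suppose that every $i\in\cI_L$ satisfies $(\lambda^{(i)}_2,\ldots,\lambda^{(i)}_L)=\bm{\lambda}^{(j_0)}_{L-1}=(\lambda^{(i_0)}_2,\ldots,\lambda^{(i_0)}_L)$. Matching the last $L-1$ components of the identity $\bm{\lambda}^{(i_0)}=\sum_{i\in\cI_L}c_i\bm{\lambda}^{(i)}$ from \eqref{pf-lemma-no-redundancy-assumption} (with $c_{i_0}=0$) and using that $\bm{\lambda}^{(j_0)}_{L-1}\neq\bm{0}$ (its smallest nonzero entry equals $1$ since $\zeta\geq 2$ for $\bm{\lambda}^{(i_0)}$), one immediately obtains $\sum_{i\in\cI_L}c_i=1$. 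The only remaining degrees of freedom are the scalars $\lambda^{(i)}_1$, constrained by $\sum_{i\in\cI_L}c_i\lambda^{(i)}_1=\lambda^{(i_0)}_1$.

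Write $S=\sum_{j=2}^L\lambda^{(i_0)}_j$, $\theta=\theta^{(i_0)}_1$ (so that $\lambda^{(i_0)}_1=S/\theta$), and $\theta_2=\theta^{(i_0)}_2$. I would then invoke \Cref{lemma-lambda(L-1)} to conclude $\lambda^{(i)}_1\in\{0,\lambda^{(i)}_{\pi_i(1)}\}$ for every $i\in\cI_L$, followed by \Cref{lemma-combination-of-two-range} applied with the $\eta$ determined by $\lambda^{(i_0)}_1$ (which equals $\theta$) to pin $\lambda^{(i)}_{\pi_i(1)}$ to one of two explicit values. A short case analysis reduces the admissible values of $\lambda^{(i)}_1$ to the two-element set $\{0,S/(\theta-1)\}$: Case A ($\lambda^{(i)}_1=0$) is possible only when $\theta_2\in\{\theta-1,\theta\}$; Case B ($\lambda^{(i)}_1>0$) requires $\theta>1$, and the a priori alternative $\lambda^{(i)}_1=S/\theta$ must be excluded, since together with the assumed agreement of tails it would yield $\bm{\lambda}^{(i)}=\bm{\lambda}^{(i_0)}$, contradicting $i\neq i_0$ and the distinctness of the indexing of $\cG_L$.

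I would then split into subcases. If $\theta=1$, or if $\theta>1$ and $\theta_2\notin\{\theta-1,\theta\}$, only one of Cases A, B is available, forcing all $\lambda^{(i)}_1$ to a common value different from $\lambda^{(i_0)}_1=S/\theta$; this immediately contradicts $\sum c_i\lambda^{(i)}_1=S/\theta$. Otherwise $\theta\geq 2$ and $\theta_2\in\{\theta-1,\theta\}$, and both values are allowed; the two linear constraints $\sum c_i=1$ and $\sum c_i\lambda^{(i)}_1=S/\theta$ then fix the total weight on $\lambda^{(i)}_1=0$ to be $a=1/\theta$ and on $\lambda^{(i)}_1=S/(\theta-1)$ to be $b=(\theta-1)/\theta$. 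To rule out this remaining subcase, I would compare the $(\theta+1)$-th component of $\bm{f}$: \Cref{lemma-f-lambda1} and the remark following it give $f_{\theta+1}(\bm{\lambda}^{(i_0)})=S/\theta$; in Case A, \Cref{lemma-indep-order} and \Cref{lemma-f-lambda1}(i) give $f_{\theta+1}(\bm{\lambda}^{(i)})=f_{\theta+1}(\bm{\mu})$ where $\bm{\mu}=\bm{\lambda}^{(j_0)}_{L-1}$, while in Case B, \Cref{lemma-f-lambda1}(ii) gives $f_{\theta+1}(\bm{\lambda}^{(i)})=f_\theta(\bm{\mu})$. An explicit computation of $af_{\theta+1}(\bm{\mu})+bf_\theta(\bm{\mu})$ using \Cref{lemma-f-lambda1} applied to $\bm{\mu}$ (separating $\theta_2=\theta$ from $\theta_2=\theta-1$) is then shown to disagree with $S/\theta$, which is the desired contradiction.

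The delicate step is this last arithmetic check. In the cleaner sub-subcase $\theta_2=\theta$, both $f_\theta(\bm{\mu})=S/\theta$ and $f_{\theta+1}(\bm{\mu})=S/(\theta+1)$ follow directly from \Cref{lemma-f-lambda1}(i), and the mismatch reduces to the identity $\theta^2+\theta-1\neq\theta^2+\theta$. In the harder sub-subcase $\theta_2=\theta-1$, $f_{\theta+1}(\bm{\mu})$ itself requires a further application of \Cref{lemma-f-lambda1}(ii), reducing it to $f_\theta$ of the even shorter vector $(\lambda^{(i_0)}_3,\ldots,\lambda^{(i_0)}_L)$ whose entries sum only to $S(\theta-1)/\theta<S$; the universal upper bound $f_\theta(\bm{\nu})\leq(1/\theta)\sum_j\nu_j$ from \Cref{thm-opt-f} then yields $f_\theta(\lambda^{(i_0)}_3,\ldots,\lambda^{(i_0)}_L)\leq S(\theta-1)/\theta^2$, producing the strict inequality $af_{\theta+1}(\bm{\mu})+bf_\theta(\bm{\mu})<S/\theta$ and closing the subcase.
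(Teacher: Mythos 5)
There is a genuine gap in the proposal. You invoke Lemma~\ref{lemma-combination-of-two-range} to conclude that $\lambda^{(i)}_{\pi_i(1)}$ may take one of two values, but this overlooks the lemma's final sentence: when the lower bound in \eqref{combination-two-condition} is \emph{tight}, $\lambda^{(i)}_{\pi_i(1)}=\frac{1}{\eta}\sum_{j=2}^L\lambda^{(i)}_{\pi_i(j)}$ for all $i\in\cI$, a \emph{single} value. Since $\bm{\lambda}^{(i_0)}\in\cG_L^0$, the construction in \eqref{def-cG_L} forces $\lambda^{(i_0)}_1=\frac{1}{\theta}\sum_{j=2}^L\lambda^{(i_0)}_j$ exactly for an integer $\theta$, so the lower bound in \eqref{combination-two-condition} is always tight here with $\eta=\theta$, and only $\frac{1}{\theta}\sum_{j=2}^L\lambda^{(i)}_{\pi_i(j)}$ is admissible. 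This means your Case~B ($\lambda^{(i)}_1=\lambda^{(i)}_{\pi_i(1)}$) forces $\sum_{j=2}^L\lambda^{(i)}_{\pi_i(j)}=S$ and hence $\lambda^{(i)}_1=S/\theta=\lambda^{(i_0)}_1$, the value you correctly excluded; Case~B is therefore vacuous, not confined to $\{S/(\theta-1)\}$. Likewise your Case~A requires $\theta_2=\theta$, not $\theta_2\in\{\theta-1,\theta\}$. The weight split $a=1/\theta$, $b=(\theta-1)/\theta$ and the ensuing $f$-value arithmetic are therefore reasoning about vectors that cannot carry positive weight in \eqref{pf-lemma-no-redundancy-assumption}.

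Once the tight-bound clause is applied correctly, your strategy closes much faster and needs no arithmetic at all: by Lemma~\ref{lemma-lambda(L-1)} each $i\in\cI_L$ has $\lambda^{(i)}_1\in\{0,\lambda^{(i)}_{\pi_i(1)}\}$, and by the tight-bound clause of Lemma~\ref{lemma-combination-of-two-range} the option $\lambda^{(i)}_1=\lambda^{(i)}_{\pi_i(1)}$ gives $\bm{\lambda}^{(i)}=\bm{\lambda}^{(i_0)}$, which is ruled out by $c_{i_0}=0$ and the distinctness of the $\bm{\lambda}^{(i)}$. Hence $\lambda^{(i)}_1=0$ for every $i\in\cI_L$, and matching the first component of \eqref{pf-lemma-no-redundancy-assumption} gives $\lambda^{(i_0)}_1=\sum_i c_i\lambda^{(i)}_1=0$, contradicting $\lambda^{(i_0)}_1\geq\lambda^{(i_0)}_\zeta=1$. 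This repaired version is a genuinely different and shorter route than the paper's proof of the proposition: the paper never invokes Lemma~\ref{lemma-combination-of-two-range} here, instead performing a two-stage elimination by comparing $f$-values at coordinates $\eta_{i_0}+1$ and $\eta_{i_0}$ via Lemma~\ref{lemma-f-lambda1}, Theorem~\ref{thm-opt-f}, and Lemma~5 of \cite{yeung99}. Your approach, once fixed, bypasses that machinery entirely.
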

		The proof of \Cref{prop-differ@1st} is given in Appendix \ref{section-proof-differ@1st}. The proposition implies that 
		\begin{equation}
		\cI_L\cap(\cK_L\backslash\cK_L^{(j_0)})\neq \emptyset.   \label{pf-thm-no-redundency-cI-nonempty}
		\end{equation}
		For any $i\in\cI_L\cap(\cK_L\backslash\cK_L^{(j_0)})$, we can rewrite \eqref{pf-no-redundancy-lambda(L-1)-1} as follows:
		\begin{align*}
		&\big((\lambda^{(i)}_2,\lambda^{(i)}_3,\cdots,\lambda^{(i)}_L),\bm{f}(\lambda^{(i)}_2,\lambda^{(i)}_3,\cdots,\lambda^{(i)}_L)\big)  \\
		&=t^{(i)}_{j_0}\big(\bm{\lambda}^{(j_0)}_{L-1},\bm{f}(\bm{\lambda}^{(j_0)}_{L-1})\big)+ \sum_{j\in\cK_{L-1}\backslash\{j_0\}}t^{(i)}_j\big(\bm{\lambda}^{(j)}_{L-1},\bm{f}(\bm{\lambda}^{(j)}_{L-1})\big).
		\end{align*}
		Since $(\lambda^{(i)}_2,\lambda^{(i)}_3,\cdots,\lambda^{(i)}_L)\neq \bm{\lambda}^{(j_0)}_{L-1}$, there must exist $j\in\cK_{L-1}\backslash\{j_0\}$ such that
		\begin{equation}
		t^{(i)}_{j}>0.  \label{pf-thm-no-redundency-active-coef-1}
		\end{equation}
		For any $\bm{x},\bm{y}\in\mathbb{R}_+^{L-1}$, define a binary relation `$>$' by $\bm{x}>\bm{y}$ if and only if $(\bm{x}-\bm{y})\in\mathbb{R}_+^{L-1}$, i.e., $\bm{x}$ is strictly greater than $\bm{y}$ in at least one component (cf. \eqref{definition-R-set}). Then for the RHS of \eqref{pf-thm-no-redundency-combination}, we have
		\begin{eqnarray}
		&&\hspace{-0.8cm}\sum_{j\in\cK_{L-1}\backslash\{j_0\}}\left(\sum_{i\in\cK_{L}\backslash\cK_L^{(j_0)}}c_it^{(i)}_{j}\right) \big(\bm{\lambda}^{(j)}_{L-1},\bm{f}(\bm{\lambda}^{(j)}_{L-1})\big)  \nonumber \\
		&=&\sum_{i\in\cK_{L}\backslash\cK_L^{(j_0)}}c_i\left[\sum_{j\in\cK_{L-1}\backslash\{j_0\}}t^{(i)}_{j}\big(\bm{\lambda}^{(j)}_{L-1},\bm{f}(\bm{\lambda}^{(j)}_{L-1})\big)\right]   \nonumber \\
		&=&\sum_{i\in\cI_L\cap(\cK_L\backslash\cK_L^{(j_0)})}c_i\left[\sum_{j\in\cK_{L-1}\backslash\{j_0\}}t^{(i)}_{j}\big(\bm{\lambda}^{(j)}_{L-1},\bm{f}(\bm{\lambda}^{(j)}_{L-1})\big)\right]   \nonumber \\
		&>&\bm{0},   \label{pf-thm-no-redundency-active-coef}
		\end{eqnarray}
		where the inequality follows from \eqref{pf-thm-no-redundency-cI-nonempty}, \eqref{pf-thm-no-redundency-active-coef-1} and the fact that $\bm{\lambda}^{(j)}_{L-1}>\bm{0}$ for all $j\in\cK_{L-1}\backslash\{j_0\}$. Then we can see from \eqref{pf-thm-no-redundency-combination} that
		\begin{equation*}
		\left(1-d_0-\sum_{i\in\cK_{L}\backslash\cK_L^{(j_0)}}c_it^{(i)}_{j_0}\right)\big(\bm{\lambda}^{(j_0)}_{L-1},\bm{f}(\bm{\lambda}^{(j_0)}_{L-1})\big)>\bm{0},
		\end{equation*}
		which implies that
		\begin{equation*}
		1-d_0-\sum_{i\in\cK_{L}\backslash\cK_L^{(j_0)}}c_it^{(i)}_{j_0}>0.
		\end{equation*}
		For each $j\in\cK_{L-1}\backslash\{j_0\}$, let 
		\begin{equation}
		d_j=\frac{1}{1-d_0-\sum_{i\in\cK_{L}\backslash\cK_L^{(j_0)}}c_it^{(i)}_{j_0}}\sum_{i\in\cK_{L}\backslash\cK_L^{(j_0)}}c_it^{(i)}_{j}.   \label{pf-thm-no-redundency-active-coef-d}
		\end{equation}
		It is easy to see that $d_j\geq 0$ for all $j\in\cK_{L-1}\backslash\{j_0\}$. By \eqref{pf-thm-no-redundency-active-coef}, there exists $j\in\cK_{L-1}\backslash\{j_0\}$ such that $d_j>0$. Upon letting $d_j=0$ for $j=j_0$, by \eqref{pf-thm-no-redundency-combination} and \eqref{pf-thm-no-redundency-active-coef-d}, we have 
		\begin{equation*}
		\big(\bm{\lambda}^{(j_0)}_{L-1},\bm{f}(\bm{\lambda}^{(j_0)}_{L-1})\big)=\sum_{j=1}^{S_{L-1}}d_j\big(\bm{\lambda}^{(j)}_{L-1},\bm{f}(\bm{\lambda}^{(j)}_{L-1})\big).   
		\end{equation*}
		This means that \Cref{lemma-no-redundancy} is false for $L-1$. The proposition is proved.
		
		\section{Proof of \Cref{prop-differ@1st}}  \label{section-proof-differ@1st}
		Since $\bm{\lambda}^{(i_0)}\in\cG_L^0$, there exists a unique $\eta_{i_0}\in\{0,1,\cdots,L-1\}$ such that 
		\begin{equation}
		\lambda^{(i_0)}_1=\frac{1}{\eta_{i_0}}\sum_{j=2}^L\lambda^{(i_0)}_j.  \label{pf-prop-differ@1st-theta(0)}
		\end{equation}
		Recall from \eqref{pf-thm-no-redundency-def-lambda(j0)} that $\bm{\lambda}^{(j_0)}_{L-1}= \left(\lambda^{(i_0)}_2, \lambda^{(i_0)}_3,\cdots, \lambda^{(i_0)}_L\right)$. Since we assume at the beginning of Appendix \ref{section-proof-lemma-no-redundancy} that $\zeta\geq 2$ for $\bm{\lambda}^{(i_0)}$, we see that 
		\begin{equation}
		\bm{\lambda}^{(j_0)}_{L-1}\neq \bm{0},   \label{pf-prop-differ@1st-lambda(j0)}
		\end{equation}
		which implies that $\sum_{j=2}^L\lambda^{(i_0)}_j>0$. Then we have $\eta_{i_0}\neq 0$, otherwise $\lambda^{(i_0)}_1=\infty$ in \eqref{pf-prop-differ@1st-theta(0)}. Thus,~ $\eta_{i_0}\in\{1,2,\cdots,L-1\}$. By \Cref{remark-lemma-f-lambda1} following \Cref{lemma-f-lambda1}, we have
		\begin{equation}
		f_{\eta_{i_0}+1}(\bm{\lambda}^{(i_0)})=f_{\eta_{i_0}}(\bm{\lambda}^{(j_0)}_{L-1})=\frac{1}{\eta_{i_0}+1}\sum_{j=1}^L\lambda^{(i_0)}_j.   \label{pf-prop-differ@1st-f(theta+1)-i0}
		\end{equation}
		We now prove the proposition by contradiction. Assume that for all $i\in\cI_L$,
		\begin{equation}
		\left(\lambda^{(i)}_2,\lambda^{(i)}_3,\cdots,\lambda^{(i)}_L\right)=\bm{\lambda}^{(j_0)}_{L-1}.  \label{pf-prop-differ@1st-assumption-2}
		\end{equation}
		This means that for each $i\in\cI_L$,
		\begin{eqnarray}
		\bm{\lambda}^{(i)}&=&\left(\lambda^{(i)}_1,\lambda^{(i)}_2,\lambda^{(i)}_3,\cdots,\lambda^{(i)}_L\right)  \nonumber \\
		&=&\left(\lambda^{(i)}_1,\lambda^{(i_0)}_2,\lambda^{(i_0)}_3,\cdots,\lambda^{(i_0)}_L\right).  \label{pf-prop-differ@1st-lanbda(i)}
		\end{eqnarray}
		Furthermore, since $\bm{\lambda}^{(i_0)}\in\cG_L^0$, we see from \eqref{def-cG_L} that $\bm{\lambda}^{(j_0)}_{L-1}\in\cG_{L-1}^0$ by construction, so that $\left(\lambda^{(i)}_2,\lambda^{(i)}_3,\cdots,\lambda^{(i)}_L\right)\in\cG_{L-1}^0$. Then by \Cref{lemma-lambda(L-1)}, we have $\lambda^{(i)}_1=0$ or $\lambda^{(i)}_{\pi_i(1)}$.
		
		Let $\cI_L^0$ be the subset of $\cI_L$ such that $i\in\cI_L^0$ if and only if $\lambda^{(i)}_1=0$. For $i\in\cI_L^0$, it is easy to see that $\lambda^{(i)}_{\pi_i(L)}=0$. Then upon noting that 
		\begin{equation*}
		\left(\lambda^{(i)}_{\pi_i(1)},\lambda^{(i)}_{\pi_i(2)},\cdots,\lambda^{(i)}_{\pi_i(L-1)}\right)=\left(\lambda^{(i_0)}_2,\lambda^{(i_0)}_3,\cdots,\lambda^{(i_0)}_L\right),
		\end{equation*}
		by \Cref{thm-opt-f} we have 
		\begin{eqnarray}
		f_{\eta_{i_0}+1}(\bm{\lambda}^{(i)})&=&\min_{\beta\in\{0,1,\cdots,\eta_{i_0}\}}\frac{1}{(\eta_{i_0}+1)-\beta}\sum_{j=\beta+1}^{L-1}\lambda^{(i)}_{\pi_i(j)}  \nonumber \\
		&=&\min_{\beta\in\{0,1,\cdots,\eta_{i_0}\}}\frac{1}{(\eta_{i_0}+1)-\beta}\sum_{j=\beta+2}^{L}\lambda^{(i_0)}_{j}  \nonumber \\
		&=&f_{\eta_{i_0}+1}(\bm{\lambda}^{(j_0)}_{L-1})  \nonumber \\
		&<&f_{\eta_{i_0}}(\bm{\lambda}^{(j_0)}_{L-1}),  \nonumber 
		\end{eqnarray}
		where the inequality follows from Lemma 5 in \cite{yeung99}. By \eqref{pf-prop-differ@1st-f(theta+1)-i0}, this implies that 
		\begin{equation}
		f_{\eta_{i_0}+1}(\bm{\lambda}^{(i)})<\frac{1}{\eta_{i_0}+1}\sum_{j=1}^L\lambda^{(i_0)}_j.   \label{pf-prop-differ@1st-f(theta+1)-i1}
		\end{equation}
		
		For $i\in\cI_L\backslash\cI_L^0$, we have $\lambda^{(i)}_1=\lambda^{(i)}_{\pi_i(1)}$. Since $\bm{\lambda}^{(i)}\in\cG_L$, there exists a unique $\eta_i\in\{0,1,\cdots,L-1\}$ such that
		\begin{equation}
		\lambda^{(i)}_1=\frac{1}{\eta_i}\sum_{j=2}^L\lambda^{(i)}_j.  \label{pf-prop-differ@1st-theta(i)}
		\end{equation}
		From \eqref{pf-prop-differ@1st-lambda(j0)} and the assumption in \eqref{pf-prop-differ@1st-assumption-2}, we have $\left(\lambda^{(i)}_2,\lambda^{(i)}_3,\cdots,\lambda^{(i)}_L\right)\neq \bm{0}$. Then from \eqref{pf-prop-differ@1st-theta(i)}, we have $\eta_i\neq 0$, and thus $\eta_i\in\{1,2,\cdots,L-1\}$. Since $c_{i_0}=0$ in \eqref{pf-lemma-no-redundancy-assumption}, we have $\bm{\lambda}^{(i)}\neq \bm{\lambda}^{(i_0)}$ for all $i\in\cI_L$ and hence for all $i\in\cI_L\backslash\cI_L^0$. In light of \eqref{pf-prop-differ@1st-lanbda(i)}, $\bm{\lambda}^{(i)}\neq \bm{\lambda}^{(i_0)}$ implies $\lambda^{(i)}_1\neq \lambda^{(i_0)}_1$, and upon comparing \eqref{pf-prop-differ@1st-theta(0)} and \eqref{pf-prop-differ@1st-theta(i)}, we see that 
		\begin{equation}
		\eta_i\neq \eta_{i_0}.   \label{pf-prop-differ@1st-eta(neq)}
		\end{equation}
		
		Let $\cI_L^1=\{i\in\cI_L\backslash\cI_L^0:\eta_i>\eta_{i_0}\}$ and $\cI_L^2=\{i\in\cI_L\backslash\cI_L^0:\eta_i<\eta_{i_0}\}$. Then we can see from \eqref{pf-prop-differ@1st-eta(neq)} that
		\begin{equation}
		\cI_L^0\cup\cI_L^1\cup\cI_L^2=\cI_L.   \label{pf-prop-differ@1st-cI}
		\end{equation}
		For $i\in\cI_L^1$, we have $\lambda^{(i)}_1<\lambda^{(i_0)}_1$, which is equivalent to
		\begin{equation*}
		\lambda^{(i)}_1<\frac{1}{\eta_{i_0}}\sum_{j=2}^L\lambda^{(i)}_j.
		\end{equation*}
		Thus, by \Cref{lemma-f-lambda1} (i), we have
		\begin{equation}
		f_{\eta_{i_0}+1}(\bm{\lambda}^{(i)})=\frac{1}{\eta_{i_0}+1}\sum_{j=1}^L\lambda^{(i)}_j<\frac{1}{\eta_{i_0}+1}\sum_{j=1}^L\lambda^{(i_0)}_j.  \label{pf-prop-differ@1st-f(theta+1)-i-1}
		\end{equation}
		On the other hand, for $i\in\cI_L^2$, we have $\lambda^{(i)}_1>\lambda^{(i_0)}_1$, which is equivalent to
		\begin{equation*}
		\lambda^{(i)}_1>\frac{1}{\eta_{i_0}}\sum_{j=2}^L\lambda^{(i)}_j.
		\end{equation*}
		Thus, by \Cref{lemma-f-lambda1} (ii), we have
		\begin{equation*}
		f_{\eta_{i_0}+1}(\bm{\lambda}^{(i)})=f_{\eta_{i_0}}(\bm{\lambda}^{(j_0)}_{L-1}),
		\end{equation*}
		which by \eqref{pf-prop-differ@1st-f(theta+1)-i0} implies that 
		\begin{equation}
		f_{\eta_{i_0}+1}(\bm{\lambda}^{(i)})=\frac{1}{\eta_{i_0}+1}\sum_{j=1}^L\lambda^{(i_0)}_j.  \label{pf-prop-differ@1st-f(theta+1)-i-2}
		\end{equation}
		From \eqref{pf-lemma-no-redundancy-assumption} and \eqref{pf-prop-differ@1st-cI}, we have
		\begin{align}
		f_{\eta_{i_0}+1}(\bm{\lambda}^{(i_0)})&=\sum_{i=1}^{S_L}c_i f_{\eta_{i_0}+1}(\bm{\lambda}^{(i)})  \nonumber \\
		&=\sum_{i\in\cI_L}c_i f_{\eta_{i_0}+1}(\bm{\lambda}^{(i)})  \nonumber \\
		&=\sum_{i\in\cI_L^0}c_i f_{\eta_{i_0}+1}(\bm{\lambda}^{(i)})+\sum_{i\in\cI_L^1}c_i f_{\eta_{i_0}+1}(\bm{\lambda}^{(i)}) \nonumber \\
		&\hspace{0.5cm}+\sum_{i\in\cI_L^2}c_i f_{\eta_{i_0}+1}(\bm{\lambda}^{(i)}).   \label{pf-prop-differ@1st-f(theta+1)-i0-2}
		\end{align}
		Comparing \eqref{pf-prop-differ@1st-f(theta+1)-i0} for $f_{\eta_{i_0}+1}(\bm{\lambda}^{(i_0)})$ and \eqref{pf-prop-differ@1st-f(theta+1)-i1}, \eqref{pf-prop-differ@1st-f(theta+1)-i-1}, and \eqref{pf-prop-differ@1st-f(theta+1)-i-2} for $f_{\eta_{i_0}+1}(\bm{\lambda}^{(i)})$, we see that both $\cI_L^0$ and $\cI_L^1$ must be empty in order for the equality in \eqref{pf-prop-differ@1st-f(theta+1)-i0-2} to hold, and hence
		\begin{equation}
		\cI_L=\cI_L^2.  \label{pf-prop-differ@1st-cI=cI2}
		\end{equation}
		
		For any $i\in\cI_L^2$, since $\eta_i<\eta_{i_0}$ and $\eta_i\geq 1$, we see that $\eta_{i_0}\geq 2$. Thus from \eqref{pf-prop-differ@1st-theta(0)}, we have
		\begin{eqnarray}
		\frac{1}{\eta_{i_0}}\sum_{j=1}^L\lambda^{(i_0)}_j&=&\frac{1}{\eta_{i_0}}\left[\frac{1}{\eta_{i_0}}\sum_{j=2}^L\lambda^{(i_0)}_j+\sum_{j=2}^L\lambda^{(i_0)}_j\right]  \nonumber  \\
		&=&\frac{1}{\eta_{i_0}}\left(\frac{1}{\eta_{i_0}}+1\right)\sum_{j=2}^L\lambda^{(i_0)}_j  \nonumber \\
		&<&\frac{1}{\eta_{i_0}-1}\sum_{j=2}^L\lambda^{(i_0)}_j.  \nonumber 
		\end{eqnarray}
		Then by \Cref{lemma-f-lambda1} (i), \eqref{pf-prop-differ@1st-theta(0)} implies that 
		\begin{equation}
		f_{\eta_{i_0}}(\bm{\lambda}^{(i_0)})=\frac{1}{\eta_{i_0}}\sum_{j=1}^L\lambda^{(i_0)}_j<\frac{1}{\eta_{i_0}-1}\sum_{j=2}^L\lambda^{(i_0)}_j.  \label{pf-prop-differ@1st-f(eta0)-2}
		\end{equation}
		Since $\bm{\lambda}^{(i_0)}$ is ordered, by \eqref{pf-prop-differ@1st-theta(0)}, we have
		\begin{equation*}
		\lambda^{(i_0)}_2\leq \lambda^{(i_0)}_1=\frac{1}{\eta_{i_0}}\sum_{j=2}^L\lambda^{(i_0)}_j,
		\end{equation*}
		which implies that
		\begin{equation*}
		\lambda^{(i_0)}_2\leq\frac{1}{\eta_{i_0}-1}\sum_{j=3}^L\lambda^{(i_0)}_j.
		\end{equation*}
		Then by \Cref{lemma-f-lambda1} (i), we have
		\begin{equation}
		f_{\eta_{i_0}-1}(\bm{\lambda}^{(j_0)}_{L-1})=\frac{1}{\eta_{i_0}-1}\sum_{j=2}^L\lambda^{(i_0)}_j.  \label{pf-prop-differ@1st-f(eta0)-3}
		\end{equation}
		It follows from \eqref{pf-prop-differ@1st-f(eta0)-2} and \eqref{pf-prop-differ@1st-f(eta0)-3} that
		\begin{equation}
		f_{\eta_{i_0}}(\bm{\lambda}^{(i_0)})<f_{\eta_{i_0}-1}(\bm{\lambda}^{(j_0)}_{L-1}).   \label{pf-prop-differ@1st-f(i0)}
		\end{equation}
		For $i\in\cI_L^2$, we have $\eta_{i_0}\geq \eta_i+1$. Then by \Cref{lemma-f-lambda1} (ii), \eqref{pf-prop-differ@1st-theta(i)} implies that
		\begin{equation}
		f_{\eta_{i_0}}(\bm{\lambda}^{(i)})=f_{\eta_{i_0}-1}(\bm{\lambda}^{(j_0)}_{L-1}).  \label{pf-prop-differ@1st-f(i)}
		\end{equation}
		Following \eqref{pf-lemma-no-redundancy-assumption}, we have
		\begin{eqnarray}
		&&\hspace{-0.9cm}\left(\lambda^{(i_0)}_2,\lambda^{(i_0)}_3,\cdots,\lambda^{(i_0)}_L\right)  \nonumber \\
		&=&\sum_{i\in\cI_L}c_i\left(\lambda^{(i)}_2,\lambda^{(i)}_3,\cdots,\lambda^{(i)}_L\right)  \nonumber \\
		&=&\sum_{i\in\cI_L^2}c_i\left(\lambda^{(i)}_2,\lambda^{(i)}_3,\cdots,\lambda^{(i)}_L\right)  \label{pf-prop-differ@1st-convex-combin-2} \\
		&=&\bigg(\sum_{i\in\cI_L^2}c_i\bigg)\left(\lambda^{(i_0)}_2,\lambda^{(i_0)}_3,\cdots,\lambda^{(i_0)}_L\right),  \label{pf-prop-differ@1st-convex-combin-3}
		\end{eqnarray}
		where \eqref{pf-prop-differ@1st-convex-combin-2} follows from \eqref{pf-prop-differ@1st-cI=cI2} and \eqref{pf-prop-differ@1st-convex-combin-3} follows from the assumption in \eqref{pf-prop-differ@1st-assumption-2}. Thus we have
		\begin{equation}
		\sum_{i\in\cI_L^2}c_i=1.  \label{pf-prop-differ@1st-sum(ci)}
		\end{equation}
		Then from \eqref{pf-prop-differ@1st-f(i)} and \eqref{pf-prop-differ@1st-sum(ci)}, we see that 
		\begin{equation*}
		\sum_{i\in\cI_L^2}c_if_{\eta_{i_0}}(\bm{\lambda}^{(i)})=\bigg(\sum_{i\in\cI_L^2}c_i\bigg)f_{\eta_{i_0}-1}(\bm{\lambda}^{(j_0)}_{L-1})=f_{\eta_{i_0}-1}(\bm{\lambda}^{(j_0)}_{L-1}),  
		\end{equation*}
		and it follows from \eqref{pf-prop-differ@1st-f(i0)} that
		\begin{equation*}
		\sum_{i\in\cI_L^2}c_if_{\eta_{i_0}}(\bm{\lambda}^{(i)})>f_{\eta_{i_0}}(\bm{\lambda}^{(i_0)}.
		\end{equation*}
		This is a contradiction to \eqref{pf-lemma-no-redundancy-assumption}. Therefore, the assumption in \eqref{pf-prop-differ@1st-assumption-2} is false and the proposition is proved.
		
		\section{Proof of \Cref{lemma-min-sum-multiplication}}  \label{section-proof-lemma-min-sum-multiplication}
		For any permutation $\omega$ on $\{1,2,\cdots,L\}$ and any $\bm{\lambda}\in\mathbb{R}^L_+$, recall from the beginning of \Cref{section-coefficients} that
		\begin{equation*}
		\omega(\bm{\lambda})=\left(\lambda_{\omega(1)},\lambda_{\omega(2)},\cdots,\lambda_{\omega(L)}\right).
		\end{equation*}
		Then for the ordered permutation $\pi$, we have $\pi(\bm{\lambda})=\left(\lambda_{\pi(1)},\lambda_{\pi(2)},\cdots,\lambda_{\pi(L)}\right)$. 
		
		If $\bm{\lambda}=\pi(\bm{\lambda})$, the lemma is immediate. Otherwise, let $\omega_0(i)=i$ for all $i\in\cL$ so that $\omega_0(\bm{\lambda})=\bm{\lambda}$. Set $t=1$ and we sort $\bm{\lambda}$ in descending order by iteration as follows:
		\begin{enumerate}[(i)]
			\item Let $i_t=\min\{i\in\cL:\omega_{t-1}(i)\neq \pi(i)\}$. Let $k_t,j_t$ be any indexes in $\cL$ such that 
			\begin{equation}
			\pi(k_t)=\omega_{t-1}(i_t)  \label{pf-lemma-min-sum-mult-def-kt}
			\end{equation}
			and
			\begin{equation}
			\omega_{t-1}(j_t)=\pi(i_t).  \label{pf-lemma-min-sum-mult-def-jt}
			\end{equation}
			It is easy to check that $k_t>i_t$ and $j_t>i_t$, which implies 
			\begin{equation}
			\lambda_{\pi(i_t)}-\lambda_{\pi(k_t)}\geq 0   \label{pf-lemma-min-sum-mult-positive-1}
			\end{equation}
			and
			\begin{equation}
			R_{j_t}-R_{i_t}\geq 0.   \label{pf-lemma-min-sum-mult-positive-2}
			\end{equation}
			Let $\omega_t(\bm{\lambda})=(\lambda_{\omega_t(1)},\lambda_{\omega_t(2)},\cdots,\lambda_{\omega_t(L)})$ be a permutation of $\omega_{t-1}(\bm{\lambda})$ where we switch $\lambda_{\omega_{t-1}(j_t)}$ and $\lambda_{\omega_{t-1}(i_t)}$, i.e., 
			\begin{equation}
			\omega_t(i)=
			\begin{cases}
			\pi(i_t),& \text{if }i=i_t  \\
			\pi(k_t),& \text{if }i=j_t  \\
			\omega_{t-1}(i),&\text{otherwise.}
			\end{cases}   \label{pf-lemma-min-sum-mult-def-omega(t)}
			\end{equation}
			Then we have
			\begin{align}
			&\sum_{i=1}^L\lambda_{\omega_{t-1}(i)}R_i-\sum_{i=1}^L\lambda_{\omega_t(i)}R_i  \nonumber \\
			&=\left(\lambda_{\omega_{t-1}(i_t)}R_{i_t}+\lambda_{\omega_{t-1}(j_t)}R_{j_t}\right) \nonumber \\
			&\hspace{0.5cm}-\left(\lambda_{\omega_t(i_t)}R_{i_t}+\lambda_{\omega_t(j_t)}R_{j_t}\right)  \nonumber \\
			&=\left(\lambda_{\pi(k_t)}R_{i_t}+\lambda_{\pi(i_t)}R_{j_t}\right)-\left(\lambda_{\pi(i_t)}R_{i_t}+\lambda_{\pi(k_t)}R_{j_t}\right)  \nonumber \\
			&=R_{i_t}(\lambda_{\pi(k_t)}-\lambda_{\pi(i_t)})+R_{j_t}(\lambda_{\pi(i_t)}-\lambda_{\pi(k_t)})  \nonumber \\
			&=(\lambda_{\pi(i_t)}-\lambda_{\pi(k_t)})(R_{j_t}-R_{i_t})  \nonumber \\
			&\geq 0,  \nonumber 
			\end{align}
			where the second equality follows from \eqref{pf-lemma-min-sum-mult-def-kt}, \eqref{pf-lemma-min-sum-mult-def-jt} and \eqref{pf-lemma-min-sum-mult-def-omega(t)}, and the inequality follows from \eqref{pf-lemma-min-sum-mult-positive-1} and \eqref{pf-lemma-min-sum-mult-positive-2}.
			
			\item If $\omega_t(\bm{\lambda})=\pi(\bm{\lambda})$, return $T=t$ and stop. Otherwise, let $t=t+1$ and go back to step (i).
		\end{enumerate}
		At the end of the iteration, $\omega_T(\bm{\lambda})$ is sorted in the same order as $\pi(\bm{\lambda})$, and we have
		\begin{eqnarray}
		&&\hspace{-0.9cm}\sum_{i=1}^L\lambda_iR_i-\sum_{i=1}^L\lambda_{\pi(i)}R_i   \nonumber \\
		&=&\sum_{i=1}^L\lambda_{\omega_0(i)}R_i-\sum_{i=1}^L\lambda_{\omega_T(i)}R_i   \nonumber  \\
		&=&\sum_{t=1}^{T}\left(\sum_{i=1}^L\lambda_{\omega_{t-1}(i)}R_i-\sum_{i=1}^L\lambda_{\omega_{t}(i)}R_i\right)   \nonumber  \\
		&\geq& 0.  \nonumber 
		\end{eqnarray}
		This proves the lemma.
		
		\bibliographystyle{ieeetr}
		\bibliography{guotao_SMDC-region_ref}
		
		\section{Tables of Non-Redundant $\mathbf{\lambda}$} \label{section-table}
		For $L=1,2,\cdots,5$, the vectors $\bm{\lambda}\in\cG_L^0$ and the corresponding $f_{\alpha}(\bm{\lambda})$ are listed in the following tables. The parameter $\theta$ is the integer such that $\lambda_1= \frac{1}{\theta} \sum\limits_{i=2}^L\lambda_i$.
		\begin{table}[!hb]
			\centering
			\setlength\arrayrulewidth{0.7pt}        
			\def\arraystretch{1.0}
			
			\caption{non-redundant constraint for $L=1$.} \label{table-L=1}
			\begin{tabular}{|c|c|}\hline
				$\lambda$ & $f_1(\lambda)$\\ \hline
				1 & 1   \\ \hline
			\end{tabular}
		\end{table}
		\begin{table}[!hb]
			\centering
			\setlength\arrayrulewidth{0.7pt}        
			\def\arraystretch{1.0}
			
			\caption{non-redundant constraints for $L=2$.} \label{table-L=2}
			\begin{tabular}{|c|c||c|c||c|}\hline
				suffix&$\bm{\lambda}$ & $f_1(\bm{\lambda})$ & $f_2(\bm{\lambda})$ & $\theta$ \\ \hline
				- &$(1,0)$ & $1$ & $0$ & $0$  \\ \hline  
				$(1)$&$(1,1)$ & $2$ &\cellcolor{blue!10}$1$ &$1$   \\ \hline 
			\end{tabular}
		\end{table}
		\begin{table}[!hb]
			\centering
			\setlength\arrayrulewidth{0.7pt}        
			\def\arraystretch{1.0}
			
			\caption{non-redundant constraints for $L=3$.} \label{table-L=3}
			\begin{tabular}{|c|c||c|c|c||c|}\hline
				suffix&$\bm{\lambda}$ & $f_1(\bm{\lambda})$ & $f_2(\bm{\lambda})$ & $f_3(\bm{\lambda})$ &$\theta$  \\ \hline
				- &$(1,0,0)$ & $1$ & $0$ & $0$ & $0$  \\ \hline
				$(1,0)$&$(1,1,0)$ & $2$ & $1$ & $0$ & $1$  \\ \hhline{*{6}{|-}}
				\multirow{2}{*}{$(1,1)$}&$(1,1,1)$ & $3$ & $\frac{3}{2}$ &\cellcolor{red!10} $1$ & $2$  \\ \hhline{*{1}{|~}*{5}{|-}}
				&$(2,1,1)$ & $4$ &\cellcolor{red!10} $2$ &\cellcolor{red!10} $1$ & $1$  \\ \hline
			\end{tabular}
		\end{table}
		\begin{table}[!hb]
			\centering
			\setlength\arrayrulewidth{0.7pt}        
			\def\arraystretch{1.0}
			
			\caption{non-redundant constraints for $L=4$.} \label{table-L=4}
			\begin{tabular}{|c|c||c|c|c|c||c|}\hline
				suffix&$\bm{\lambda}$ & $f_1(\bm{\lambda})$ & $f_2(\bm{\lambda})$ & $f_3(\bm{\lambda})$ & $f_4(\bm{\lambda})$ & $\theta$ \\ \hline  
				- &$(1,0,0,0)$ & $1$ & $0$ & $0$ & $0$ & $0$ \\ \hline
				$(1,0,0)$&$(1,1,0,0)$ & $2$  & $1$ & $0$ & $0$ & $1$  \\ \hline
				\multirow{2}{*}{$(1,1,0)$}&$(1,1,1,0)$ & $3$ & $\frac{3}{2}$ &\cellcolor{red!10}$1$ &\cellcolor{red!10}$0$ & $2$  \\ \hhline{*{1}{|~}*{6}{|-}}
				&$(2,1,1,0)$ & $4$  &\cellcolor{red!10}$2$ &\cellcolor{red!10}$1$ &\cellcolor{red!10}$0$ & $1$  \\ \hline
				\multirow{3}{*}{$(1,1,1)$}&$(1,1,1,1)$ & $4$ & $2$ & $\frac{4}{3}$ &\cellcolor{blue!10}$1$ & $3$  \\ \hhline{*{1}{|~}*{6}{|-}}
				&$(\frac{3}{2},1,1,1)$ & $\frac{9}{2}$ & $\frac{9}{4}$ &\cellcolor{blue!10}$\frac{3}{2}$ &\cellcolor{blue!10}$1$ & $2$  \\ \hhline{*{1}{|~}*{6}{|-}}
				&$(3,1,1,1)$ & $6$ &\cellcolor{blue!10}$3$ &\cellcolor{blue!10}$\frac{3}{2}$ &\cellcolor{blue!10}$1$ & $1$  \\ \hline
				\multirow{2}{*}{$(2,1,1)$}&$(2,2,1,1)$ & $6$ & $3$ &\cellcolor{green!25}$2$ &\cellcolor{green!25}$1$ & $2$  \\ \hhline{*{1}{|~}*{6}{|-}}
				&$(4,2,1,1)$ & $8$ &\cellcolor{green!25}$4$ &\cellcolor{green!25}$2$ &\cellcolor{green!25}$1$ & $1$  \\ \hline
			\end{tabular}
		\end{table}
		\begin{table}[!hb]
			\centering
			\setlength\arrayrulewidth{0.7pt}        
			\def\arraystretch{1.0}
			
			\caption{non-redundant constraints for $L=5$.} \label{table-L=5}
			\begin{tabular}{|@{\hspace{0.05cm}}c@{\hspace{0.05cm}}|@{\hspace{0.05cm}}c@{\hspace{0.05cm}}||@{\hspace{0.05cm}}c@{\hspace{0.05cm}}|c|c|c|c||c|}\hline
				suffix&$\bm{\lambda}$ & $f_1(\bm{\lambda})$ & $f_2(\bm{\lambda})$ & $f_3(\bm{\lambda})$ & $f_4(\bm{\lambda})$ & $f_5(\bm{\lambda})$ & $\theta$ \\ \hline 
				- &$(1,0,0,0,0)$ & $1$  & $0$  & $0$  & $0$  & $0$  & $0$  \\ \hline
				$(1,0,0,0)$&$(1,1,0,0,0)$ & $2$  & $1$  & $0$  & $0$  & $0$  & $1$   \\ \hline
				\multirow{2}{*}{$(1,1,0,0)$} & $(1,1,1,0,0)$ & $3$ & $\frac{3}{2}$ &\cellcolor{green!25}$1$ & \cellcolor{green!25}$0$ &\cellcolor{green!25}$0$ & $2$ \\ \hhline{*{1}{|~}*{7}{|-}}
				&$(2,1,1,0,0)$ & $4$  &\cellcolor{green!25} $2$  &\cellcolor{green!25} $1$  &\cellcolor{green!25} $0$  &\cellcolor{green!25} $0$  & $1$  \\ \hline
				\multirow{3}{*}{$(1,1,1,0)$}& $(1,1,1,1,0)$ & $4$ & $2$  & $\frac{4}{3}$  &\cellcolor{yellow!25}$1$  &\cellcolor{yellow!25}$0$  & $3$  \\ \hhline{*{1}{|~}*{7}{|-}}
				&$(\frac{3}{2},1,1,1,0)$ & $\frac{9}{2}$ & $\frac{9}{4}$  &\cellcolor{yellow!25}$\frac{3}{2}$  &\cellcolor{yellow!25}$1$&\cellcolor{yellow!25}$0$& $2$  \\ \hhline{*{1}{|~}*{7}{|-}}
				&$(3,1,1,1,0)$ & $6$ &\cellcolor{yellow!25}$3$ &\cellcolor{yellow!25}$\frac{3}{2}$ &\cellcolor{yellow!25}$1$ &\cellcolor{yellow!25}$0$  & $1$  \\ \hline
				\multirow{2}{*}{$(2,1,1,0)$}&$(2,2,1,1,0)$ & $6$ & $3$  &\cellcolor{red!10}$2$&\cellcolor{red!10}$1$&\cellcolor{red!10}$0$  & $2$  \\ \hhline{*{1}{|~}*{7}{|-}}
				&$(4,2,1,1,0)$ & $8$ &\cellcolor{red!10}$4$&\cellcolor{red!10}$2$&\cellcolor{red!10}$1$ &\cellcolor{red!10} $0$  & $1$  \\ \hline
				\multirow{4}{*}{$(1,1,1,1)$}& $(1,1,1,1,1)$ & $5$ & $\frac{5}{2}$ & $\frac{5}{3}$ & $\frac{5}{4}$ &\cellcolor{blue!10}$1$ & $4$  \\ \hhline{*{1}{|~}*{7}{|-}}
				& $(\frac{4}{3},1,1,1,1)$ & $\frac{16}{3}$ & $\frac{8}{3}$ & $\frac{16}{9}$ &\cellcolor{blue!10}$\frac{4}{3}$ &\cellcolor{blue!10}$1$ & $3$  \\ \hhline{*{1}{|~}*{7}{|-}}
				& $(2,1,1,1,1)$ & $6$ & $3$ &\cellcolor{blue!10}$2$ &\cellcolor{blue!10}$\frac{4}{3}$ &\cellcolor{blue!10}$1$ & $2$  \\ \hhline{*{1}{|~}*{7}{|-}}
				& $(4,1,1,1,1)$ & $8$ &\cellcolor{blue!10}$4$ &\cellcolor{blue!10}$2$ &\cellcolor{blue!10}$\frac{4}{3}$ &\cellcolor{blue!10}$1$ & $1$  \\ \hline
				\multirow{3}{*}{$(\frac{3}{2},1,1,1)$}& $(\frac{3}{2},\frac{3}{2},1,1,1)$ & $6$ & $3$ & $2$ &\cellcolor{green!25}$\frac{3}{2}$ &\cellcolor{green!25}$1$ & $3$  \\ \hhline{*{1}{|~}*{7}{|-}}
				& $(\frac{9}{4},\frac{3}{2},1,1,1)$ & $\frac{27}{4}$ & $\frac{27}{8}$ &\cellcolor{green!25}$\frac{9}{4}$ &\cellcolor{green!25}$\frac{3}{2}$ &\cellcolor{green!25}$1$ & $2$  \\ \hhline{*{1}{|~}*{7}{|-}}
				& $(\frac{9}{2},\frac{3}{2},1,1,1)$ & $9$ &\cellcolor{green!25}$\frac{9}{2}$ &\cellcolor{green!25}$\frac{9}{4}$ &\cellcolor{green!25}$\frac{3}{2}$ &\cellcolor{green!25}$1$ & $1$  \\ \hline
				\multirow{2}{*}{$(3,1,1,1)$}& $(3,3,1,1,1)$ & $9$ & $\frac{9}{2}$ &\cellcolor{yellow!25}$3$ &\cellcolor{yellow!25}$\frac{3}{2}$ &\cellcolor{yellow!25}$1$ & $2$  \\ \hhline{*{1}{|~}*{7}{|-}}
				& $(6,3,1,1,1)$ & $12$ &\cellcolor{yellow!25}$6$ &\cellcolor{yellow!25}$3$ &\cellcolor{yellow!25}$\frac{3}{2}$ &\cellcolor{yellow!25}$1$ & $1$  \\ \hline
				\multirow{3}{*}{$(2,2,1,1)$}& $(2,2,2,1,1)$ & $8$ & $4$ & $\frac{8}{3}$ &\cellcolor{red!10}$2$ &\cellcolor{red!10}$1$ & $3$  \\ \hhline{*{1}{|~}*{7}{|-}}
				& $(3,2,2,1,1)$ & $9$ & $\frac{9}{2}$ &\cellcolor{red!10}$3$ &\cellcolor{red!10}$2$ &\cellcolor{red!10}$1$ & $2$  \\ \hhline{*{1}{|~}*{7}{|-}}
				& $(6,2,2,1,1)$ & $12$ &\cellcolor{red!10}$6$ &\cellcolor{red!10}$3$ &\cellcolor{red!10}$2$ &\cellcolor{red!10}$1$ & $1$  \\ \hline
				\multirow{2}{*}{$(4,2,1,1)$}& $(4,4,2,1,1)$ & $12$ & $6$ &\cellcolor{blue!10}$4$ &\cellcolor{blue!10}$2$ &\cellcolor{blue!10}$1$ & $2$  \\ \hhline{*{1}{|~}*{7}{|-}}
				& $(8,4,2,1,1)$ & $16$ &\cellcolor{blue!10}$8$ &\cellcolor{blue!10}$4$ &\cellcolor{blue!10}$2$ &\cellcolor{blue!10}$1$ & $1$  \\ \hline
			\end{tabular}
		\end{table}
		
	\end{appendices}
	

\end{document}